\DeclareMathOperator*{\argmax}{\rm argmax}
\newcommand{\so}{\mathrm{o}}
\newcommand{\lo}{\mathrm{O}}
\newcommand{\ep}{\epsilon}
\newcommand{\de}{\delta}
\newcommand{\si}{\sigma}
\newcommand{\id}{ \mbox{\rm 1}\hspace{-0.5em}\mbox{\rm \small l\,}}
\newcommand{\idx}[1]{\id\!\left[#1\right]}
\newcommand{\E}{\mathrm{E}}
\newcommand{\n}{\nonumber}
\newcommand{\nn}{\nonumber\\}
\newcommand{\uR}{\underline{\mathbb{R}}}
\newtheorem{lemma}{Lemma}
\newtheorem*{lemmat}{Lemma 1}
\newtheorem{theorem}{Theorem}
\newtheorem{proposition}{Proposition}
\newtheorem{definition}{Definition}
\newtheorem{remark}{Remark}
\newcommand{\fqed}{\hfill $\blacksquare$ \par}
\newcommand{\la}{\lambda}
\newcommand{\La}{\mathrm{\Lambda}}
\newcommand{\e}{\mathrm{e}}
\newcommand{\De}{\Delta}
\newcommand{\bibun}[2]{\frac{\rd #1}{\rd #2}}
\newcommand{\com}{\,,}
\newcommand{\per}{\,.}
\newcommand{\bbR}{\mathbb{R}}
\newcommand{\calX}{\mathcal{X}}
\newcommand{\calY}{\mathcal{Y}}
\newcommand{\since}[1]{\quad\left(\mbox{#1}\right)}
\newcommand{\im}{\mathrm{i}}
\newcommand{\bmX}{\bm{X}}
\newcommand{\bmY}{\bm{Y}}
\newcommand{\bmx}{\bm{x}}
\newcommand{\bmy}{\bm{y}}
\newcommand{\bZ}{\bar{Z}}
\newcommand{\al}{\alpha}
\newcommand{\bibunZ}[1]{Z^{(#1)}}
\newcommand{\pp}{p_+}
\newcommand{\pz}{p_0}
\newcommand{\gapxia}{b_0}
\newcommand{\gapxi}{b_1}
\newcommand{\cumu}{{Z_{\mathrm{a}}}}
\newcommand{\cumui}[1]{{Z_{\mathrm{a},i}}}
\newcommand{\PRC}{P_{\mathrm{RC}}}
\newcommand{\sqn}{\sqrt{n}}
\newcommand{\Rcrit}{R_{\mathrm{crit}}}
\newcommand{\gsin}{g^{(\mathrm{s})}}
\newcommand{\tgsin}{\tilde{g}^{(\mathrm{s})}}
\newcommand{\psin}{\psi^{(\mathrm{s})}}
\newcommand{\tpsin}{\tilde{\psi}^{(\mathrm{s})}}
\newcommand{\Ex}[1]{\E\left[#1\right]}
\newcommand{\pax}[1]{\left(#1\right)}
\newcommand{\sqx}[1]{\left[#1\right]}
\newcommand{\brx}[1]{\left\{#1\right\}}
\newcommand{\calJ}{\mathcal{J}}
\newcommand{\bbZ}{\mathbb{Z}}
\newcommand{\bbQ}{\mathbb{Q}}
\newcommand{\tZ}{\tilde{Z}}
\newcommand{\calZ}{\mathcal{Z}}
\newcommand{\calW}{\mathcal{W}}
\newcommand{\rd}{\mathrm{d}}
\title{
Comprehensive Analysis on
Exact Asymptotics of \\ Random Coding Error Probability
%
}%
\author{\authorblockN{Junya Honda}
\authorblockA{Graduate School of Frontier Sciences,
The University of Tokyo\\
 Kashiwa-shi Chiba 277--8561, Japan\\
Email: honda@it.k.u-tokyo.ac.jp}
}
\begin{document}
\maketitle
\allowdisplaybreaks[4]

\begin{abstract}
This paper
considers error probabilities of random codes for
memoryless channels in the fixed-rate regime.
Random coding is a fundamental
scheme to achieve the channel capacity
and
many studies have been conducted for
the asymptotics of the decoding error probability.
Gallager derived the exact asymptotics (that is, a bound with asymptotically vanishing relative error)
of the error probability for fixed rate below the critical rate.
On the other hand,
exact asymptotics for rate above the critical rate 
has been unknown except for symmetric channels (in the strong sense)
and strongly nonlattice channels.
This paper derives the exact asymptotics
for general memoryless channels covering
all previously unsolved cases.
The analysis reveals that
strongly symmetric channels
and strongly nonlattice channels
correspond to two extreme cases
and
the expression of the asymptotics
is much complicated
for general channels.
\end{abstract}

\begin{IEEEkeywords}
channel coding, random coding, error exponent,
finite-length analysis, local limit theorem
\end{IEEEkeywords}

\section{Introduction}

Random coding is a fundamental
scheme in many problems of information theory
and asymptotically achieves the capacity in channel coding.
This code
is also important
in the finite block length regime
to clarify the achievable performance of channel codes.
For this purpose Polyanskiy \cite{second_polyanskiy} and Hayashi \cite{second_hayasi}
considered random codes with varying coding rate for fixed error probability
and revealed that loss in the coding rate from the capacity
is $\lo(1/\sqrt{n})$
for the block length $n$.

Whereas these bounds are convenient
for evaluate the error probability with small absolute error,
it is sometimes useful to evaluate the error probability
with small {\it relative} error
when the acceptable error probability is very small.
This line of work
is closely related to the theory of error exponent,
which considers exponential decay of the error probability
for fixed rate $R$.
Gallager \cite{gallager_map} derived an upper bound of the
error probability of random coding
called a random coding union bound.
It is shown that the use of union bound
does not worsen the exponent
for coding rates below the critical rate \cite{gallager_map}
and above the critical rate \cite{dyachkov}.

As a higher-order analysis for the error exponent,
there are many studies 
to evaluate the random coding error probability $P_{\mathrm{RC}}(n)$
with vanishing relative error
for fixed coding rate $R$ and block length $n$
for memoryless channels.
Dobrushin \cite{dobrushin}
showed that the random coding error probability
is written in a form $\Theta(n^{-a(R)}\e^{-nE(R)})$
for discrete symmetric channels in the strong sense that
each row and the column of the transition probability matrix
are permutations of the others.
They also derived the specific value of
$\lim_{n\to\infty}n^{a(R)}\e^{E(R)}P_{\mathrm{RC}}(n)$
for the nonlattice case (defined later)
and noted that
the limit does not exist for some cases.

For general
class of discrete memoryless channels,
Gallager \cite{gallager_tight} showed that
the upper bound derived in \cite{gallager_map}
is also the lower bound
with vanishing relative error
for rate below the critical rate.
Altu\u{g} and Wagner \cite{altug_journal} corrected his result
for singular channels.
They, 
and Scarlett et al.~\cite{scarlett},
also derived
upper bounds of the error probability
for general rate $R$.
However
these bounds, denoted by $\hat{P}(n)$,
do not assure $\hat{P}(n)/P_{\mathrm{RCU}}(n)=\Omega(1)$
although $\hat{P}(n)/P_{\mathrm{RCU}}(n)=\lo(1)$ is proved.


Honda \cite{exact_isit} derived a framework to evaluate
the random coding error probability for general (possibly nondiscrete)
nonsingular memoryless channels.
He introduced a two-dimensional random variable,
which will be denoted by $(Z(\eta),Z'(\eta))$ or $(Z_0,Z_1)$ for short,
and showed that $\lim_{n\to\infty}\E[f_n(\bZ_0,\bZ_1)]/P_{\mathrm{RC}}(n)=1$
for some function $f_n$,
where $(\bZ_0,\bZ_1)$ is the empirical mean of $n$ i.i.d.~copies of $(Z_0,Z_1)$.
Thus, we can obtain an explicit representation of $\PRC(n)$
if $\E[f_n(\bZ_0,\bZ_1)]$
is approximated appropriately.
It is known that
the error of
normal approximation
of $(\bZ_0,\bZ_1)$
becomes large
if Cram\'er's condition is not satisfied, or equivalently,
if $(Z_0,Z_1)$ is distributed over a lattice
or a set of parallel lines with equal interval.
In these cases the analysis becomes much complicated
and Honda \cite{exact_isit} only derived an explicit representation of
$\E[f_n(\bZ_0,\bZ_1)]$ for the case that Cram\'er's condition is satisfied.
For
continuous channels such as Gaussian channels
lattice distributions do not appear and
a higher-order analysis is given in \cite{awgn_finite}.

In this paper we derive simple representation of $\E[f_n(\bZ_0,\bZ_1)]$,
or equivalently $\PRC(n)$,
for general $(Z_0,Z_1)$ including the case that
$(Z_0,Z_1)$ is distributed over a lattice
or parallel lines, which is the last region where
the exact asymptotics of the random coding error probability
has been unknown for singular channels.
Our analysis reveals that
strongly symmetric channels considered in \cite{dobrushin} belong to the degenerate case that
$Z_1$ is a linear deterministic function of
$Z_0$ and the asymptotic form of the error probability becomes much simpler.
We also derive the exact asymptotics for singular channels
by applying the same techniques.
Thus our analysis covers
all previously unknown cases
in the evaluation of random coding error probability
with vanishing relative gap for fixed rate $R$.

The main difficulty of the derivation is that
the required precision for the evaluation
of $(Z_0,Z_1)$ is not ``isotropic''.
More precisely,
$\E[f_n(\bZ_0,\bZ_1)]$
depends on the behavior of $\bZ_0$
in $\so(1/\sqn)$ precision
whereas
$\E[f_n(\bZ_0,\bZ_1)]$ has rough dependence
on $\bZ_1$
and $\so(1/\sqn)$ precision for $\bZ_2$ does not lead
to a simple expression.
Based on this observation,
we start with
local limit theorem for $(\bZ_0,\bZ_1)$ with
$\so(1/\sqrt{n})$ precision in both directions
and ``blur'' the distribution function only in $\bZ_1$ direction.

\section{Preliminary}
We consider a memoryless channel
with input alphabet $\calX$ and output alphabet $\calY$.
The output distribution for input $x\in\calX$
is denoted by $W(\cdot |x)$.
Let $X\in \calX$ be a random variable with distribution $P_X$
and $Y\in\calY$ follow
$W(\cdot|X)$ given $X$.
$X'$ is a random variable
with the same distribution as $X$ and independent of $(X,Y)$.
$(\bmX,\bmY,\bmX')=((X_1,\cdots,X_n),\,(Y_1,\cdots,Y_n),\,(X_1',\cdots,X_n'))$ denotes
$n$ independent copies of $(X,Y,X')$.

We assume that there exists a base measure $Q$
such that $W(\cdot|x)$ is absolutely continuous with respect to $Q$
for all $x$.
Under this assumption,
we also use
$W(y|x)$
to denote
the Radon-Nikodym derivative $\frac{\rd W(\cdot|x)}{\rd Q}(y)$
by a slight abuse of notation.
Since the density satisfies $W(Y|X)>0$ almost surely,
the log likelihood ratio
\begin{align}
\nu(X,Y,x')=\log \frac{W(Y|x')}{W(Y|X)}\in [-\infty,\infty)\n
\end{align}
is well-defined almost surely
for any $x'\in\calX$.
We assume that the mutual information is finite, that is,
$I(X;Y)=\E_{XY}[-\log \E_{X'}[\e^{\nu(X,Y,X')}]]<\infty$.


We consider the error probability of
a random code such that each element of codewords
$(\bmX_1,\cdots,\allowbreak
\bmX_M)\in \calX^{n\times M}$
is generated independently from distribution $P_X$.
The coding rate of this code is given by
$R=(\log M)/n$.
We use
the maximum likelihood decoding
\begin{align}
\hat{\bm{X}}=\argmax_{j\in\{1,2,\cdots,M\}}\sum_{i=1}^n \log W(Y_i|(\bm{X}_j)_i)\per\n
\end{align}
We mainly consider the case that ties are broken uniformly at random.
See Sect.~\ref{sec_tie} for
ties immediately regarded as a decoding error.
Note that the former case corresponds to \cite{dobrushin}
and the latter case is considered in \cite{gallager_tight}.

For a random variable $V$ we write
$\bar{V}$ to denote the empirical mean of $n$ i.i.d.~copies
and write $\tilde{V}=\sqn (\bar{V}-\E[\bar{V}])$.
We write $a\land b=\min\{a,b\}$ and $a\lor b=\max\{a,b\}$.
For $x=0$ we define $(\e^x-1)/x=x/(\e^x-1)=1$.

\subsection{Error Exponent}
Define a random variable $Z(\la)$ on the space of functions
$\mathbb{R}\to\mathbb{R}$ by
\begin{align}
Z(\la)&=
\log\E_{X'}\left[\e^{\la \nu(X,Y,X')}\right]\n
\end{align}
and its derivatives by
\begin{align}
\bibunZ{m}(\la)&=
\frac{\rd^m}{\rd \la^m}\log\E_{X'}\left[\e^{\la \nu(X,Y,X')}\right]\com\n
\end{align}
which we also write as $Z'(\la),\,Z''(\la),\cdots$.
Here $\E_{X'}$ denotes the expectation over $X'$
for given $(X,Y)$.
We define
\begin{align}
Z(\la+\im \xi)&=
\log
\E_{X'}\left[\e^{(\la +\im \xi)\nu(X,Y,X')}\right]
\nn
\cumu(\la+\im \xi)&=
\log
\left|
\E_{X'}\left[\e^{(\la +\im \xi)\nu(X,Y,X')}\right]
\right|
\com\n
\end{align}
where $\la,\xi\in \mathbb{R}$ and $\im$ is the imaginary unit.
Here we always consider the case $\la>0$ and define $\e^{(\la+\im \xi)(-\infty)}=0$.

The random coding error exponent
for $0<R<I(X;Y)$
is
denoted by
\begin{align}
E_r(R)
&=
-\inf_{(\alpha,\la)\in [0,1]\times [0,\infty)}\{\alpha R+\log \E[\e^{\alpha Z(\la)}]\}\nn
&=
-\min_{\alpha\in (0,1]} \{\alpha R+\log \E[\e^{\alpha Z(1/(1+\alpha))}]\}\com
\label{def_exponent}
\end{align}
and we write the optimal solution of $(\alpha,\la)$ as
$(\rho,\eta)=(\rho,1/(1+\rho))$.
Critical rate $R_{\mathrm{crit}}$
is the largest $R$ such that the optimal solution of \eqref{def_exponent}
is $\rho=1$.

In the strict sense
the random coding error exponent represents
the supremum of \eqref{def_exponent} over $P_X$
but for simplicity
we fix $P_X$ and omit its dependence.
See \cite[Theorem 2]{altug_journal} for
a condition that there exists $P_X$ which attains
this supremum.

Let $P_{\rho}$ be the probability measure
such that
$\rd P_{\rho}/\rd P=\e^{\rho Z(\eta)-\La(\rho)}$
for
$\La(\rho)=\log \E[\e^{\rho Z(1/(1+\rho))}]$.
We write the expectation under $P_{\rho}$ by
$\E_{\rho}$
and define
\begin{align}
\mu_i&=\E_{\rho}[\bibunZ{i}(\eta)]
=\e^{-\La(\rho)}\E[\bibunZ{i}(\eta)\e^{\rho Z(\eta)}]\nn
\si_{ij}&=\E_{\rho}[(\bibunZ{i}(\eta)-\mu_i)(\bibunZ{j}(\eta)-\mu_j)]\nn
&=\e^{-\La(\rho)}\E[(\bibunZ{i}(\eta)-\mu_i)(\bibunZ{j}(\eta)-\mu_j)\e^{\rho Z(\eta)}]\nn
\Sigma&=\left(
\begin{array}{cc}
\si_{00}&\si_{01}\\
\si_{10}&\si_{11}
\end{array}
\right)\per
\n
\end{align}
By letting $\Delta=-(\mu_0+R)$
we have
$\Delta>0$ if $R<\Rcrit$
and $\Delta=0$ otherwise.
For a one-dimensional random variable $V\in \uR$, we say that
$V$ is singular if $V\in \{-\infty,v\}$ a.s.~for some $v\in \bbR$.

\begin{definition}\label{def_lattice}{\rm
Channel $W$ is singular if
$\nu(X,Y,X')$ given $(X,Y)$ is singular almost surely, that is,
$P_{X'}[\nu(X,Y,X')\in \{-\infty,0\}]=1$ a.s.
}\end{definition}
As discussed in \cite{dobrushin},
$\mu_2=0$ if $W$ is singular and $\mu_2>0$ otherwise.


\subsection{Lattice and Nonlattice Distributions}
We call that nonsingular one-dimensional random variable $V\in\uR$ has a lattice distribution
with span $h>0$ and offset $a\in\bbR$
if $V\in \{a+ih: i\in\bbZ\}\cup\{-\infty\}$ a.s.~and
$h$ is the largest one satisfying this property.

Let $a\in\bbR^2$ be arbitrary
and $h^{(1)},h^{(2)}\in \bbR^2$ be linearly independent vectors.
We say that
two-dimensional random variable $V\in \bbR^2$ with covariance matrix $\Sigma$
satisfying $|\Sigma|\neq 0$
has a lattice distribution over $L=\{a+i h^{(1)}+j h^{(2)}: i,j\in\bbZ\}$
if $V \in L$ a.s.~and
no sublattice of $L$ satisfies this property.
We say that $V\in \bbR^2$
has a lattice-nonlattice distribution over
set $L'=\{a+i h^{(1)}+t h^{(2)}: i\in\bbZ, t\in\bbR\}$
of lines with equal interval
if $V \in L'$ a.s.~and $(h^{(1)},h^{(2)})$ is a pair with largest
$|\det(h^{(1)},h^{(2)})|/\Vert h^{(2)}\Vert$.
We say that $V\in \bbR^2$ has a strongly nonlattice distribution
if $V$ does not have a lattice distribution or lattice-nonlattice distribution.

\begin{definition}\label{def_lattice}{\rm
Channel $W$ is $h$-lattice
if
$\nu(X,Y,X')$ has a lattice distribution with span $h$
and is nonlattice otherwise.
We define the span of a nonlattice channel
as $h=0$.
}\end{definition}
Note that if $W$ is $h$-lattice then
the offset of $\nu(X,Y,X')$ is zero
from the definition of $\nu$.
Whereas this classification of a channel
also appears in many studies such as \cite{gallager_tight},
we also consider another classification to derive a tight bound.
This classification also depends on $\eta=1/(1+\rho)$
that is determined from $R$.
\begin{definition}\label{def_lattice}{\rm
Channel and rate pair $(W, R)$
is $(h',a')$-lattice
if $Z(\eta)$ has a lattice distribution with span $h'$
and offset $a'$,
and is nonlattice otherwise.
The pair $(W,R)$ is pseudo-symmetric if
$(Z(\eta),Z'(\eta))$ is distributed over some single line,
that is, $Z'(\eta)$ is a linear function of $Z(\eta)$.
}\end{definition}

Dobrushin \cite{dobrushin} considered
the case that
$W$ is a symmetric discrete channel in the strong sense that
each row and column of the transition probability matrix
are permutations of the others.
In this case
the conditional distribution of $W(Y|X')$ given $Y$
does not depend on $Y$
and therefore for any $y_0\in\calY$ we have
\begin{align}
Z(\eta)&=\log \E_{X'}[W(y_0|X')^{\eta}]-\eta \log W(Y|X)\com\nn
Z'(\eta)&=\frac{\E_{X'}[W(y_0|X')^{\eta}\log W(y_0|X')]}{\E_{X'}[W(y_0|X')^{\eta}]}
-\log W(Y|X)\per\n
\end{align}
The first terms of RHSs of them are constants
and the following property trivially holds.
\begin{proposition}
Assume that discrete channel $W$ is strongly symmetric.
Then $(W,R)$ is pseudo-symmetric for any $R$.
Furthermore, $W$ is $h$-lattice if and only if
$(W,R)$ is $(\eta h, a)$-lattice for some $a\in\bbR$.
\end{proposition}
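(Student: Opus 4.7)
The plan is to read off both claims directly from the two displayed formulas for $Z(\eta)$ and $Z'(\eta)$. Strong symmetry of $W$ forces the ``first term'' of each formula, namely $c_0 := \log \E_{X'}[W(y_0|X')^{\eta}]$ and $c_1 := \E_{X'}[W(y_0|X')^{\eta}\log W(y_0|X')]/\E_{X'}[W(y_0|X')^{\eta}]$, to be genuine constants independent of the random $(X,Y)$ (and in fact of $y_0$), so almost surely $Z(\eta) = c_0 - \eta \log W(Y|X)$ and $Z'(\eta) = c_1 - \log W(Y|X)$. Eliminating the shared random variable $\log W(Y|X)$ then yields the affine identity $Z(\eta) - c_0 = \eta\,(Z'(\eta) - c_1)$, which places $(Z(\eta), Z'(\eta))$ on a single line and is precisely pseudo-symmetry.

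For the lattice equivalence I would invoke the column-permutation part of strong symmetry: for every $y\in\calY$ the multiset $\{W(y|x): x\in\calX\}$ is the same, so $\log W(Y|X)$ takes values almost surely in a fixed finite set $S = \{\log w_1,\dots,\log w_k\}$ that does not depend on $Y$. Consequently the difference set $S - S$ coincides with the (finite part of the) support of $\nu(X,Y,X') = \log W(Y|X') - \log W(Y|X)$, and the subgroups of $\bbR$ they generate agree. If $W$ is $h$-lattice this common subgroup is $h\bbZ$, so $S \subseteq \log w_1 + h\bbZ$ with the differences exhausting $h\bbZ$; applying the affine map $s \mapsto c_0 - \eta s$ then shows that $Z(\eta)$ sits in the coset $(c_0 - \eta \log w_1) + \eta h\bbZ$ and that $\eta h$ is maximal, giving the $(\eta h, a)$-lattice property with $a = c_0 - \eta \log w_1$. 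The converse direction runs the same chain backwards.

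The only step requiring a sliver of care is the maximality clause in the definition of the ``span'': I need to verify that the affine bijection $\log W(Y|X) \mapsto c_0 - \eta \log W(Y|X) = Z(\eta)$ transports the \emph{largest} span $h$ for $\nu$ to the \emph{largest} span $\eta h$ for $Z(\eta)$, and not to any proper divisor of it. Since multiplication by the nonzero scalar $\eta$ is a group isomorphism of $\bbR$ that sends $h\bbZ$ bijectively to $\eta h\bbZ$, this is a routine verification rather than a genuine obstacle, consistent with the paper's remark that the property ``trivially holds.''
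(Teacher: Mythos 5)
Your proposal is correct and follows the same approach as the paper: it starts from the two displayed formulas $Z(\eta)=c_0-\eta\log W(Y|X)$ and $Z'(\eta)=c_1-\log W(Y|X)$ valid under strong symmetry, eliminates $\log W(Y|X)$ to get the affine relation (pseudo-symmetry), and then uses the fact that $\nu$ and $Z(\eta)$ are both affine images of the same fixed support set $S$ of log-likelihoods to match the lattice spans up to the factor $\eta$. The paper dismisses these verifications as ``trivial,'' and you have simply spelled out the details (including the column-permutation consequence that $S$ is $y$-independent and the group-theoretic maximality of the span) correctly.
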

We can see from this proposition that
symmetric channels considered in \cite{dobrushin}
correspond to the degenerate case
where $(Z(\eta),Z'(\eta))$ is linearly dependent.


As in \cite{exact_isit} we always assume that
for lattice span $h\ge 0$ of $W$
there exist $\alpha,\gapxia>0$ and a neighborhood $\mathcal{S}\ni \la$ of $\eta$
such that for any $0<\gapxi<b_2<2\pi/h\le \infty$
\begin{align}
&\sup_{\la\in \mathcal{S}}\E_{\rho}[\e^{\alpha |\bibunZ{i}(\la)|}]<\infty\com\quad i=1,2,3,\nn
&\sup_{\la\in \mathcal{S},\,\xi\in[-\gapxia,\gapxia]}
\E_{\rho}[\e^{\alpha |(\partial^4/\partial \xi^4)Z(\la+\im\xi)|}]<\infty\com\nn
&\sup_{\la\in \mathcal{S},\,\xi\in [\gapxi,b_2]}
\E_{\rho}[\e^{\alpha |\cumu(\la+\im\xi)-\cumu(\la)|}]<\infty\com\n
\end{align}
which are trivially satisfied for finite discrete channels.


\section{Exact Asymptotics for Nonsingular Channels}
In this section we derive the exact asymptotics for
nonsingular channels covering results
in \cite{dobrushin}\cite{exact_isit} as special cases.
First we give the exact asymptotics for
$R\le \Rcrit$.
\begin{theorem}\label{thm_equal}
Let $W$ be a channel with lattice span $h\ge 0$ of $W$.
Then
\begin{align}
P_{\mathrm{RC}}(n)
=
\begin{cases}
\frac{(1+\so(1))h(\e^{h/2}+1)}{2(\e^{h/2}-1)\sqrt{2\pi n(\mu_2+\si_{11})}}
\e^{-nE_r(R)},& \mbox{if $R<\Rcrit$},\\
\frac{(1+\so(1))h(\e^{h/2}+1)}{4(\e^{h/2}-1)\sqrt{2\pi n(\mu_2+\si_{11})}}
\e^{-nE_r(R)},& \mbox{if $R=\Rcrit$}.
\end{cases}
\label{bound_equal}
\end{align}
\end{theorem}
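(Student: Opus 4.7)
The plan is to start from the representation
$P_{\mathrm{RC}}(n) = (1+\so(1))\,\E[f_n(\bZ_0,\bZ_1)]$
from \cite{exact_isit} and evaluate the right-hand side via a two-dimensional local limit theorem tailored to the anisotropic precision emphasized in the introduction: $\so(1/\sqn)$ resolution in $\bZ_0$ but only rough dependence on $\bZ_1$.

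First, I would perform the exponential change of measure from $P$ to $P_\rho$; combined with the identity $\La(\rho)+\rho R=-E_r(R)$, this extracts the prefactor $\e^{-nE_r(R)}$ and leaves a tilted expectation in which $(\bZ_0,\bZ_1)$ has mean $(\mu_0,\mu_1)$ and covariance $\Sigma/n$. Because we are at $R\le\Rcrit$, the optimizer is $\rho=1$ and $\eta=1/2$; this both simplifies the bookkeeping and implies that the relevant boundary value $\bZ_0=-R$ sits at the tilted mean shifted by exactly $\Delta=-(\mu_0+R)\ge 0$.

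Second, I would establish a two-dimensional local limit theorem for $(\bZ_0,\bZ_1)$ under $P_\rho$ with $\so(1/\sqn)$ precision. For nonlattice $W$ ($h=0$) this is the classical continuous local CLT; for $h$-lattice $W$, $Z(\eta)$ inherits a lattice with span $\eta h$, so $\bZ_0$ lives on a lattice of spacing $\eta h/n$, while $Z'(\eta)$ remains nonlattice in the nondegenerate case. The Fourier-analytic assumptions at the end of Section~II --- in particular the uniform control of $\cumu(\la+\im\xi)-\cumu(\la)$ for $\xi$ bounded away from zero and from the lattice period --- are exactly what is needed to bound the contribution of non-principal Fourier modes. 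I would then ``blur'' the resulting approximation in the $\bZ_1$-direction only, exploiting the fact that $f_n$ is insensitive to small shifts in $\bZ_1$, so as to convert the joint density into a form that can be integrated against $f_n$ in closed form.

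Third, I would plug this approximation back into the tilted expectation. The function $f_n$ effectively localizes the sum/integral to the error region, a half-space in the $\bZ_0$-direction with boundary $\bZ_0=-R$; the Gaussian integration in the $\bZ_1$-direction, conditional on $\bZ_0$, yields the conditional variance $\mu_2+\si_{11}$, producing the $\sqrt{2\pi n(\mu_2+\si_{11})}$ in the denominator. The remaining sum over lattice values of $\bZ_0$ evaluates in closed form to a geometric-type factor $h(\e^{h/2}+1)/(2(\e^{h/2}-1))$, which by the convention $(\e^x-1)/x=1$ at $x=0$ reduces continuously to the nonlattice case in the $h\to 0$ limit. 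For $R<\Rcrit$ one has $\Delta>0$, the boundary lies strictly on one side of the tilted mean, and the full half-line contributes, giving the $1/2$ prefactor; for $R=\Rcrit$ one has $\Delta=0$, the boundary lies exactly at the tilted mean, and only half the mass is collected, cutting the prefactor to $1/4$ as stated.

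The main obstacle will be the simultaneous handling of $\so(1/\sqn)$ lattice precision in the $\bZ_0$-direction and controlled blurring in the $\bZ_1$-direction; this is a nonstandard two-dimensional local limit theorem requiring careful isolation of the principal Fourier mode via the exponential-integrability assumptions of Section~II. A secondary challenge is uniformly bounding the error from the tilting so that the overall relative error remains $\so(1)$ against the principal $\e^{-nE_r(R)}/\sqn$ scale, which typically requires a companion large-deviation estimate controlling $\bZ_0$ far from $\mu_0$.
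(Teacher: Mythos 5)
Your high-level scaffolding (tilting, extracting $\e^{-nE_r(R)}$, observing that $\rho=1,\eta=1/2$ at $R\le\Rcrit$, and using $\Delta$ to distinguish the $1/2$ vs.~$1/4$ prefactor) matches the paper, but the core analytic step is misidentified and your proposed route is the wrong tool for this rate regime.

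First, the lattice factor $h(\e^{h/2}+1)/(4(\e^{h/2}-1))$ does not arise from ``a sum over lattice values of $\bZ_0$.'' In the paper it is simply $\lim_{u\downarrow 0}g_{1,h}(u)$ evaluated at $\eta=1/2$: the lattice structure of $\nu(X,Y,X')$ is already encoded in the function $g_h$ inherited from Proposition~\ref{expansion_first}, not extracted by a lattice local limit theorem applied to $\bZ_0$. For $R\le\Rcrit$ the argument $u=\e^{\sqn\tZ_0-n\Delta-\tZ_1^2/2c_1}/(c_2\sqn)$ tends to $0$ on the dominant event, so only the limiting constant of $g_{1,h}$ at the origin matters, and there is no lattice sum to evaluate at all. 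The lattice sum $\psi_{\rho,h,h'}$ you are thinking of appears only in Theorem~\ref{thm_main} for $R>\Rcrit$, and even there it is over the $(h',a')$ lattice of $Z(\eta)$ (not the $h$-lattice of $\nu$); the $h$-dependence still comes from $g_h$.

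Second, the anisotropic bivariate local limit theorem (Lemma~\ref{lem_blur}) is not used here, and invoking it is an overkill that also obscures why the proof closes. Because $g_{1,h}$ is bounded and constant to leading order near $0$, all that is required is that $\E_\rho[\e^{-\tZ_1^2/2c_1}]\to 1/\sqrt{1+\si_{11}/c_1}$ and (for $R=\Rcrit$) that $\E_\rho[\idx{\tZ_0\le 0}\e^{-\tZ_1^2/2c_1}]\to \tfrac12/\sqrt{1+\si_{11}/c_1}$; the paper obtains the former from the law of large numbers and the latter from the multivariate Berry--Esseen bound of \cite{berry_multi} applied to the convex sets $\{z:\idx{z_0\le 0}\e^{-z_1^2/2c_1}\ge v\}$, with the pseudo-symmetric case handled by the one-dimensional Berry--Esseen bound. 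These are integral (not local) limit statements, which is exactly why no Fourier-mode isolation or lattice counting is needed. Your final numerology is correct, but as written the argument would have you proving a lattice-sum identity that is not the actual source of the constant and deploying Lemma~\ref{lem_blur} where a much cruder tool suffices.
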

We prove this theorem in Appendix \ref{sect_equal}
using
two-dimensional Berry-Esseen bound (or one-dimensional one for
pseudo-symmetric $(W,R)$) in \cite{berry_multi}.

The derived bound is equal to those of \cite{gallager_tight} (for $R<\Rcrit$)
and \cite{dobrushin} (for strongly symmetric channels)
when $W$ is nonlattice, whereas these three bounds are different to each other
for the lattice case.
Gallager \cite{gallager_tight} derived a bound
for ties regarded as errors
and the bound in this theorem for uniformly broken ties
is slightly smaller than the bound in \cite{gallager_tight}
as discussed in Sect.~\ref{sec_tie}.
On the other hand, Dobrushin \cite{dobrushin} considered
uniformly broken ties but the explicit expression
on the constant factor was not derived for this case.

Now we consider the case $R>R_{\mathrm{crit}}$.
In this case
the bound also depends on
whether $(W,R)$ is lattice or not
and becomes much complicated.
For $h\ge 0$, let
\begin{align}
g_h(u)&=
1-\frac{\e^{-\frac{h\eta}{\e^{h\eta}-1}u}(1-\e^{-h\eta u})}{h\eta u}\nn
g_{\rho,h}(u)&=u^{-\rho}g_h(u)
=
\frac{1}{u^{\rho}}-\frac{\e^{-\frac{h\eta}{\e^{h\eta}-1}u}(1-\e^{-h\eta u})}{h\eta u^{1+\rho}}
\nn
\psi_{\rho,h,h'}(x)&=\sum_{i\in\bbZ}h'g_{\rho,h}(\e^{x+ih'})\nn
\psi_{\rho,h}&=
\int_{w\in\bbR}g_{\rho,h}(\e^{w})\rd w\nn 
&=
\frac{\Gamma(1-\rho)}{\rho}
\left(\frac{h\eta}{\e^{h\eta}-1}\right)^{\rho+1}
\frac{\e^{h}-1}{h}
\com\label{defs_constants}
\end{align}
where $\Gamma(\cdot)$ is Gamma function.
Note that $\psi_{\rho,h,h'}(x)$ is a periodic function with period $h'$ and
satisfies $\psi_{\rho,h}=\lim_{h'\downarrow 0}\psi_{\rho,h,h'}(x)$ for any $x\in\bbR$.
The following theorem is the main contribution of this paper,
which solves the exact asymptotics of random coding error probability
for rate above the critical rate.
\begin{theorem}\label{thm_main}
Fix $R\in (R_{\mathrm{crit}},I(X;Y))$
and let $h>0$ be the lattice span of channel $W$.
Then
\begin{align}
P_{\mathrm{RC}}(n)
=
(1+\so(1))\frac{(1+\rho)^{\rho}I_n}{\sqrt{(2\pi)^{1+\rho}\mu_2^{\rho}}}
 n^{-(1+\rho)/2}\e^{-nE(R)}
\com\n
\end{align}
where, if $(W,R)$ is nonlattice then
\begin{align}
I_n
&=
I=\frac{
\psi_{\rho,h}
}{\sqrt{\sigma_{00}+\rho |\Sigma|/\mu_2}}\label{bound_n_n}
\end{align} 
and if $(W,R)$ is $(h',a')$-lattice then
\begin{align}
I_n
&=
\frac{\E_{V}
\sqx{\psi_{\rho,h,h'}\left(na'-\frac{|\Sigma|V^2}{2(\sigma_{00}+\rho|\Sigma|/\mu_2)}-\log c_2\sqn\right)}
}{\sqrt{\sigma_{00}+\rho |\Sigma|/\mu_2}}\label{bound_n_l}
\end{align}
for standard normal $V$.
In particular, if $(W,R)$ is pseudo-symmetric then
\begin{align}
\!\!\!\!
I_n
=
\begin{cases}
\frac{
\psi_{\rho,h}
}{\sqrt{\sigma_{00}}},
&\mbox{$(W,R)$ is nonlattice,}\\
\frac{
\psi_{\rho,h,h'}\left(na'-\log c_2\sqn\right)
}{\sqrt{\sigma_{00}}},&\mbox{$(W,R)$ is $(h',a')$-lattice.}\\
\end{cases}
\label{bound_s}
\end{align}%
\end{theorem}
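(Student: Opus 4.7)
The plan is to begin from the framework of \cite{exact_isit}, which reduces $P_{\mathrm{RC}}(n)$ to the evaluation of $\E[f_n(\bZ_0,\bZ_1)]$, where $f_n$ encodes, given the empirical mean $(\bZ_0,\bZ_1)$, the conditional probability that at least one of the $M-1$ competing codewords yields a larger likelihood than the transmitted codeword. Performing the exponential tilt with parameter $\rho$ via $\rd P_\rho/\rd P=\e^{\rho Z(\eta)-\La(\rho)}$ extracts the factor $\e^{-nE_r(R)}$ and leaves an expectation under $P_\rho$, under which $(\bZ_0,\bZ_1)$ has mean $(\mu_0,\mu_1)=(-R,\mu_1)$ (using $\De=0$ for $R\ge R_{\mathrm{crit}}$) and covariance $\Sigma/n$. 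Thus the analysis reduces to
\begin{align*}
P_{\mathrm{RC}}(n)=(1+\so(1))\,\e^{-nE_r(R)}\,\E_\rho\bigl[\tf_n(\bZ_0,\bZ_1)\bigr],
\end{align*}
for an appropriately weighted function $\tf_n$.

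The core step is a two-dimensional local limit theorem for $(\tilde Z_0,\tilde Z_1)=\sqn((\bZ_0,\bZ_1)-(\mu_0,\mu_1))$ under $P_\rho$, applied with uniform $\so(1/\sqn)$ error in both coordinates. Guided by the anisotropy observation in the introduction, I would then ``blur'' only the $\bZ_1$ direction: because $\tf_n$ depends on $\bZ_1$ only through the smooth function $g_h$ evaluated at a scale of order one, the $\tilde Z_1$-integration against the (conditional) Gaussian density can be carried out analytically. Completing the square in the joint Gaussian density of $(\tilde Z_0,\tilde Z_1)$ introduces the effective variance $\sigma_{00}+\rho|\Sigma|/\mu_2$ (the $\mu_2$ factor arising from expanding $Z''(\eta)$ in the logarithmic argument of $g_h$) together with the quadratic shift $-|\Sigma|V^2/(2(\sigma_{00}+\rho|\Sigma|/\mu_2))$ along the $\tilde Z_1$ direction. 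The $\tilde Z_1$-integration then produces $\psi_{\rho,h}=\int g_{\rho,h}(\e^w)\rd w$ in the nonlattice case, or the periodic sum $\psi_{\rho,h,h'}$ in the $(h',a')$-lattice case, where the sum arises because $n\bZ_0\in na'+h'\bbZ$ pins the argument of $g_{\rho,h}$ to discrete shifts of spacing $h'$ around $\log c_2\sqn$.

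What remains is a one-dimensional integration in the $\tilde Z_0$ direction, which reduces to the expectation with respect to the leftover standard normal $V$ in \eqref{bound_n_l}, and collapses to the simpler expressions in \eqref{bound_s} in the pseudo-symmetric case where $|\Sigma|=0$ forces both the quadratic shift in $V$ and the correction $\rho|\Sigma|/\mu_2$ to vanish, leaving a one-dimensional local limit theorem in $Z_0$ alone. The main obstacle will be establishing the two-dimensional local limit theorem in the lattice case with uniform $\so(1/\sqn)$ error: one must split the inverse-characteristic-function integral into a central region (where the saddle-point/Gaussian approximation applies) and a peripheral region (where the displayed assumptions on $\E_\rho[\e^{\alpha|\cumu(\la+\im\xi)-\cumu(\la)|}]$ for $\xi\in[\gapxi,b_2]$ must be invoked), and then the partial $\bZ_1$-blurring step must be executed so that the periodic dependence of $\psi_{\rho,h,h'}$ on the $\bZ_0$-lattice offset $na'$ survives intact, rather than being smeared out by the $\tilde Z_1$-averaging.
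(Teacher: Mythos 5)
Your high-level outline matches the paper's route: tilt by $\rho$, reduce to $\E_\rho[\e^{-\rho\tZ_1^2/2c_1}g_{\rho,h}(\cdot)]$, apply an anisotropic local-limit argument, and carry out the Gaussian integral to produce the effective variance $\sigma_{00}+\rho|\Sigma|/\mu_2$, the $V^2$-shift, and the collapse to $\psi_{\rho,h}/\sqrt{\sigma_{00}}$ when $|\Sigma|=0$. (Minor slip: you have the roles of the coordinates swapped in one place — it is the sum/integral over the $z_0$-lattice, pinned to $na'+h'\bbZ$, that assembles $\psi_{\rho,h,h'}$, and the remaining $z_1$-integral that becomes the $V$-expectation — but you clearly understand where the periodic sum comes from.)

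The genuine gap is that you misidentify the hard step. The paper does not re-derive a bivariate local limit theorem from a characteristic-function decomposition; it cites Doney's bivariate LLT directly (Proposition~\ref{bivariate}). The actual technical content is the passage from that LLT to the anisotropically blurred statement in Lemma~\ref{lem_blur}: you must estimate $\Pr[\sqn(\tV-v)\in[0,\delta)\times[0,b_n)]$ by \emph{counting lattice points} in a thin $\delta\times b_n$ rectangle, and in the case where $(Z_0,Z_1)$ is supported on a two-dimensional lattice with irrational span ratio (so that $Z_0$ itself is nonlattice, i.e.\ the ``$(W,R)$ nonlattice'' branch of the theorem) this count does not converge by elementary means. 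The paper handles it with a uniform version of the equidistribution theorem (Proposition~\ref{equi} and Lemma~\ref{lem_lattice_equi}), showing the lattice points hit $[0,\delta)$ with the right asymptotic frequency \emph{uniformly in the offset}. Your plan, as written, would blur $\tZ_1$ without ever confronting the fact that the support in the $z_0$-slab is a quasi-periodic discrete set; a CF-based rederivation of the LLT would hit the same wall. A second, smaller omission: you need oscillation control on $f_n$ over the discretization cells (Lemmas~\ref{lem_osci_fn}--\ref{lem_osci}), because $g_{\rho,h}$ is evaluated at $\e^{\sqn z_0-\cdots}$ and hence is \emph{not} uniformly smooth at scale $1/\sqn$ in $z_0$; the replacement of $f_n$ by a cellwise-constant approximation must be justified via the Lipschitz bound $|g_h((1+r)u)-g_h(u)|\le c_h|r|(u\wedge1)$, which your sketch does not address.
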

We give a sketch of a proof in Sect.~\ref{sec_sketch}
and the full proof is in Appendix \ref{append_main}.
If $(W,R)$ is pseudo-symmetric then $|\Sigma|=0$
and 
the bound
\eqref{bound_s} is
a special case of \eqref{bound_n_n} and \eqref{bound_n_l},
although
the proof is given separately.

The bound in \cite{exact_isit} for strongly nonlattice $(Z(\eta),Z'(\eta))$
is a special case of \eqref{bound_n_n}.
If $(W,R)$ is lattice then
$\PRC(n)\cdot \allowbreak n^{(1+\rho)/2}\e^{nE(R)}$
does not converge as shown in this theorem.
This phenomenon is suggested in \cite{dobrushin}
for strongly symmetric channel, which is a special case of pseudo-symmetric $(W,R)$.
Furthermore,
if $(W,R)$ is not pseudo-symmetric
then
$\PRC(n)n^{(1+\rho)/2}\e^{nE(R)}$ is expressed
as an expectation of a periodic function for a normal random variable,
which seems to be impossible to integrate out analytically.
The known bounds are summarized in Fig.~\ref{fig_summary}
and the derived bound in this paper covers all region.

\begin{figure}[t]%
  \begin{center}
   \includegraphics[bb=30 80 380 800,angle=270,clip, width=60mm]{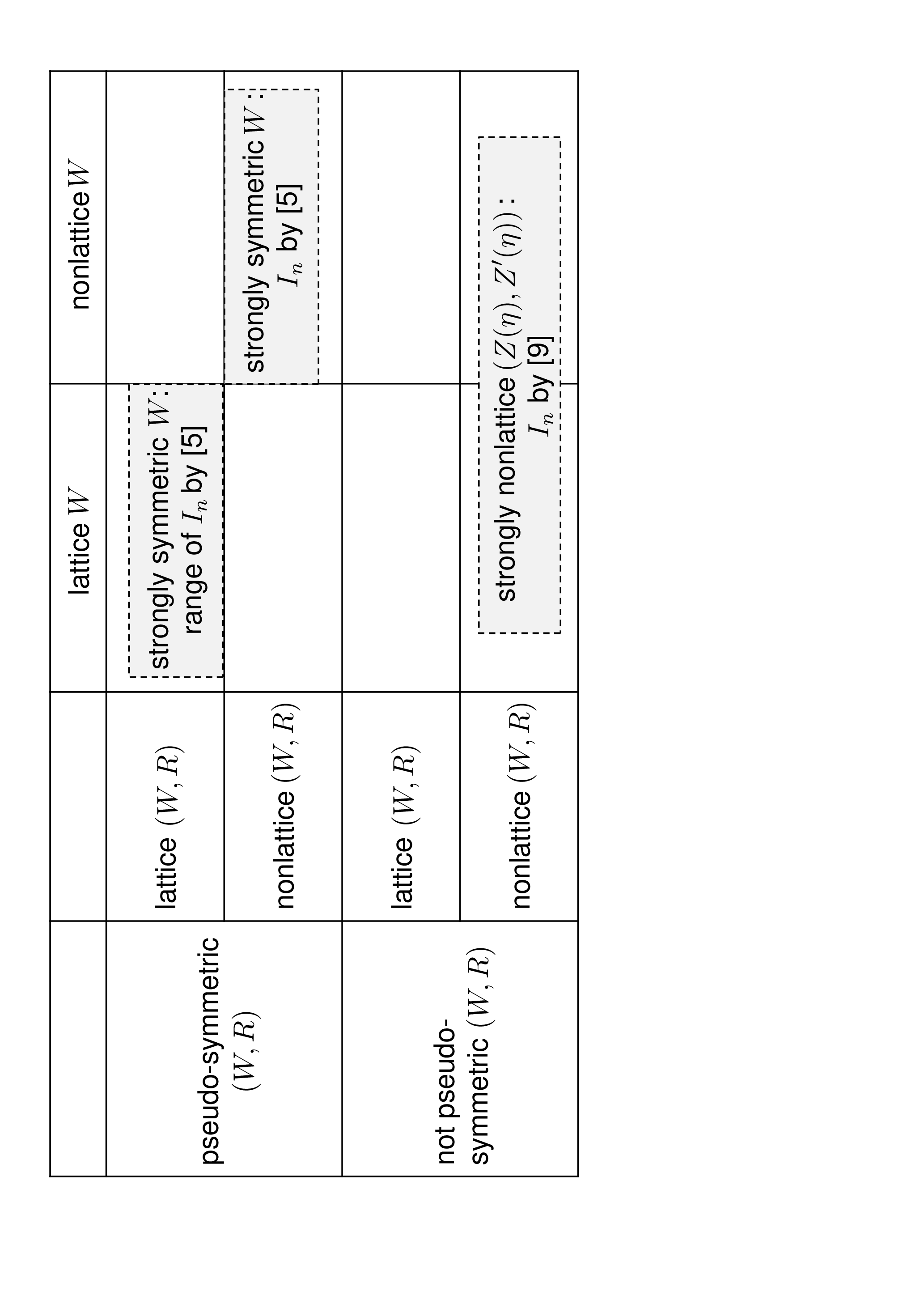}
  \end{center}
  \caption{Known bounds that are proved to be tight up to constant factors for $\Rcrit<R<I(X;Y)$.}
  \label{fig_summary}
\end{figure}%

\section{Exact Asymptotics for Singular Channels}
Now we consider the singular channels,
which satisfies $\mu_2=0$, that is,
$P_{X'}[\nu(X,Y,X')\in \{0,1\}]$ a.s.~for $(X,Y)$.
Proofs of theorems are given in Appendix \ref{sect_proof_sin}.

As in the case of nonsingular channels,
we have a simple expression
of the error probability for $R\le \Rcrit$.
\begin{theorem}\label{thm_equal_sin}
If channel $W$ is singular and has lattice span $h\ge 0$ then,
for $R\le \Rcrit$,
\begin{align}
P_{\mathrm{RC}}(n)
=
\begin{cases}
(1/2+\so(1))
\e^{-nE_r(R)},& \mbox{if $R<\Rcrit$},\\
(1/4+\so(1))
\e^{-nE_r(R)},& \mbox{if $R=\Rcrit$}.
\end{cases}
\label{bound_sin_equal}
\end{align}
\end{theorem}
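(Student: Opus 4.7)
The plan is to exploit that for a singular channel the decoding error collapses to the event of ties. Since $\nu(X,Y,X')\in\{-\infty,0\}$ almost surely, every alternative codeword $\bmX_j$ yields a total log-likelihood either equal to $\sum_i\log W(Y_i|X_i)$ (a tie with the true codeword) or equal to $-\infty$ (discarded). Writing $p_i:=\Prx{\nu(X_i,Y_i,X')=0\mid X_i,Y_i}$, we note that $\log p_i$ coincides with $Z(\eta)$ evaluated at $(X_i,Y_i)$ and is independent of $\eta$ in the singular case. Conditionally on $(X_i,Y_i)_i$, the number $K$ of tying alternative codewords is $\mathrm{Binomial}(M-1,q)$ with $q=\prod_i p_i=\e^{n\bZ_0}$; under uniform tiebreaking the conditional error probability is $K/(K+1)$. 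Summing yields $\E[K/(K+1)\mid q]=1-(1-(1-q)^M)/(Mq)$, so with $U:=Mq=\e^{n(R+\bZ_0)}$,
\begin{align*}
\PRC(n)=\Ex{g_M(U)},\qquad g_M(u):=1-\frac{1-(1-u/M)^M}{u}.
\end{align*}
A Taylor expansion of $(1-u/M)^M$ provides a uniform bound $|g_M(u)-g(u)|\le C\e^{-u}u/M$ with $g(u):=1-(1-\e^{-u})/u$, and the elementary inequality $\e^{-u}u\le\e^{-1}$ makes the replacement $g_M\rightsquigarrow g$ cost at most $\lo(\e^{-nR})$, which is exponentially smaller than the claimed main term.

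For the leading behaviour, apply the Chernoff tilt at $\rho=1$, which is the optimal $\rho$ throughout $R\le\Rcrit$ in the singular case. With $\La(\la):=\log\Ex{\e^{\la Z(\eta)}}$, $\mu_0=\La'(1)$, $\tZ_0:=\sqn(\bZ_0-\mu_0)$, and using $\Delta=-\mu_0-R$ together with $E_r(R)=-R-\La(1)$, the expectation reduces to
\begin{align*}
\PRC(n)=(1+\so(1))\,\e^{-nE_r(R)}\,\E_\rho\!\sqx{h(\sqn\tZ_0-n\Delta)},\qquad h(w):=g(\e^w)\e^{-w},
\end{align*}
where $\rho=1$. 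Elementary analysis shows that $h$ is bounded and continuous on $\bbR$ with $h(-\infty)=1/2$ (from $g(u)/u\to 1/2$ as $u\downarrow 0$) and $h(+\infty)=0$. For $R<\Rcrit$ we have $\Delta>0$, so the argument of $h$ tends to $-\infty$ in $P_\rho$-probability and bounded convergence gives $\E_\rho[h(\cdot)]\to 1/2$. For $R=\Rcrit$ we have $\Delta=0$, and the one-dimensional central limit theorem under $P_\rho$ yields $\tZ_0\Rightarrow V\sim N(0,\sigma_{00})$; the rescaled $\sqn\tZ_0$ then diverges to $\pm\infty$ according to the sign of $V$, producing
\begin{align*}
\E_\rho[h(\sqn\tZ_0)]\longrightarrow \tfrac12\Prx{V<0}+0\cdot\Prx{V>0}=\tfrac14.
\end{align*}

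The principal obstacle is justifying the critical-rate limit rigorously: $W_n:=\sqn\tZ_0$ does not converge weakly, so one cannot invoke the continuous-mapping theorem directly. A clean route is Skorokhod representation---pass to a probability space on which $\tZ_0\to V$ almost surely, then conclude pathwise that $h(W_n)\to\tfrac12\idxx{V<0}$ off the null event $\{V=0\}$; uniform integrability is immediate from $\Vert h\Vert_\infty<\infty$. A secondary subtlety is that the change of measure inflates the integrand by $\e^{-\sqn\tZ_0}$, which diverges on $\{\tZ_0<0\}$; the saving grace is that on this same event $g(\e^{\sqn\tZ_0})\sim\e^{\sqn\tZ_0}/2$, and the cancellation is exactly what the definition $h(w)=g(\e^w)\e^{-w}$ encodes. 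The exponential-moment hypotheses collected in the preamble supply the dominated-convergence envelopes needed to make each of these limits rigorous.
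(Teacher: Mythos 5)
Your proposal is correct and follows essentially the same route as the paper: reduce $\PRC(n)$ to $\E[\gsin(\e^{n(\bZ_0+R)})]$ via the tie-probability identity, apply the exponential tilt at $\rho=1$ to obtain $\e^{-nE_r(R)}\E_\rho[\gsin_1(\e^{\sqn\tZ_0-n\Delta})]$, and evaluate the tilted expectation using $\gsin_1(u)\to 1/2$ as $u\downarrow 0$ together with the law of large numbers ($R<\Rcrit$) or the central limit theorem ($R=\Rcrit$). The only cosmetic difference is at the critical rate: the paper partitions on $\tZ_0\lessgtr \pm n^{-1/3}$ while you invoke Skorokhod representation plus dominated convergence, which is an equivalent way of handling the non-continuous limit functional.
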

The bound in \cite{dobrushin} is a special case
of this bound for strongly symmetric channels.
As pointed out in \cite{altug},
the bound derived in \cite{gallager_tight}
does not apply
for the case of nonsingular channels.
Whereas \cite{altug} derives a range of
$P_{\mathrm{RC}}(n)/\e^{-nE_r(R)}$,
this theorem derives its exact value for $n\to\infty$.

Now we consider the case
$R>\Rcrit$.
Dobrushin \cite{dobrushin} pointed out that
$\Rcrit=I(X;Y)$ holds when
a strongly symmetric channel is singular,
which means that $\Rcrit\le R<I(X;Y)$ never occurs in this case.
For general cases, we can see
from
the definition of $\Rcrit$ given below \eqref{def_exponent}
that
$\Rcrit=I(X;Y)$ if and only if
$Z(\eta)$ is singular, that is, $Z(\eta)$ is a constant random variable.
Thus, we can always assume that
$Z(\eta)$ is not singular when
$\Rcrit< R<I(X;Y)$.

The exact asymptotics for
this rate region is given based on the following values.
\begin{align}
\gsin(u)&=
1-\frac{1-\e^{-u}}{u}\com\nn
\gsin_{\rho}(u)&=u^{-\rho}\gsin_h(u)
=
\frac{1}{u^{\rho}}-\frac{1-\e^{-u}}{u^{1+\rho}}\com
\nn
\psin_{\rho,h'}(x)&=\sum_{i\in\bbZ}h'\gsin_{\rho}(\e^{x+ih'})\com\nn
\psin_{\rho}&=
\int_{w\in\bbR}\gsin_{\rho}(\e^{w})\rd w
=
\frac{\Gamma(1-\rho)}{\rho(1+\rho)}
\per\n
\end{align}
\begin{theorem}\label{thm_above_sin}
Assume that channel $W$ is singular.
Then, for $\Rcrit<R<I(X;Y)$,
\begin{align}
\lefteqn{
P_{\mathrm{RC}}(n)
}\nn
&\!=
\begin{cases}
\frac{(1+\so(1))\psin_{\rho,h'}(na')}{\sqrt{2\pi n}\sigma_{00}}
\e^{-nE_r(R)},&\!\!\mbox{if $(W,R)$ is $(h',a')$-lattice,}\\
\frac{(1+\so(1))\psin_{\rho}}{\sqrt{2\pi n}\sigma_{00}}
\e^{-nE_r(R)},&\!\!\mbox{if $(W,R)$ is nonlattice.}
\end{cases}\n
\end{align}
\end{theorem}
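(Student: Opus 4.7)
\begin{proof1}
The plan is to specialize the strategy behind Theorem~\ref{thm_main} to the singular setting, where $\mu_2=0$ forces a degeneracy that reduces the two-dimensional analysis to a one-dimensional local limit theorem for $\bZ_0$ alone. Since $W$ is singular, $\e^{\eta\nu(X,Y,X')}\in\{0,1\}$ a.s., so $Z(\eta)=\log\E_{X'}[\e^{\eta\nu(X,Y,X')}]$ does not depend on $\eta$; hence $Z'(\eta)\equiv 0$ deterministically and $\sigma_{11}=\sigma_{01}=|\Sigma|=\mu_2=0$. Consequently the vector $(\bZ_0,\bZ_1)$ reduces to $\bZ_0$ plus a deterministic coordinate, and the lattice structure of $(W,R)$ is determined entirely by $Z(\eta)$.

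Starting from the representation $\PRC(n)=(1+\so(1))\E[f_n(\bZ_0,\bZ_1)]$ of \cite{exact_isit}, we first identify how the kernel $g_h$ simplifies under singularity. In the nonsingular case $g_h$ captures the combinatorial effect of comparing to $M-1$ competing codewords together with uniform tie-breaking and the lattice quantization of $\nu$. In the singular case the conditional distribution of $\e^{\eta\nu}$ is Bernoulli, so the lattice correction vanishes; a direct recomputation paralleling \cite{exact_isit} shows that $g_h(u)$ is replaced by $\gsin(u)=1-(1-\e^{-u})/u$, and correspondingly $\psi_{\rho,h,h'}$ and $\psi_{\rho,h}$ are replaced by $\psin_{\rho,h'}$ and $\psin_{\rho}$.

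Next, under the tilted measure $P_\rho$, I apply a one-dimensional local limit theorem to $\bZ_0$. In the nonlattice case, a nonlattice local limit theorem yields a smooth Gaussian density for $\bZ_0$ around $\mu_0$; after the substitution $w=\rho(\bZ_0-\mu_0)\sqn$ together with the shift by $\log c_2\sqn$, the expectation becomes $\int_\bbR \gsin_\rho(\e^w)\,\rd w=\psin_\rho$ divided by $\sqrt{2\pi n}\,\sigma_{00}$. In the $(h',a')$-lattice case, a lattice local limit theorem places mass of order $h'/\sqrt{2\pi n\sigma_{00}}$ on each lattice point of $\bZ_0$, and the resulting Riemann-type sum is exactly $\sum_{j\in\bbZ}h'\gsin_\rho(\e^{na'+jh'-\log c_2\sqn})=\psin_{\rho,h'}(na')$.

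The main obstacle is the uniform control of local limit theorem errors against the tails of $\gsin_\rho(\e^w)$: as $w\to-\infty$ the integrand grows like $\e^{-\rho w}/(1+\rho)$ and as $w\to+\infty$ it decays like $\e^{-\rho w}$, so the dominant contribution to the integral or sum comes from a bounded window around $w=\log c_2\sqn$ on which the Gaussian density is essentially constant. Because $\bZ_1$ is identically zero, no ``blurring'' in the $\bZ_1$-direction is required, so the estimates are much simpler than those behind Theorem~\ref{thm_main}; in particular the extra $n^{-\rho/2}$ factor that comes from integrating over $\bZ_1$ in the nonsingular case does not appear, yielding the announced rate $n^{-1/2}\e^{-nE_r(R)}$.
\end{proof1}
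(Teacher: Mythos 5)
Your high-level strategy is correct and matches the paper: observe that singularity makes $Z'(\eta)\equiv 0$ deterministically, so the analysis collapses to a one-dimensional local limit theorem for $\bZ_0$, with the kernel $g_h$ replaced by $\gsin(u)=1-(1-\e^{-u})/u$. The paper's proof does exactly this, via Lemma~\ref{lem_proof_sin} and then a direct lattice/nonlattice local limit theorem, together with oscillation lemmas (the analogues of Lemmas~\ref{lem_osci_fn} and \ref{lem_osci}) to control the local limit error. However, two concrete defects in your write-up would make a careful reader stop.

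First, you import the normalization $\log c_2\sqn$ (i.e.\ $c_2\sqn=\eta\sqrt{2\pi n\mu_2}$) from the nonsingular case. This factor does not appear here: for a singular channel the tie probability is exactly $p_0(\bmX,\bmY)=\e^{n\bZ_0}$, with no Gaussian approximation of the per-letter metric, because $\mu_2=0$ and indeed $\eta\sqrt{2\pi n\mu_2}=0$. Lemma~\ref{lem_proof_sin} therefore gives $\PRC(n)=(1+\so(1))\e^{-nE_r(R)}\E_{\rho}[\gsin_{\rho}(\e^{\sqn\tZ_0})]$ with no $c_2\sqn$ shift and no $\tZ_1^2$ term. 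If you insist on the shift, your lattice Riemann sum $\sum_i h'\gsin_{\rho}(\e^{na'+ih'-\log c_2\sqn})$ equals $\psin_{\rho,h'}(na'-\log c_2\sqn)$, not $\psin_{\rho,h'}(na')$ as you assert; the two differ unless $\log c_2\sqn$ is a multiple of $h'$, so the stated equality is simply false. (This also entails that the dominant window of integration is around $w=0$, not around $w=\log c_2\sqn$, and that the $n^{-\rho/2}$ present in the nonsingular rate disappears because the $(c_2\sqn)^{-\rho}$ prefactor is absent, not because a $\bZ_1$-integral is avoided.)

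Second, your tail asymptotics for $\gsin_\rho(\e^w)$ are wrong, and this would undermine the ``bounded window'' argument if taken at face value. As $w\to-\infty$ we have $\gsin(\e^w)\sim \e^w/2$, so $\gsin_\rho(\e^w)=\e^{-\rho w}\gsin(\e^w)\sim \tfrac12\e^{(1-\rho)w}\to 0$; the integrand decays on \emph{both} tails, as in the paper's bound $\gsin_\rho(\e^w)\le \e^{-\rho w}\wedge\e^{(1-\rho)w}$. Saying it ``grows like $\e^{-\rho w}/(1+\rho)$ as $w\to-\infty$'' is inconsistent with the finiteness of $\psin_\rho=\Gamma(1-\rho)/(\rho(1+\rho))$ and with the localization claim you need. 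Finally, you gesture at ``uniform control of local limit theorem errors'' but do not supply anything like the oscillation bound $\E_\rho[\omega_{f_n^{\rms}}(B_n^{\rms}(\tZ_0))]=\lo(\delta n^{-1/2})$ that the paper proves (Lemmas~\ref{lem_sin_easy}--\ref{lem_osci_sin}); without such a bound the replacement of the point mass at each lattice site by the Gaussian density is not justified uniformly over the unbounded range of $\tZ_0$.
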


\newcommand{\Vz}{V_0}
\newcommand{\Vo}{V_1}
\newcommand{\vz}{v_0}
\newcommand{\vo}{v_1}
\newcommand{\tV}{\tilde{V}}
\newcommand{\hzo}{h_0^{(1)}}
\newcommand{\hoo}{h_1^{(1)}}
\newcommand{\hzt}{h_0^{(2)}}
\newcommand{\hot}{h_1^{(2)}}



%

\newcommand{\bZz}{\bar{Z}^{(0)}}
\newcommand{\bZo}{\bar{Z}^{(1)}}
\newcommand{\tZz}{\tilde{Z}^{(0)}}
\newcommand{\tZo}{\tilde{Z}^{(1)}}

\section{Bounds for Ties Regarded as Errors}\label{sec_tie}
In this section we discuss how the bound changes
when a tie of likelihoods is immediately regarded as
a decoding error.

First we consider nonsingular channels.
Let $p_0=p_0(\bmx,\bmy)$ and $p_+=p_+(\bmx,\bmy)$ be
probabilities that the likelihood of a codeword $\bmX'$
equals
and exceeds that of the sent sequence $\bmx$ given received sequence $\bmy$, respectively.
Then the error probability over $M$ codewords
is
\begin{align}
\lefteqn{
q_M(\pp,\pz)
}\nn
&=
1-(1-\pp)^{M-1}
\nn
&
\quad+
\sum_{i=1}^{M-1}\pz^i(1-\pp-\pz)^{M-i-1}{{M-1}\choose i}\left(1-\frac{1}{i+1}\right)\nn
&=
1-\frac{(1-\pp)^{M}-(1-\pz-\pp)^{M}}{M\pz}
\label{error_uniform}
\end{align}
for ties broken uniformly at random and
\begin{align}
\tilde{q}_M(\pp,\pz)&=
1-(1-p_0-p_+)^{M-1}
\n
\end{align}
for ties regarded as errors.
By following the analysis for \eqref{error_uniform}
in \cite{exact_isit}
we can see that $g_h(u)$ in Prop.~\ref{expansion_first}
is replaced with
\begin{align}
\tilde{g}_h(u)=1-\e^{-\frac{h\eta\e^{h\eta}}{\e^{h\eta}-1}u}\per\n
\end{align}
In the case of $R\le \Rcrit$,
the value $\lim_{u\downarrow0}g_h(u)$ only affects the analysis
and the bound becomes
\begin{align}
\frac{\lim_{u\downarrow 0}\tilde{g}_h(u)}{\lim_{u\downarrow 0}g_h(u)}
=
\frac{2\e^{h/2}}{\e^{h/2}+1}\in [1,2)
\n
\end{align}
times that in Theorem \ref{thm_equal}, that is,
\eqref{bound_equal} is replaced with
\begin{align}
P_{\mathrm{RC}}(n)
=
\begin{cases}
\frac{(1+\so(1))h\e^{h/2}}{(\e^{h/2}-1)\sqrt{2\pi n(\mu_2+\si_{11})}}
\e^{-nE_r(R)},& \!\!\mbox{if $R<\Rcrit$},\\
\frac{(1+\so(1))h\e^{h/2}}{2(\e^{h/2}-1)\sqrt{2\pi n(\mu_2+\si_{11})}}
\e^{-nE_r(R)},& \!\!\mbox{if $R=\Rcrit$},
\end{cases}\n
\end{align}
which reproduces the bound in \cite{gallager_tight}
for $R<\Rcrit$.
For the case $\Rcrit<R<I(X;Y)$,
values
$\psi_{\rho,h,h'}(x)$
and $\psi_{\rho,h}$ in \eqref{defs_constants}
change accordingly to the change of the function $g_{h}(u)$ to $\tilde{g}_h(u)$.
In particular, we can see that
$\psi_{\rho,h}$ is replaced with
\begin{align}
\tilde{\psi}_{\rho,h}&=
\int_{w\in\bbR}\e^{-\rho w}\tilde{g}_{h}(\e^{w})\rd w
=
\Gamma(1-\rho)\left(\frac{h\eta\e^{h/2}}{\e^{h\eta}-1}\right)^{\rho}\com\n
\end{align}
which satisfies
\begin{align}
\frac{\tilde{\psi}_{\rho,h}}{\psi_{\rho,h}}
&=
(1+\rho)
\frac{\e^{h}-\e^{\rho h/(1+\rho)}}{\e^h-1}
\in [1,1+\rho)\subset [1,2)\per\n
\end{align}

Next we consider singular channels.
In this case, the decoding error probability
for uniformly broken ties, which will be given in \eqref{error_sin_uniform}
of Appendix \ref{sect_proof_sin},
changes to
\begin{align}
\tilde{q}_M(\pz)&=
1-(1-p_0)^{M-1}.\n
\end{align}
We can adapt the proofs of Theorems \ref{thm_equal_sin} and \ref{thm_above_sin}
to this change
by simply replacing $\gsin(u)=1-(1-\e^{-u})/u$
with $\tgsin(u)=1-\e^{-u}$.
By this replacement
the bound \eqref{bound_sin_equal} becomes doubled,
that is, we have
\begin{align}
P_{\mathrm{RC}}(n)
=
\begin{cases}
(1+\so(1))
\e^{-nE_r(R)},& \mbox{if $R<\Rcrit$},\\
(1/2+\so(1))
\e^{-nE_r(R)},& \mbox{if $R=\Rcrit$},
\end{cases}\n
\end{align}
since we have
$\lim_{u\to 0}\tgsin(u)/\lim_{u\to 0}\gsin(u)=2$.

For the case $\Rcrit<R<I(X;Y)$,
values
$\psin_{\rho,h'}(x)$
and $\psin_{\rho}$ in \eqref{defs_constants}
change accordingly to the change of the function $\gsin(u)$ to $\tgsin(u)$.
In particular,
$\psin_{\rho}$ is replaced with
\begin{align}
\tpsin_{\rho}&=
\int_{w\in\bbR}\e^{-\rho w}\tgsin(\e^{w})\rd w
=
\frac{\Gamma(1-\rho)}{\rho}\com\n
\end{align}
which satisfies
$\tpsin_{\rho}/\psin_{\rho}
=(1+\rho)
\in [1,2)$.

\section{Proof Outline of Theorem \ref{thm_main}}\label{sec_sketch}

In this section we give a rough derivation of
\eqref{bound_n_l} in Theorem \ref{thm_main},
which is the most difficult part of the results in this paper.
See Appendix \ref{append_main} for the full proof.
We start with the following fact
derived in \cite{exact_isit}.
\begin{proposition}[{\cite[Theorem 1]{exact_isit}}]\label{expansion_first}
For
lattice span $h\ge 0$ of channel W,
arbitrary $\epsilon_1>0$ and sufficiently small $\epsilon_2>0$,
there exists $n_0>0$ such that for all $n\ge n_0$
\begin{align}
\lefteqn{
(1-\ep_1)
\E\!\left[
g_h\left(
(1-\ep_1)\frac{
 \e^{n(\bar{Z}(\eta)+R-(\bar{Z}'(\eta))^2/2(\mu_2-\ep_2))}
}{\eta\sqrt{2\pi n \mu_2}}
\right)
\right]
}\nn
&\le
P_{\mathrm{RC}}(n)\nn
&\le
(1+\ep_1)
\E\!\left[
g_h\left(
(1+\ep_1)\frac{
 \e^{n(\bar{Z}(\eta)+R-(\bar{Z}'(\eta))^2/2(\mu_2+\ep_2))}
}{\eta\sqrt{2\pi n \mu_2}}
\right)
\right]\per
\n
\end{align}
\end{proposition}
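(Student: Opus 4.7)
The proposition is Theorem~1 of \cite{exact_isit}; I outline how one would derive it from scratch. The strategy combines a per-competitor saddle-point/Bahadur--Rao analysis with the exact combinatorial formula for the uniformly-tie-broken error probability derived in Section~\ref{sec_tie}.

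Condition on the sent codeword $\bmx$ and received sequence $\bmy$: the log-likelihood ratio $S=\sum_i \nu(x_i,y_i,X_i')$ of a competing codeword $\bmX'$ is a sum of $n$ independent non-identically distributed variables with cumulant generating function $n\bZ(\la;\bmx,\bmy) = \sum_i \log\E_{X'}[\e^{\la\nu(x_i,y_i,X')}]$. Applying Bahadur--Rao (with the lattice-sum refinement of span $h$ when $W$ is $h$-lattice) at the conditional saddle point $\la^*$ defined by $\bZ'(\la^*;\bmx,\bmy)=0$ gives, for $(\bmx,\bmy)$ in the typical regime,
\begin{align}
p_+(\bmx,\bmy) \approx \frac{\eta h}{\e^{\eta h}-1}\,c_n\,\e^{n\bZ(\la^*)},\quad p_0(\bmx,\bmy) \approx \eta h\,c_n\,\e^{n\bZ(\la^*)},\n
\end{align}
with $c_n = 1/(\eta\sqrt{2\pi n\mu_2})$ and $\eta h/(\e^{\eta h}-1)\to 1$ as $h\to 0$ (nonlattice limit, $p_0=0$). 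A second-order Taylor expansion of $\bZ$ around $\eta$ with $\la^*-\eta\approx-\bZ'(\eta)/\bZ''(\eta)$ produces $n\bZ(\la^*)\approx n\bZ(\eta)-n(\bZ'(\eta))^2/(2\bZ''(\eta))$, and the law of large numbers $\bZ''(\eta)\to\mu_2$ under the $\rho$-tilted measure allows replacing $\bZ''(\eta)$ by $\mu_2\pm\ep_2$ at cost $\so(1)$.

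Substituting these into the exact identity $q_M(p_+,p_0) = 1-\frac{(1-p_+)^M-(1-p_+-p_0)^M}{Mp_0}$ from Section~\ref{sec_tie}, with $M=\e^{nR}$ and $(1-p)^M\to\e^{-Mp}$ whenever $Mp=O(1)$, gives $q_M \approx 1 - \e^{-Mp_+}(1-\e^{-Mp_0})/(Mp_0)$. Under the parametrization $Mp_+ = \eta h\,u/(\e^{\eta h}-1)$ and $Mp_0 = \eta h\,u$ with $u = c_n\,\e^{n(\bZ(\eta)+R-(\bZ'(\eta))^2/(2\mu_2))}$, direct algebra matches this to $g_h(u)$ from the definition in \eqref{defs_constants}. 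Taking expectation over $(\bmX,\bmY)$ and absorbing the Bahadur--Rao, local-CLT, and Taylor-expansion errors into the slack parameters $\ep_1,\ep_2$ yields both one-sided bounds of the proposition.

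The main technical obstacle is proving the Bahadur--Rao estimates uniformly in $(\bmX,\bmY)$: because the per-symbol distribution of $\nu(x_i,y_i,X_i')$ depends on $(x_i,y_i)$, the standard i.i.d.\ Bahadur--Rao does not directly apply and an Edgeworth-type bound for sums of non-identically distributed (possibly lattice) variables must be established. This is handled by restricting to a typical set on which $\bZ''(\eta)$, $|\bZ'(\eta)|$, and $|\bZ(\eta)+R|$ are well-controlled --- an event of $P_\rho$-probability $1-\so(1)$ by Chernoff arguments using the exponential-moment conditions listed in the preliminary assumptions --- and by using the imaginary-direction bounds on $\cumu(\la+\im\xi)$ listed there to control the characteristic function of $S$ uniformly away from $\xi=0$ (and, for lattice channels, from the harmonic $\xi=2\pi/h$). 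The final replacement of $(1-p)^M$ by $\e^{-Mp}$ inside the expectation is routine via $|(1-p)^M-\e^{-Mp}|\le Mp^2$ together with truncation of $(\bmx,\bmy)$ outside the dominant region.
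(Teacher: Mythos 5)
The paper does not reprove Proposition~\ref{expansion_first} --- it is imported verbatim as Theorem~1 of \cite{exact_isit} --- so there is no internal proof here to compare against. Your outline is nonetheless consistent with the structure this paper references and uses elsewhere: the exact tie-breaking formula $q_M(p_+,p_0)$ in \eqref{error_uniform}, the identity $1-\e^{-Mp_+}(1-\e^{-Mp_0})/(Mp_0)=g_h(u)$ under the substitution $Mp_0=h\eta u$ and $Mp_+=\frac{h\eta}{\e^{h\eta}-1}u$ (your algebra here is correct), the conditional saddle-point expansion $n\bar{Z}(\la^*)\approx n\bar{Z}(\eta)-n(\bar{Z}'(\eta))^2/(2\bar{Z}''(\eta))$, and the change of measure to $P_\rho$ that matches the tilting performed in \eqref{tilting}. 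As far as one can tell from the ingredients cited in this paper (in particular Lemmas~8 and~13 of \cite{exact_isit} and the $q_M$ identity), this is the same route.

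That said, what you have written is a plan rather than a proof, and the center of gravity of the argument in \cite{exact_isit} is precisely what you defer to a remark: establishing the conditional Bahadur--Rao estimates for $p_+(\bmx,\bmy)$ and $p_0(\bmx,\bmy)$ \emph{uniformly} over a high-$P_\rho$-probability set of sequences, in both the lattice and nonlattice cases. This requires controlling the non-i.i.d.\ characteristic function of $\sum_i\nu(x_i,y_i,X_i')$ away from $\xi=0$ (and, for $h$-lattice channels, away from the harmonics $2\pi k/h$), which is exactly what the exponential-moment assumptions on $\bibunZ{i}(\la)$ and on $\cumu(\la+\im\xi)$ in the preliminaries are designed to feed. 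Two further points need care that the sketch glosses over: (i) the exceptional set on which $\bar{Z}''(\eta)$, $|\bar{Z}'(\eta)|$, or $|\bar{Z}(\eta)+R|$ misbehaves must contribute $\so(1)$ \emph{relative to} $\e^{-nE_r(R)}n^{-(1+\rho)/2}$, not merely have vanishing $P_\rho$-probability, so the Chernoff bound must beat the polynomial prefactor; and (ii) the one-sided slacks $(1\pm\ep_1)$ and $\mu_2\pm\ep_2$ only absorb the approximation errors because $g_h$ is monotone and the exponent is monotone in $1/\bar{Z}''$ on the relevant range, a monotonicity you use implicitly but should state. Neither of these is an obstruction; the skeleton is right, but a referee would expect the uniform local-limit estimates to occupy most of the written proof.
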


In the following we write $(Z_0,Z_1)$ instead of
$(Z(\eta),Z'(\eta))$ for notational simplicity.
For its empirical mean $(\bZ_0,\bZ_1)$,
we write $(\tZ_0,\tZ_1)=\sqn(\bZ_0-\mu_0,\bZ_1-\mu_1)
=\sqn(\bZ_0+R+\Delta,\bZ_1)$.

We can show Theorem \ref{thm_main} (and Theorem \ref{thm_equal})
by evaluating
\begin{align}
\lefteqn{
\E\!\left[
g_h\left(
\frac{
 \e^{n(\bZ_0+R-\bZ_1^2/2c_1)}
}{c_2\sqrt{n}}
\right)
\right]
}\nn
&=
\e^{-nE_r(R)}
\E_{\rho}\left[
\e^{-n\rho \bZ_0}
g_h\left(
\frac{
 \e^{n(\tZ_0+R-\bZ_1^2/2c_1)}
}{c_2\sqrt{n}}
\right)
\right]\nn
&=
\frac{\e^{-nE_r(R)}}{(c_2\sqn)^{\rho}}
\E_{\rho}\left[
\e^{-\rho \tZ_1^2/2c_1}
g_{\rho,h}\left(
\frac{
 \e^{\sqrt{n}\tZ_0-n\Delta-\tZ_1^2/2c_1}
}{c_2\sqrt{n}}
\right)
\right]\label{tilting}
\end{align}
for fixed $c_1,c_2>0$
and letting $c_1:=\mu_2\pm\ep_2,\,c_2:=\eta\sqrt{2\pi \mu_2}/(1\pm\epsilon_1)$
and finally letting $\ep_1,\ep_2\downarrow 0$.

For evaluation of the expectation in \eqref{tilting}
we use a version of bivariate local limit theorem,
which is obtained by ``blurring'' a standard bivariate local limit theorem
in one direction.
Let $\phi_{\Sigma}$ be the density function of normal distribution
with zero mean and covariance matrix $\Sigma$.
Then the following lemma holds for
random variable $V=(\Vz,\Vo)\in\bbR^2$
with zero mean and covariance matrix $\Sigma$
such that $|\Sigma|>0$.
\begin{lemma}\label{lem_blur}
Fix $\delta>0$ and a sequence $b_n>0$ such that $b_n=\so(\sqn)$
and $\lim_{n\to\infty}\allowbreak b_n=\infty$.
If $\Vz$ has a lattice distribution with span $h$ and offset $a$ then
\begin{align}
\frac{n}{hb_n}\Pr[\sqrt{n}(\tV-v)\in \{0\} \times [0,b_n)]\to \phi_{\Sigma}(v)
\n
\end{align}
as $n\to\infty$ uniformly for
$v\in \{\sqn a + ih/\sqn: i\in\bbZ\}\times \bbR$.
If $\Vz$ does not have a lattice distribution then
\begin{align}
\frac{n}{\delta b_n}\Pr[\sqrt{n}(\tV-v)\in [0,\delta) \times [0,b_n)]\to \phi_{\Sigma}(v)
\n
\end{align}
as $n\to\infty$ uniformly for $v\in \bbR^2$.
\end{lemma}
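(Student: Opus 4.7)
The lemma is an anisotropic (``blurred'') bivariate local limit theorem: sharp in the $V_0$ direction (to a lattice point, or to a window of width $\delta/\sqn$) but only coarse in the $V_1$ direction, over a window of width $b_n/\sqn$ with $b_n\to\infty$ and $b_n=\so(\sqn)$. The growth $b_n\to\infty$ is the key feature that lets one dispense with any Cram\'er-type regularity assumption on $V_1$: even if $V_1$ is itself lattice, $\tilde V_1$ lives on a lattice of spacing $O(1/\sqn)$, and the window $b_n/\sqn$ then contains $\Theta(b_n)\to\infty$ such points, smoothing out the lattice structure, while the smallness $b_n=\so(\sqn)$ ensures that the target density $\phi_\Sigma$ is essentially constant across the window. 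My plan is Fourier inversion on the characteristic function $\phi^n(\xi/\sqn)$ of $\tilde V$, where $\phi$ is the joint characteristic function of $V$.

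Concretely, I would use Fourier inversion to express $p_n(v):=\Pr[\sqn(\tilde V-v)\in A_n]$ as an integral of $\phi^n(\xi/\sqn)\,\e^{-\im\xi\cdot v}$ against the Fourier transform of $\idx{A_n}$. In case 2 this is the standard $\bbR^2$ Parseval inversion; in case 1 the point-mass event $\{\tilde V_0=v_0\}$ is handled by Poisson summation, restricting the $\xi_0$-range to one period $[-\pi\sqn/h,\,\pi\sqn/h]$ and replacing the $V_0$-prefactor $\delta/\sqn$ by $h/\sqn$. In both cases the $V_1$-window contributes an oscillating factor of modulus at most $\min(b_n/\sqn,\,2/|\xi_1|)$ (the second bound being the essential one), and in case 2 the $V_0$-window a factor of modulus at most $\min(\delta/\sqn,\,2/|\xi_0|)$.

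I then split the integration domain into a central part $|\xi|\le\ep\sqn$ and a peripheral part. On the central part, the CLT expansion $\phi^n(\xi/\sqn)\to\e^{-\xi^\top\Sigma\xi/2}$ is uniform on compacts, and dominated convergence isolates the main term: Fourier-inverting the Gaussian gives $\phi_\Sigma(v)$, and the overall scaling matches the $hb_n/n$ (or $\delta b_n/n$) normalization in the lemma. The $\xi_0$-peripheral tail is handled by the (non)lattice assumption on $V_0$, which ensures $|\phi(\xi_0,0)|<1$ strictly away from $0$ (resp.\ from the dual lattice $\{2\pi k/h\}$), so that $|\phi|^n$ decays geometrically in $n$ on that region.

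The main obstacle is the $\xi_1$-peripheral tail, because $|\phi(\xi_0,\xi_1)|$ may equal $1$ along sequences of $\xi_1$ when $V_1$ is itself lattice, so one cannot hope for geometric decay of $|\phi|^n$ there. Here the role of $b_n\to\infty$ is decisive: the inequality $|(1-\e^{-\im\xi_1 b_n/\sqn})/(\im\xi_1)|\le 2/|\xi_1|$, together with the maximum value $b_n/\sqn$ at $\xi_1=0$, means that the $\xi_1$-window factor is effectively supported on $|\xi_1|\lesssim\sqn/b_n=\so(\sqn)$, where the CLT expansion of $\phi^n(\xi/\sqn)$ still applies. Outside this scale, the $1/|\xi_1|$ decay of the window, paired with $|\phi|\le 1$ and either Plancherel (when a density component in $V_1$ is present) or direct Riemann summation over the $V_1$-lattice (when $V_1$ is lattice), bounds the tail by $\so(b_n/n)$ uniformly in $v$. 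Uniformity in $v$ is automatic throughout since $v$ enters only through the unit-modulus phase $\e^{-\im\xi\cdot v}$. The most delicate step I anticipate is the joint $(\xi_0,\xi_1)$-estimate in the peripheral region, since the two marginal bounds do not multiply ($\phi$ does not factor); I would attack this by a dyadic decomposition in $|\xi_1|$ combined with the marginal $\xi_0$-decay bounds applied on each dyadic shell.
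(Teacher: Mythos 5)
Your Fourier-analytic route is genuinely different from the paper's. The paper does not reprove a local limit theorem at all: it cites Doney's bivariate LLT (Prop.~\ref{bivariate}) as a black box, giving a pointwise approximation $\Pr[\tilde V = v]\approx H\phi_\Sigma(v)/n$ (lattice case) or its line/density analogues, and then reduces the anisotropic blurring to an entirely elementary lattice-point / segment-length counting problem in the rectangle $[0,\delta/\sqn)\times[0,b_n/\sqn)$. The single nontrivial sub-case --- $V$ supported on a genuine $2$D lattice whose first marginal $V_0$ is nevertheless \emph{non}lattice, i.e.\ the lattice basis $(h^{(1)},h^{(2)})$ has $\hzo/\hzt\notin\bbQ$ --- is handled by an explicit equidistribution argument (Lemma~\ref{lem_lattice_equi}, Prop.~\ref{equi}). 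That modularity is exactly what makes the paper's proof short. Your proposal instead rebuilds the LLT from characteristic functions, which is more self-contained but also reproduces the hard part internally.

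There is, however, a concrete gap in your peripheral $\xi_1$ estimate, and the way you propose to fill it does not quite work as stated. When the conditional law of $V_1$ given $V_0$ is lattice (equivalently, $V$ is supported on a $2$D lattice $\Lambda$), the raw Parseval integral $\int \phi^n(\cdot)\,\overline{\widehat{\id_{[0,b_n)}}(\xi_1)}\,\rd\xi_1$ does not converge at all: $|\phi|$ is periodic and returns to $1$ on the dual lattice $\Lambda^*$, so $|\phi|^n$ has no decay along infinitely many $\xi_1$-heights, while $|\widehat{\id_{[0,b_n)}}(\xi_1)|\le 2/|\xi_1|$ is not integrable by itself. Your remark that ``$|\phi|\le 1$ paired with the $1/|\xi_1|$ decay bounds the tail by $\so(b_n/n)$'' is therefore false as written, and the dyadic decomposition does not fix it: in each shell $|\xi_1|\sim 2^j$ there are $\Theta(2^j)$ dual-lattice spots where $|\phi|^n$ is order one, each contributing $\Theta(2^{-j}/\sqn)$, so the shells contribute $\Theta(1/\sqn)$ each and the sum over shells still diverges. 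The correct fix is to Poisson-sum in $\xi_1$ \emph{as well} (or equivalently to Poisson-sum over $\Lambda^*$ jointly), replacing the integral by one over a single fundamental domain of $2\pi\Lambda^*$; then $|\phi|<1$ strictly away from the origin on that domain and your central/peripheral split goes through. Your ``direct Riemann summation over the $V_1$-lattice'' is pointing at exactly this, but once you make it precise you are summing the LLT values over lattice points in the rectangle --- which is the paper's approach. In particular the delicate case $\hzo/\hzt\notin\bbQ$, handled in the paper by equidistribution, must resurface in your Fourier picture as a cancellation among the off-origin dual-lattice phases $\e^{-\im n\lambda^*\cdot v}$; your sketch does not identify where that cancellation comes from or why it is uniform in $v$. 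So the strategy is plausible but the crucial step --- the one you yourself flag as ``most delicate'' --- is not yet an argument.
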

This lemma evaluates
the distribution of $\tV=(\tV_0,\tV_1)$
with $\lo(1/\sqn)$ precision in $\tV_0$ direction
and with $\lo(b_n/\sqn)$ precision in $\tV_1$ direction.

\begin{proof}[Proof Sketch of Theorem \ref{thm_main}]
Let $\calZ_{n}=\{(na'+ih')/\sqn: i\in\bbZ\}$.
Then we obtain from the bivariate local limit theorem in
Lemma \ref{lem_blur} that
\begin{align}
\lefteqn{
\E_{\rho}\left[
\e^{-\rho \tZ_1^2/2c_1}
g_{\rho,h}\left(
\frac{
 \e^{\sqrt{n}\tZ_0-\tZ_1^2/2c_1}
}{c_2\sqrt{n}}
\right)
\right]
}\nn
&\approx\!
\sum_{z_0 \in \calZ_{n}}
\!\int_{z_1}\!
\frac{h'\phi_{\Sigma}(z_0,z_1)}{\sqn}
\e^{-\rho z_1^2/2c_1}
g_{\rho,h}\left(
\frac{
 \e^{\sqrt{n}z_0-z_1^2/2c_1}
}{c_2\sqrt{n}}
\right)\!
\rd z_1.\n
\end{align}
Here it holds
from $g_h(u)\le (1+h\eta)(1\land u)$ \cite[Lemma 8]{exact_isit} that
\begin{align}
g_{\rho,h}(\e^w)\le (1+h\eta)(\e^{-\rho w}\land \e^{(1-\rho)w})\com\label{grh}
\end{align}
which
means that
$g_{\rho,h}\left(
\frac{
 \e^{\sqrt{n}z_0-z_1^2/2c_1}
}{c_2\sqrt{n}}
\right)$ decays exponentially for $|z_0|=\Omega(1/\sqn)$ (as far as $z_1=\lo(1)$ holds).
Thus
\begin{align}
\lefteqn{
\sum_{z_0 \in \calZ_{n}}
\int_{z_1}\!
\frac{h'\phi_{\Sigma}(z_0,z_1)}{\sqn}
\e^{-\rho z_1^2/2c_1}
g_{\rho,h}\left(
\frac{
 \e^{\sqrt{n}z_0-z_1^2/2c_1}
}{c_2\sqrt{n}}
\right)
\rd z_1
}\nn
&\approx
\sum_{z_0 \in \calZ_{n}}
\int_{z_1}\!
\frac{h'\phi_{\Sigma}(0,z_1)}{\sqn}
\e^{-\rho z_1^2/2c_1}
g_{\rho,h}\left(
\frac{
 \e^{\sqrt{n}z_0-z_1^2/2c_1}
}{c_2\sqrt{n}}
\right)
\rd z_1\nn
&=
\!\int_{z_1}\!
\frac{\phi_{\Sigma}(0,z_1)}{\sqn}
\e^{-\rho z_1^2/2c_1}
\psi_{\rho,h,h'}(na-z_1^2/2c_1-\log c_2\sqn)
\rd z_1\nn
&=
\int_{z_1}
\frac{\psi(na'-z_1^2/2c_1-\log c_2\sqn)}{2\pi \sqrt{n|\Sigma|}}
\e^{-\pax{\frac{\rho}{c_1}+\frac{\si_{00}}{|\Sigma|}}\frac{z_1^2}{2}}\rd z_1\nn
&=
\frac{
\E_{V}\sqx{
\psi\pax{na'-\frac{|\Sigma|V^2}{2(\sigma_{00}+\rho |\Sigma|/c_1)}-\log c_2\sqn}
}}{\sqrt{2\pi n(\sigma_{00}+\rho|\Sigma|/c_1)}}\per\n
\end{align}
We obtain \eqref{bound_n_l} by combining
Prop.~\ref{expansion_first} with \eqref{tilting}
by letting $c_1:=\mu_2$ and $c_2:=\eta\sqrt{2\pi\mu_2}$.
\end{proof}

\section*{Acknowledgment}
The author thanks Dr.~Junpei Komiyama for discussion on the equidistribution theorem.



\appendices
\section{Bivariate Local Limit Theorem with Anisotropic Resolution}\label{sec_decomp}
In this section we show a version of bivariate local limit theorem
suitable for the proof of Theorem \ref{thm_main}
by ``blurring'' a standard bivariate local limit theorem
in one direction.
Let $\phi_{\Sigma}$ be the density function of normal distribution
with zero mean and covariance matrix $\Sigma$.
The goal of this section is to prove the following lemma for
random variable $V=(\Vz,\Vo)\in\bbR^2$
with zero mean and covariance matrix $\Sigma$
such that $|\Sigma|>0$.
\begin{lemmat}[restated]
Fix $\delta>0$ and a sequence $b_n>0$ such that $b_n=\so(\sqn)$
and $\lim_{n\to\infty}\allowbreak b_n=\infty$.
If $\Vz$ has a lattice distribution with span $h$ and offset $a$ then
\begin{align}
\frac{n}{hb_n}\Pr[\sqrt{n}(\tV-v)\in \{0\} \times [0,b_n)]\to \phi_{\Sigma}(v)\label{lem_lattice}
\end{align}
as $n\to\infty$ uniformly for
$v\in \{\sqn a + ih/\sqn: i\in\bbZ\}\times \bbR$.
If $\Vz$ does not have a lattice distribution then
\begin{align}
\frac{n}{\delta b_n}\Pr[\sqrt{n}(\tV-v)\in [0,\delta) \times [0,b_n)]\to \phi_{\Sigma}(v)
\label{lem_nonlattice}
\end{align}
as $n\to\infty$ uniformly for $v\in \bbR^2$.
\end{lemmat}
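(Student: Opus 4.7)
The argument is a Fourier-inversion bivariate local limit theorem for the centered sum $\sqn\tilde V=S_n-n\mu$ (recall $\mu=0$), arranged to exploit the growing smoothing window of size $b_n/\sqn$ in the $V_1$-direction to remove any dependence on the marginal structure of $V_1$.

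First, I would represent the target probability $P_n(v):=\Pr[\sqn(\tilde V-v)\in A_n]$ via Fourier inversion of the characteristic function $\chi(t)=\E[\e^{\im t\cdot V}]$, with the substitution $s=t\sqn$. In the lattice case ($A_n=\{0\}\times[0,b_n)$), Fourier-series inversion for the discrete marginal of $S_n^{(0)}$ gives
\begin{align}
P_n(v)=\frac{h}{(2\pi)^2\sqn}\int_{[-\pi\sqn/h,\pi\sqn/h)\times\bbR}\e^{-\im s\cdot v}\chi(s/\sqn)^n\kappa_n(s_1)\,\rd s,\n
\end{align}
while in the nonlattice case ($A_n=[0,\delta)\times[0,b_n)$) ordinary Fourier inversion yields
\begin{align}
P_n(v)=\frac{1}{(2\pi)^2}\int_{\bbR^2}\e^{-\im s\cdot v}\chi(s/\sqn)^n\omega_n(s_0)\kappa_n(s_1)\,\rd s,\n
\end{align}
with $\omega_n(s_0)=(1-\e^{-\im s_0\delta/\sqn})/(\im s_0)$ and, in both cases, $\kappa_n(s_1)=(1-\e^{-\im s_1 b_n/\sqn})/(\im s_1)$. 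The key feature is $|\kappa_n(s_1)|\le\min(b_n/\sqn,\,2/|s_1|)$, with crossover at $|s_1|\sim\sqn/b_n$. When $V_1$ itself has a lattice component, so that Fourier inversion on $s_1\in\bbR$ fails to be absolutely convergent, I would first mollify $\id_{[0,b_n)}$ by a smooth bump of width $\so(1/\sqn)$ and verify that the mollification error is $\so(b_n/n)$.

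Second, I would split the $s$-integral into a central zone $|s|\le T_n$, with $T_n\to\infty$ chosen so that $T_nb_n/\sqn\to 0$, and its complement. In the central zone the expansion $\chi(s/\sqn)^n=\exp(-s^\top\Sigma s/2+\so(1))$ holds uniformly, while $\omega_n(s_0)\to\delta/\sqn$ and $\kappa_n(s_1)\to b_n/\sqn$; using $\int_{\bbR^2}\e^{-\im s\cdot v}\e^{-s^\top\Sigma s/2}\,\rd s=(2\pi)^2\phi_\Sigma(v)$, the central zone contributes $(hb_n/n)\phi_\Sigma(v)$ in the lattice case and $(\delta b_n/n)\phi_\Sigma(v)$ in the nonlattice case, matching the claims. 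Outside the central zone I would use $|\chi(u)|\le 1-c_0|u|^2$ on a neighbourhood of $0$ and $|\chi(u)|\le 1-c_1<1$ on the rest of the fundamental domain (both valid under $|\Sigma|>0$ and the lattice/nonlattice hypothesis on $V_0$), combined with $|\kappa_n|\le 2/|s_1|$, to bound the tail by $\so(b_n/n)$. Uniformity in $v$ is immediate since $v$ enters only through the unit-modulus factor $\e^{-\im s\cdot v}$, so dominated convergence delivers $\phi_\Sigma(v)$ uniformly.

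The main obstacle is the intermediate range $\sqn/b_n\lesssim|s_1|\lesssim\sqn$, which becomes delicate when $V_1$ itself has a lattice component of span $h_1$: then $|\chi(0,s_1/\sqn)|^n$ retains full modulus at the resonance points $s_k^\star=2\pi k\sqn/h_1$, so a global Gaussian bound on $|\chi|^n$ fails. The resolution is that at each resonance $|s_k^\star|\sim k\sqn$, so the window bound $|\kappa_n|\le 2/|s_k^\star|$ is already a factor $\sim 1/(kb_n)$ smaller than the central value $b_n/\sqn$; after the prior mollification the resonance sum becomes absolutely convergent and contributes $\so(b_n/n)$ in total. This is exactly where the hypothesis $b_n\to\infty$ is essential: with bounded $b_n$ the resonances of a lattice $V_1$ would survive and the limit would depend on the fine lattice structure of $V_1$, not just on $\Sigma$.
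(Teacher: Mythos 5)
Your proposal is a genuinely different route from the paper's. The paper does not go through a direct Fourier-inversion argument at all: it invokes Doney's bivariate local limit theorem (Prop.~\ref{bivariate}) as a black box, which already handles the strongly nonlattice, lattice--nonlattice, and full-lattice cases separately, and then reduces the ``blurring'' step to a purely geometric lattice-counting problem — showing that the number of points of the shifted lattice $L_n-v$ inside the rectangle $[0,\delta)\times[0,b_n)$ (or the total length of the parallel segments) converges to $\delta b_n/H$ uniformly in $v$. The only nontrivial subcase there is when $(V_0,V_1)$ has a full 2D lattice distribution while $V_0$'s marginal is nonlattice (irrational slope), and the paper resolves this with a uniform-in-shift version of the equidistribution theorem (Lemma~\ref{lem_lattice_equi} and Prop.~\ref{equi}). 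Your Fourier approach is more self-contained (no black-box LLT, no equidistribution theorem) and makes the role of $b_n\to\infty$ transparent as a resonance-suppression mechanism; the paper's approach is shorter because it pushes all the Fourier analysis into the cited theorem of Doney and isolates the new phenomenon (anisotropic blurring) as a clean counting statement.

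Two places in your sketch that deserve more care. First, a mollifier of width $\so(1/\sqn)$ in the $\tilde V$-scale (i.e., $\so(1)$ in the $\sqn\tilde V$-scale) is not the natural choice: you want width $\Theta(1/\sqn)$ (a fixed constant $c<h$ in the $\sqn\tilde V$-scale), so that (i) lattice points of $\sqn\tilde V_0$ are not merged and (ii) the mollifier's Fourier transform cuts off all but finitely many dual-lattice resonances; with width $\so(1/\sqn)$ the cutoff frequency drifts to infinity and you would have to re-sum an unboundedly growing number of resonances, which only yields $\lo(\log(\cdot)/n)$ and is not obviously $\so(b_n/n)$ when $b_n$ grows very slowly. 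Second, in the nonlattice case you integrate $s_0$ over all of $\bbR$, and the relevant resonances are at the full dual lattice of $(V_0,V_1)$; the worry is a dual point $t^*$ with $|t^*_1|$ small (hence $|\hat{\id}_{[0,b_n)}(t^*_1)|$ potentially as large as $b_n$) but this cannot happen: $V_0$ nonlattice forces $t^*_1\neq 0$ for every nonzero dual point, only finitely many dual points survive the two-sided mollification cutoff, and $b_n\to\infty$ ensures $|\hat{\id}_{[0,b_n)}(t^*_1)|\le 2/|t^*_1|=\lo(1)$ for each of them — which is precisely where the hypothesis $b_n\to\infty$ enters, as you correctly note. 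Spelling out these two points would make the sketch a complete alternative proof.
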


We show this lemma
based on
Prop.~\ref{bivariate} given below.%

\begin{proposition}[{Bivariate Local Limit Theorem\footnote{%
Adapted from the original version in \cite{local_bivariate}
for the case that $\Sigma$ is the identify matrix
and $h^{(1)}$ and $h^{(2)}$ are unit vectors.}
\cite[Theorems 1--3]{local_bivariate}}]\label{bivariate}
Let $h^{(1)},h^{(2)}\in \bbR^2$ be linearly independent vectors.
If $V$ has a strongly nonlattice distribution then
\begin{align}
&
n\Pr[\sqn (\tV-v)\in [0,\delta_0)\times [0,\delta_1)]
\to \delta_0\delta_1\phi_{\Sigma}(v)\label{prop_nonlattice}
\end{align}
as $n\to\infty$ uniformly for $v\in \bbR^2$
and $\delta_0,\delta_1$ in a compact subset of $(0,\infty)$.
If $V$ has a lattice-nonlattice distribution over $L'=\{a+i h^{(1)}+th^{(2)}:i\in\bbZ,\,t\in\bbR\}$ then
\begin{align}
\frac{n}{H}\Pr[\sqn(\tV-v)\in \{th^{(2)}: t\in [0,\delta)\}]-\delta\phi_{\Sigma}(v)
\!\to\! 0\label{ln_local}
\end{align}
as $n\to\infty$ uniformly for $v\in \{(na+i h^{(1)}+th^{(2)})/\sqrt{n}:i\in\bbZ,\,t\in\bbR\}$
and $\delta$ in a compact subset of $(0,\infty)$,
where $H=|\det(h^{(1)},h^{(2)})|$.
If $V$ has a lattice distribution over $L=\{a+i h^{(1)}+jh^{(2)}:i,j\in\bbZ\}$ then
\begin{align}
\frac{n}{H}\Pr[\tV=v]-\phi_{\Sigma}(v)\to 0\label{lattice_local}
\end{align}
as $n\to\infty$ uniformly for $v\in \{(na+i h^{(1)}+jh^{(2)})/\sqrt{n}:i,j\in\bbZ\}$.
\end{proposition}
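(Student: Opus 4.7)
The plan is to prove all three cases through the standard Fourier-inversion (characteristic function) approach for local limit theorems, presenting them in the order lattice $\to$ lattice-nonlattice $\to$ strongly nonlattice so that the Fourier arguments become progressively more delicate. Let $\phi(\xi)=\E[\e^{\im\langle\xi,V\rangle}]$ denote the characteristic function of $V$ and $\phi_n(\xi)=\phi(\xi/\sqn)^n$ that of $\tV$. In every regime the behaviour near $\xi=0$ is identical: the Taylor expansion $\log\phi(\xi/\sqn)=-\langle\xi,\Sigma\xi\rangle/(2n)+\so(1/n)$ uniformly on compacts yields $\phi_n(\xi)\to\e^{-\langle\xi,\Sigma\xi\rangle/2}$, so integrating $\phi_n(\xi)\e^{-\im\langle v,\xi\rangle}$ over any fixed neighbourhood of the origin produces the Gaussian factor $\phi_{\Sigma}(v)$. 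What differs between the three regimes is only the geometry of the truncation of the Fourier domain and the size of $|\phi|$ away from the origin.

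For the pure lattice case \eqref{lattice_local} I would invoke Poisson summation over the dual lattice $L^{\ast}$ of $L$, so that $\Pr[\tV=v]=\tfrac{H}{n(2\pi)^{2}}\int_{F}\phi_n(\xi)\e^{-\im\sqn\langle v,\xi\rangle}\rd\xi$ with $F$ a fundamental cell of $L^{\ast}$ containing the origin. The maximality of the span $(h^{(1)},h^{(2)})$ forces $|\phi(\xi)|<1$ strictly on $F\setminus\{0\}$, so the complement of a small ball $B_\epsilon$ around the origin contributes exponentially small mass while $B_\epsilon$ itself yields the Gaussian limit by the Taylor argument above. Uniformity in $v$ is automatic because $v$ enters only through the unit-modulus oscillatory factor. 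For the lattice-nonlattice case \eqref{ln_local} the support $L'$ is a union of parallel lines, so one-dimensional Poisson summation in the $h^{(1)}$ direction reduces the probability to a sum of contour integrals indexed by the one-dimensional dual lattice transverse to $h^{(2)}$; the same near-origin/far-origin split then applies, the extremality of the chosen pair $(h^{(1)},h^{(2)})$ ensuring $|\phi|<1$ strictly off that one-dimensional dual lattice.

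For the strongly nonlattice case \eqref{prop_nonlattice} one has $|\phi(\xi)|<1$ for every $\xi\neq 0$, but $V$ is no longer supported on a lattice and the Fourier transform of the box indicator is not integrable. I would resolve this by convolving the indicator of $[0,\delta_0)\times[0,\delta_1)$ with a compactly supported bump at scale $\epsilon$ whose Fourier transform is integrable, compute the smoothed probability by Fourier inversion, show that the smoothing introduces only $O(\epsilon)$ error uniformly in $v$, and let $\epsilon\downarrow 0$ after $n\to\infty$. The principal obstacle throughout is the uniform-in-$v$ control of the far-Fourier region: for the strongly nonlattice case this requires the uniform bound $\sup_{\epsilon\le|\xi|\le K}|\phi(\xi)|<1$ on each compact shell (standard, since $V$ is not supported on any proper closed coset of $\bbR^{2}$) together with a tail estimate on $|\xi|\ge K\sqn$ that only becomes available after the smoothing step. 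Uniformity in $(\delta_0,\delta_1)$ on compacts is inherited from the fact that these parameters enter the smoothed Fourier weight in a uniformly bounded way, and the analogous uniformity in $\delta$ in the lattice-nonlattice case follows by the same mechanism.
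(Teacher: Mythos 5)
The paper does not prove Proposition~\ref{bivariate} at all: it is imported verbatim (up to a change of coordinates, as the footnote states) from Doney's bivariate local limit theorem \cite[Theorems 1--3]{local_bivariate}, and the paper's own work starts only afterwards, with the ``blurring'' argument of Lemma~\ref{lem_blur}. So there is no internal proof to compare yours against; what you have written is a sketch of the external result itself. Your sketch follows the standard characteristic-function route (Fourier inversion over the dual lattice, near-origin Gaussian contribution via the Taylor expansion of $\log\phi$, strict decay of $|\phi|$ away from the dual lattice by maximality of the span, and Stone-type smoothing in the strongly nonlattice case), which is essentially the method of the cited reference, so the overall architecture is sound.

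Two points deserve tightening. First, in the lattice-nonlattice case the event $\{th^{(2)}:t\in[0,\delta)\}$ is an interval in the continuous direction, whose one-dimensional Fourier transform decays only like $1/|\xi|$ and is not integrable; the same smoothing (or an Esseen-type smoothing inequality) that you invoke for the strongly nonlattice case is therefore already needed here, and your sketch deploys it only in the last case. Second, the phrase ``integrating over any fixed neighbourhood of the origin produces the Gaussian factor'' elides the actual technical step: after rescaling, the inversion integral runs over a region of radius $\epsilon\sqn$ growing with $n$, and one needs a uniform Gaussian domination $|\phi(\xi/\sqn)|^{n}\le\e^{-c\Vert\xi\Vert^{2}}$ valid on all of $\{\Vert\xi\Vert\le\epsilon\sqn\}$ (obtained from the second-order Taylor expansion for $\epsilon$ small) before dominated convergence applies; a fixed neighbourhood alone does not suffice. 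Neither issue changes the strategy, and your claim that strong nonlatticeness forces $\sup_{\epsilon\le\Vert\xi\Vert\le K}|\phi(\xi)|<1$ is correct, since $|\phi(\xi_0)|=1$ for $\xi_0\neq0$ would place $V$ on equally spaced parallel lines, contradicting the definition.
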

\begin{proof}[Proof of Lemma \ref{lem_blur}]
If $V$ has a strongly nonlattice distribution
then \eqref{lem_nonlattice} is straightforward from
\eqref{prop_nonlattice}
and we consider the other case that
$V$ has a lattice distribution
on $L=\{a+ih^{(1)}+jh^{(2)}: i,j\in\bbZ\}$
or a lattice-nonlattice distribution
on $L=\{a+ih^{(1)}+th^{(2)}: i\in\bbZ,t\in\bbR\}$.
We define
$L_n=\{na+ih^{(1)}+jh^{(2)}: i,j\in\bbZ\}$
and $L_n=\{na+ih^{(1)}+th^{(2)}: i\in\bbZ,t\in\bbR\}$
for these cases, where we assume without loss of generality
that $\Vert h^{(2)}\Vert=1$ for the latter case.

Define
$\mathrm{Vol}(S)$ as the total lengths of
lines if $S$ is a subset of parallel lines
and as the total number of points in $S$
if $S$ is a subset of a lattice.
Then, it suffices to show from \eqref{ln_local} and \eqref{lattice_local} that
\begin{align}
\frac{H}{\delta b_n}\mathrm{Vol}((L_n-v)\cap [0,\delta)\times [0,b_n))\to 1\label{vol1}
\end{align}
as $n\to\infty$ uniformly for $v\in\bbR$
if $\Vz$ does not have a lattice distribution and
\begin{align}
\frac{H}{h'b_n}\mathrm{Vol}((L_n-v)\cap \{0\}\times [0,b_n))\to 1\n
\end{align}
as $n\to\infty$ uniformly for $v\in \{na'+ih':i\in\bbZ\}$
if $\Vz$ has a lattice distribution with span $h'$ and offset $a'$.
These relations are trivial except for the case that
$V=(\Vz,\Vo)$ has a lattice distribution
and $\Vz$ has a nonlattice distribution,
that is,
$V$ is distributed over
a lattice spanned by $h^{(1)}=(\hzo,\hoo)$ and $h^{(2)}=(\hzt,\hot)$
such that $\hzo/\hzt\notin \bbQ$.
We assume $\hzo>0$ without loss of generality.

In this case
it is necessary to 
evaluate the total number of lattice points in
a rectangle to show \eqref{vol1}.
This number is expressed as
\begin{align}
\lefteqn{
\mathrm{Vol}((L_n-v)\cap [0,\delta)\times [0,b_n))
}\nn
&=
|(L_n-v)\cap [0,\delta)\times [0,b_n)|\nn
&=
\sum_{m=1}^{\lceil \delta/x_1\rceil}
|(L_n-v-((m-1)\delta',0))\cap [0,\delta')\times [0,b_n)|\label{deltap}
\end{align}
for $\delta'=\delta/\lceil \delta/\hzo\rceil<\hzo$.
We can bound each term in \eqref{deltap} by
Lemma \ref{lem_lattice_equi} below, which conclude the proof.
\end{proof}

\begin{lemma}\label{lem_lattice_equi}
Define a rectangle region
$R_n=[0,\delta)\times [0,b_n)$
and
a lattice
$L(a)=\{a+ih^{(1)}+jh^{(2)}:i,j\in\bbZ\}$
for $\hzo>0,\,\hzo/\hzt\notin \bbQ$.
For any $b_n$ such that $\lim_{n\to\infty}b_n=\infty$
and fixed $\delta<\hzo$,
\begin{align}
\frac{|L(a)\cap R_n|}{b_n}
\to \frac{\delta}{H}\n
\end{align}
as $n\to\infty$
uniformly for $a\in\bbR^2$.
\end{lemma}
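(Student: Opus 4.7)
The plan is to reduce this two-dimensional lattice-point count to a one-dimensional equidistribution problem, exploiting the irrationality of $\alpha := \hzt/\hzo$ via Weyl's theorem. Throughout I would assume, without loss of generality, that $\hzo\hot-\hzt\hoo=H>0$.

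First, since $\delta < \hzo$, for each $j\in\bbZ$ there is at most one $i\in\bbZ$ for which the first coordinate $a_0+i\hzo+j\hzt$ of $a+ih^{(1)}+jh^{(2)}$ lies in $[0,\delta)$. Setting $\beta=a_0/\hzo$, such an $i$ exists if and only if the fractional part $\{-\beta-j\alpha\}$ lies in an arc of length $\delta/\hzo$ on the torus $\bbR/\bbZ$, in which case one may write $i = -(a_0+j\hzt)/\hzo+r_j$ with a remainder $r_j\in[0,\delta/\hzo)$ determined by that fractional part.

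Substituting this $i$ into the second coordinate and using $\hzo\hot-\hzt\hoo=H$, the second coordinate collapses to $\gamma + r_j\hoo + jH/\hzo$, where $\gamma=a_1-a_0\hoo/\hzo$ depends only on $a$. Therefore the second-coordinate constraint $\in[0,b_n)$ selects $j$ from an interval of length $b_n\hzo/H$ whose endpoints depend on $j$ only through the bounded perturbation $r_j\hoo$; this perturbation affects the count by at most $O(1)$. Putting the two conditions together, $|L(a)\cap R_n|$ equals, up to an $O(1)$ boundary error, the number of integers $j$ in an interval of length $N_n:=b_n\hzo/H$ for which $\{-\beta-j\alpha\}$ lies in a fixed arc of length $\delta/\hzo$. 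Since $\alpha$ is irrational, a quantitative Weyl equidistribution theorem (for instance via the Erd\H{o}s--Tur\'an inequality) yields this count as $N_n\cdot(\delta/\hzo)+\so(N_n)=b_n\delta/H+\so(b_n)$ as $n\to\infty$, which after division by $b_n$ gives the stated limit.

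The main obstacle will be establishing uniformity in $a\in\bbR^2$. The dependence on $a_1$ enters only through $\gamma$, which shifts the starting point of the $j$-interval; uniformity here is automatic because the discrepancy of the shifted sequence $\{j\alpha\}_{j=j_0}^{j_0+N}$ does not depend on $j_0$. The dependence on $a_0$ enters through the phase $\beta$ in the fractional-part condition, and the classical Weyl/Erd\H{o}s--Tur\'an discrepancy bound for the sequence $\{j\alpha+\beta\}$ is again independent of $\beta$, depending only on $\alpha$ and $N$. Combined with $N_n\to\infty$ (since $b_n\to\infty$), the $\so(N_n)$ error becomes $\so(b_n)$ uniformly in $a$, completing the argument.
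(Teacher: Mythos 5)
Your proof is correct and takes essentially the same approach as the paper: both reduce the two-dimensional lattice-point count to a one-dimensional equidistribution problem over the second index $j$, using $\delta < \hzo$ to guarantee at most one admissible $i$ per $j$, showing that the second-coordinate constraint restricts $j$ to an interval of length $b_n\hzo/H + \lo(1)$, and then invoking a uniform version of the equidistribution theorem on the resulting arc condition for $\{\beta + j\alpha\}$. The only cosmetic difference is the equidistribution tool: you appeal to the Erd\H{o}s--Tur\'an discrepancy inequality, while the paper states and uses a worst-case strengthening of the equidistribution theorem (Proposition~3); both deliver the required uniformity in the phase and the starting index, and hence in $a$.
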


This lemma intuitively means that the lattice $L(a)$ spanned by
$(h^{(1)},\,h^{(2)})$
contains roughly $\delta b_n/\det (h^{(1)},\,h^{(2)})$
lattice points in a rectangle with size $\delta\times b_n$.
This is intuitively obvious
and the formal proof
of Lemma \ref{lem_blur} is obtained
from the
following proposition.
\begin{proposition}[Equidistribution Theorem]\label{equi}
For any irrational number $\alpha$ and $\delta>0$
it holds that
\begin{align}
\lefteqn{
\limsup_{n\to\infty} \sup_{x\in \bbR}
\frac1n\sum_{i=1}^n
\idx{((x+i\alpha) \mod 1) \in [0,\delta)}
}\nn
&=
\liminf_{n\to\infty} \inf_{x\in \bbR}
\frac1n\sum_{i=1}^n
\idx{((x+i\alpha) \mod 1) \in [0,\delta)}=\delta\per\n
\end{align}
\end{proposition}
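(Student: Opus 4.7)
The plan is to establish the uniform equidistribution via a Fourier-analytic argument, reducing it to the easy uniform bound on exponential sums. Without loss of generality assume $0<\delta<1$. For any nonzero integer $k$, summing the geometric series yields
\begin{equation*}
\left|\frac{1}{n}\sum_{i=1}^n \e^{2\pi\im k(x+i\alpha)}\right|
=\frac{1}{n}\left|\frac{\e^{2\pi\im k\alpha}(1-\e^{2\pi\im kn\alpha})}{1-\e^{2\pi\im k\alpha}}\right|
\le \frac{1}{n|\sin(\pi k\alpha)|},
\end{equation*}
which tends to $0$ uniformly in $x$, since $k\alpha\notin\bbZ$ because $\alpha$ is irrational. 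By linearity, for any trigonometric polynomial $P(t)=\sum_{|k|\le K}c_k\e^{2\pi\im kt}$, the average $\frac{1}{n}\sum_{i=1}^n P((x+i\alpha)\bmod 1)$ converges to $c_0=\int_0^1 P(t)\,\rd t$ uniformly in $x$.

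Next I would sandwich the $1$-periodic indicator function $\mathbf{1}_{[0,\delta)}$ between two trigonometric polynomials whose $L^1$ gap is small. Given $\ep>0$, first pick continuous $1$-periodic functions $f^\pm$ with $f^-\le \mathbf{1}_{[0,\delta)}\le f^+$ and $\int_0^1(f^+-f^-)\,\rd t<\ep/2$, for instance piecewise-linear trapezoids smoothing each jump over width $\ep/4$. Weierstrass's trigonometric approximation theorem supplies polynomials $P^\pm$ with $\Vert P^\pm - f^\pm\Vert_\infty<\ep/4$, and then $\tilde P^-:=P^--\ep/4$ and $\tilde P^+:=P^++\ep/4$ are trigonometric polynomials with $\tilde P^-\le \mathbf{1}_{[0,\delta)}\le \tilde P^+$ and $\int_0^1(\tilde P^+-\tilde P^-)\,\rd t<\ep$.

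Combining the two ingredients, applying the first step to $\tilde P^\pm$ gives, for all sufficiently large $n$ and every $x\in\bbR$,
\begin{equation*}
\delta-2\ep \;\le\; \int_0^1 \tilde P^-\,\rd t - \ep \;\le\; \frac{1}{n}\sum_{i=1}^n \mathbf{1}_{[0,\delta)}((x+i\alpha)\bmod 1) \;\le\; \int_0^1 \tilde P^+\,\rd t + \ep \;\le\; \delta+2\ep.
\end{equation*}
Taking $\sup_x$ (resp.\ $\inf_x$), then $\limsup_n$ (resp.\ $\liminf_n$), and finally letting $\ep\downarrow 0$ yields both halves of the stated equality.

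The main obstacle is the sandwich construction: converting uniform polynomial approximation (routine for continuous $1$-periodic functions) into a one-sided $L^1$ approximation of a discontinuous indicator requires the trapezoidal smoothing step so that the $\pm \ep/4$ slack from Weierstrass can absorb the jumps. Everything else is routine, since the $x$-dependence in the exponential sum enters only through a modulus-$1$ factor, making the underlying estimate automatically uniform in $x$.
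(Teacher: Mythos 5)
Your proof is correct, and it takes a genuinely different route from the one the paper points to. The paper does not spell out a proof; it simply notes the statement is a uniform-in-$x$ strengthening of the classical equidistribution theorem and refers the reader to Speyer's \emph{elementary} (combinatorial, pigeonhole-based) MathOverflow argument, which avoids Fourier analysis entirely. You instead go through the Weyl criterion: the decisive observation — that $x$ enters the exponential sum $\sum_{i=1}^n \e^{2\pi\im k(x+i\alpha)}$ only through the unimodular prefactor $\e^{2\pi\im kx}$, so the geometric-series bound $1/(n|\sin(\pi k\alpha)|)$ is automatically uniform in $x$ — is exactly what makes the worst-case version fall out for free from this method, whereas in the combinatorial approach one has to track the offset explicitly. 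The sandwiching of $\idx{\cdot\in[0,\delta)}$ by trigonometric polynomials via trapezoidal smoothing plus Weierstrass is the standard way to convert Weyl's criterion into equidistribution of an interval, and you correctly build one-sided approximants $\tilde P^\pm$. One minor bookkeeping slip: with $\int_0^1(f^+-f^-)<\ep/2$ and $\Vert P^\pm-f^\pm\Vert_\infty<\ep/4$, the gap $\int_0^1(\tilde P^+-\tilde P^-)$ is bounded by $3\ep/2$, not $\ep$; this is immaterial since you send $\ep\downarrow 0$ at the end, but the constants in the displayed final inequality should be adjusted accordingly (or tighten the intermediate tolerances). Trade-offs: your Fourier route is shorter and conceptually transparent about where uniformity comes from; the paper's elementary route is more self-contained (no Weierstrass approximation, no Fourier) and is arguably closer in spirit to the lattice-counting use to which the proposition is put in Lemma~\ref{lem_lattice_equi}.
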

This proposition is slightly tighter than
the well-known equidistribution theorem
since the worst-case on $x$ is considered.
We can confirm that the proposition is valid
by following the elementary proof of the equidistribution theorem
in \cite{equidistribution}.

\newcommand{\sx}[1]{x^{(#1)}}
\newcommand{\sy}[1]{y^{(#1)}}
\newcommand{\sz}[1]{z^{(#1)}}
\newcommand{\sS}[1]{S^{(#1)}}
\newcommand{\sL}[1]{L^{(#1)}}
\newcommand{\Jt}[1]{\calJ^{(#1)}}
\newcommand{\uj}[1]{\underline{j}^{(#1)}}
\newcommand{\oj}[1]{\overline{j}^{(#1)}}



\begin{proof}[Proof of Lemma \ref{lem_lattice_equi}]
Define a set of parallel segments $L'(a)=\{L'_j(a): L(a)\cap R_n\neq \emptyset\}$ for 
segment $L_j'(a)=\{a+th^{(1)}+jh^{(2)}:t\in\bbR,j\in\bbZ\} \cap ([0,\delta)\times \bbR)$.
Note that
\begin{align}
\lefteqn{
L_j'(a)
}\nn
&=\brx{a+th^{(1)}+j\pax{\hot-\frac{\hoo \hzt}{\hzo}}\cdot (0,1): t\in\bbR,j\in\bbZ}\nn
&\qquad \cap ([0,\delta)\times \bbR)\per\label{intersect}
\end{align}
Therefore the number of segments $L'_j(a) \in L'(a)$ that intersect with
$[0,\delta)\times \{0\}$ or $[0,\delta)\times \{b_n\}$
is at most
\begin{align}
2\cdot
\left\lceil \delta
\left|
\frac{\hoo}{\hzo}
\Bigg/
\pax{\hot-\frac{\hoo \hzt}{\hzo}}\right|
\right\rceil
=\lo(1)\label{latticepoints}
\end{align}
and we have
\begin{align}
\frac{|L(a)\cap R_n|}{b_n}
&=
\frac{|L(a)\cap L'(a)|+\lo(1)}{b_n}\per\n
\end{align}

Let $\calJ=\{j_0,j_0+1,\cdots,j_1\}$ be the set of indices
of $L_j(a)$ in $L(a)$, that is,
$\calJ$ be such that
$L'(a)=\{L_j'(a):j\in\calJ\}$.
Since
$|L_j'(a) \cap L(a)|\le 1$ holds from $\delta<\hzo$,
we have
\begin{align}
\lefteqn{
|L(a)\cap L'(a)|
}\nn
&=
\sum_{j=j_0}^{j_1}
|L(a)\cap L_j'(a)|\nn
&=
\sum_{j=j_0}^{j_1}
\idx{\exists i\in\bbZ,\,a^{(0)}+i \hzo+j \hzt\in [0,\delta)}
\nn
&=
\sum_{j=j_0}^{j_1}
\idx{((a^{(0)}+j \hzt) \!\!\!\mod{\hzo})\in [0,\delta)}
\nn
&=
(j_1-j_0+1)
\left(
\frac{\delta}{\hzo}+\so(1)\right)\label{apply_equi}
\end{align}
uniformly for $a$ from the equidistribution theorem
in Prop.~\ref{equi}.
From \eqref{intersect}
we have
\begin{align}
(j_1-j_0+1)
&=
|\calJ|=
\frac{b_n+\lo(1)}{\hot-\frac{\hoo \hzt}{\hzo}}.\label{cardi}
\end{align}
Putting \eqref{latticepoints}, \eqref{apply_equi} and \eqref{cardi}
together we have
\begin{align}
\frac{|L(a)-R_n|}{b_n}
&=
\frac{\delta}{\hzo \hot-\hoo \hzt}
+\lo(b_n^{-1})\nn
&=
\frac{\delta}{H}
+\lo(b_n^{-1})\com\n
\end{align}
which concludes the proof.
\end{proof}

\section{Proof of Theorem \ref{thm_equal}}\label{sect_equal}

Since $\rho=1$ and $\eta=1/(1+\rho)=1/2$ for the case of this theorem,
it suffices to show
\begin{align}
\lefteqn{
\E_{\rho}\left[
\e^{-\rho \tZ_1^2/2c_1}
g_{\rho,h}\left(
\frac{
 \e^{\sqrt{n}\tZ_0-n\Delta-\tZ_1^2/2c_1}
}{c_2\sqrt{n}}
\right)
\right]
}\nn
&=
\begin{cases}
\frac{(1+\so(1))h(\e^{h/2}-1)}{4(\e^{h/2}-1)\sqrt{(1+\si_{11}/c_1)}}& \mbox{if $R<\Rcrit$},\\
\frac{(1+\so(1))h(\e^{h/2}-1)}{8(\e^{h/2}-1)\sqrt{(1+\si_{11}/c_1)}}& \mbox{if $R=\Rcrit$}.
\end{cases}\label{suffice_th1}
\end{align}
from discussion around \eqref{tilting}.
Recall that $\Delta>0$ if $R<\Rcrit$ and $\Delta=0$ if $R=\Rcrit$.

Let $W=\tZ_0-(\tZ_1^2/2c_1-c_2\log \sqn)/\sqn$.
Then we have
\begin{align}
\lefteqn{
\E_{\rho}\left[
\e^{-\rho \tZ_1^2/2c_1}
g_{\rho,h}\left(
\frac{
 \e^{\sqrt{n}\tZ_0-n\Delta-\tZ_1^2/2c_1}
}{c_2\sqrt{n}}
\right)
\right]
}\nn
&=
\E_{\rho}\left[
\e^{-\tZ_1^2/2c_1}
g_{1,h}\left(
\e^{\sqn W-n\De}\right)
\right]\!.\n
\end{align}

First we consider the case $R<\Rcrit$.
Since $g_{1,h}(u)\le 1+h/2$
from \eqref{grh}, we have
\begin{align}
&\E_{\rho}\left[
\idx{|\tZ_1|>n^{1/6}}
\e^{-\tZ_1^2/2c_1}
g_{1,h}\left(
\e^{\sqn W-n\Delta}\right)
\right]
=
\e^{-\Omega(n^{1/3})}\per\label{lcrit1}
\end{align}
and
\begin{align}
&
\E_{\rho}\left[
\idx{\tZ_0>n^{1/3}}
\e^{-\tZ_1^2/2c_1}
g_{1,h}\left(
\e^{\sqn W-n\Delta}\right)
\right]
\nn
&
=
\lo(P_{\rho}[\tZ_0>n^{1/3}])\nn
&=\so(1)\per\label{lcrit2}
\end{align}
For the remaining case
we obtain from
$\lim_{u\downarrow 0}g_{1,h}(u)=h(\e^{h/2}+1)/4(\e^{h/2}-1)$ that
\begin{align}
\lefteqn{
\!\!\!\!
\E_{\rho}\left[
\idx{|\tZ_1|\le n^{1/6},\tZ_0\le n^{-1/3}}
\e^{-\tZ_1^2/2c_1}
g_{1,h}\left(
\e^{\sqn W-n\Delta\!}\right)
\right]
}
\nn
&=
\frac{(1+\so(1))h(\e^{h/2}+1)}{4(\e^{h/2}-1)}\nn
&\qquad\qquad\times\E_{\rho}\left[
\idx{|\tZ_1|\le n^{1/6},\tZ_0\le n^{1/3}}
\e^{-\tZ_1^2/2c_1}
\right]\nn
&=
\frac{(1+\so(1))h(\e^{h/2}+1)}{4(\e^{h/2}-1)}
\E_{\rho}\left[
\e^{-\tZ_1^2/2c_1}
\right]\nn
&=
\frac{(1+\so(1))h(\e^{h/2}+1)}{4(\e^{h/2}-1)\sqrt{1+\si_{11}/c_1}}
\com\n
\end{align}
which proves \eqref{suffice_th1}.

Next we consider the case $R=\Rcrit$, where we have $\Delta=0$.
In this case we still have \eqref{lcrit1},
and instead of \eqref{lcrit2} we have
\begin{align}
&\E_{\rho}\left[
\idx{|\tZ_1|\le n^{1/6},|\tZ_0|\le n^{-1/7}}
\e^{-\tZ_1^2/2c_1}
g_{1,h}\left(
\e^{\sqn W}\right)
\right]
\nn
&
=\lo\pax{
P_{\rho}\left[
|\tZ_0|\le n^{-1/7}
\right]
}\nn
&=\so(1)
\n
\end{align}
and
\begin{align}
&\E_{\rho}\left[
\idx{|\tZ_1|\le n^{1/6},\tZ_0> n^{-1/7}}
\e^{-\tZ_1^2/2c_1}
g_{1,h}\left(
\e^{\sqn W}\right)
\right]
\nn
&=
\so(1)
\n
\end{align}
from $\lim_{u\to\infty}g_{1,h}(u)=0$.

For the remaining case
we obtain from
$\lim_{u\downarrow 0}g_{1,h}(u)=h(\e^{h/2}+1)/4(\e^{h/2}-1)$ that
\begin{align}
\lefteqn{
\E_{\rho}\left[
\idx{|\tZ_1|\le n^{1/6},\tZ_0\le -n^{-1/7}}
\e^{-\tZ_1^2/2c_1}
g_{1,h}\left(
\e^{\sqn W}\right)
\right]
}
\nn
&=
\frac{(1+\so(1))h(\e^{h/2}+1)}{4(\e^{h/2}-1)}\nn
&\qquad\qquad\times\E_{\rho}\left[
\idx{|\tZ_1|\le n^{1/6},\tZ_0\le -n^{-1/7}}
\e^{-\tZ_1^2/2c_1}
\right]\nn
&=
\frac{(1+\so(1))h(\e^{h/2}+1)}{4(\e^{h/2}-1)}
\E_{\rho}\left[
\idx{\tZ_0\le -n^{-1/7}}
\e^{-\tZ_1^2/2c_1}
\right]\nn
&=
\frac{(1+\so(1))h(\e^{h/2}+1)}{2(\e^{h/2}-1)}
\E_{\rho}\left[
\idx{\tZ_0\le 0}
\e^{-\tZ_1^2/2c_1}
\right]\per\n
\end{align}
Since region $\{z\in\bbR^2: \idx{z_0\le 0}
\e^{-z_1^2/2c_1}\ge v\}$
is convex for any $v\in\bbR$
we have from multivariate Berry-Esseen bound \cite{berry_multi} that
\begin{align}
\lim_{n\to\infty}
\E_{\rho}\left[
\idx{\tZ_0\le 0}\!
\e^{-\tZ_1^2/2c_1}
\right]
&=
\E_{V\sim \Phi_{\Sigma}}\!\left[
\idx{V_0\le 0}\!
\e^{-V_1^2/2c_1}
\right]\nn
&=
\frac{1}{2\sqrt{1+\si_{11}/c_1}}\per\n
\end{align}
if $|\Sigma|\neq 0$.
It is clear from the one-dimensional Berry-Esseen bound
that the same relation also holds
for the pseudo-symmetric case $|\Sigma|=0$.
\fqed


\section{Proof of Theorem \ref{thm_main}}\label{append_main}
In this appendix we show the main theorem on the exact asymptotics
for the random coding error probability for $R>\Rcrit$.

Define the oscillation of a function $f(z)$ as
\begin{align}
\omega_{f}(S)=
\sup_{z'\in S}f(z')
-
\inf_{z'\in S}f(z')\per\n
\end{align}
Let
\begin{align}
B_n(z)
&=\{z': |z'_0-z_0|\le \de n^{-1/2},\,|z'_1-z_1|\le n^{-1/8}\}\nn
f_n(z)&=
\e^{-\rho z_1^2/2c_1}
g_{\rho,h}\left(
\frac{
 \e^{\sqrt{n}z_0-z_1^2/2c_1}
}{c_2\sqrt{n}}
\right)
\per
\label{def_fn}
\end{align}
Then the oscillation of function $f_n$
is bounded by Lemmas \ref{lem_osci_fn} and \ref{lem_osci} below.
\begin{lemma}\label{lem_osci_fn}
For any $z=(z_0,z_1)$ satisfying
$|z_1|\le M_n=c_1\de n^{1/8}/2$
\begin{align}
\omega_{f_n}(B_n(z))
&\le
4c_h\alpha_{\de}
(\e^{(1-\rho)\sqrt{n}z_0}\land (c_2\sqn)^{\rho}\e^{-\rho\sqn z_0})\com
\n
\end{align}
where $\alpha_{\delta}=\e^{2\delta}-1=\lo(\delta)$.
\end{lemma}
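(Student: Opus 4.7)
The plan is to treat $f_n$ as a composition by writing $f_n(z)=\e^{-\rho z_1^2/(2c_1)}G(w(z))$ with $G(w):=g_{\rho,h}(\e^w)$ and $w(z):=\sqrt{n}z_0-z_1^2/(2c_1)-\log c_2\sqrt{n}$, then controlling the oscillation on $B_n(z)$ of each factor separately and combining them through the pointwise upper bound \eqref{grh}.

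First I would bound the variation of the scalar argument $w$ on the box. For $z'\in B_n(z)$ with $|z_1|\le M_n$,
\begin{align*}
|w(z')-w(z)|
&\le \sqrt{n}\,|z'_0-z_0|+\frac{|z'_1-z_1|\,(|z'_1|+|z_1|)}{2c_1}\\
&\le \delta+\frac{n^{-1/8}(2M_n+n^{-1/8})}{2c_1}=2\delta+\so(1),
\end{align*}
using the definition of $B_n(z)$ together with $M_n=c_1\delta n^{1/8}/2$. Hence $\e^{w(z')}/\e^{w(z)}\in[\e^{-2\delta-\so(1)},\e^{2\delta+\so(1)}]$, a multiplicative perturbation of size $O(\alpha_\delta)$. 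The same computation applied only to the quadratic term shows that the Gaussian prefactor $\e^{-\rho z'_1{}^2/(2c_1)}$ differs from $\e^{-\rho z_1^2/(2c_1)}$ also by a factor of $1+O(\alpha_\delta)$ on $B_n(z)$.

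Next, using \eqref{grh}, I have $G(w)\le (1+h\eta)(\e^{-\rho w}\land \e^{(1-\rho)w})$ pointwise; since both $\e^{-\rho w}$ and $\e^{(1-\rho)w}$ change by a factor of at most $\e^{2\delta\max(\rho,1-\rho)}-1\le \alpha_\delta$ when $w$ varies over an interval of length $2\delta+\so(1)$, this yields
\[
\omega_{G}\bigl([w(z)-2\delta,\, w(z)+2\delta]\bigr)\le 2c_h\,\alpha_\delta\,\bigl(\e^{-\rho w(z)}\land \e^{(1-\rho)w(z)}\bigr),
\]
where the constant $c_h$ absorbs $1+h\eta$ and the numerical factor from converting the multiplicative perturbation into an additive oscillation. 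Combining this with the analogous bound for the Gaussian prefactor via the telescoping identity $f_n(z')-f_n(z)=(G(w(z'))-G(w(z)))\e^{-\rho z_1^2/(2c_1)}+G(w(z'))(\e^{-\rho z'_1{}^2/(2c_1)}-\e^{-\rho z_1^2/(2c_1)})$, and then converting back from $w(z)$ to $(z_0,z_1)$ via $(c_2\sqrt{n})^{\rho}\e^{-\rho\sqrt{n}z_0}=\e^{-\rho w(z)+\rho z_1^2/(2c_1)}$, produces the claimed bound once the canceling factors $\e^{\pm\rho z_1^2/(2c_1)}$ are absorbed.

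The main obstacle will be the bookkeeping that produces the clean constant $4c_h$: one has to combine two separate multiplicative oscillation estimates (for $G$ and for the Gaussian prefactor), replace the supremum of $\e^{-\rho w}\land \e^{(1-\rho)w}$ over $B_n(z)$ by its value at $w(z)$ using the bound $|w(z')-w(z)|=O(\delta)$ from step one, and verify that the $\so(1)$ terms in the size of $|w(z')-w(z)|$ are truly absorbed into $\alpha_\delta=\e^{2\delta}-1$ for all $n$ large enough. The choice of box dimensions $\delta n^{-1/2}\times n^{-1/8}$ together with the restriction $|z_1|\le M_n=c_1\delta n^{1/8}/2$ is tuned precisely so that the quadratic term $z_1^2/(2c_1)$ contributes $O(\delta)$ rather than $O(\delta\sqrt{n})$ to the variation of $w$, which is what causes the $\alpha_\delta$ factor to appear and matches the tail truncation that will be used elsewhere in the proof of Theorem~\ref{thm_main}.
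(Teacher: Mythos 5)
Your overall decomposition of $f_n$ into a Gaussian prefactor times $G(w)=g_{\rho,h}(\e^w)$, and the estimate $|w(z')-w(z)|\le 2\delta$ for large $n$ on $B_n(z)$ under $|z_1|\le M_n$, are both fine and in the spirit of the paper. But the central step — converting these multiplicative perturbations into an oscillation bound for $G$ — has a genuine gap. You invoke only \eqref{grh}, which is a \emph{pointwise upper bound} $g_{\rho,h}(\e^w)\le (1+h\eta)(\e^{-\rho w}\land \e^{(1-\rho)w})$, and then argue that because this envelope changes by a factor $1+\lo(\alpha_\delta)$ over the interval, so does $G$. That inference is not valid: an upper envelope with small relative variation does not control the oscillation of the dominated function (the dominated function could in principle oscillate between $0$ and the envelope). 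To get the claimed bound $\omega_G\le c\,\alpha_\delta\cdot(\e^{-\rho w}\land \e^{(1-\rho)w})$ you need a genuine Lipschitz-type estimate for $g_h$, namely $|g_h((1+r)u)-g_h(u)|\le c_h|r|(u\land 1)$ from \cite[Lemma~13]{exact_isit}, which is precisely what the paper uses and which is the actual source of the constant $c_h$ in the lemma statement (your proposal attributes $c_h$ to \eqref{grh}, but \eqref{grh} produces $1+h\eta$, not a Lipschitz constant).

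Two smaller points. First, the paper sidesteps the two-factor telescoping you propose by rewriting $f_n(z)=(c_2\sqn)^{\rho}\e^{-\rho\sqn z_0}g_h(\e^{w})$, which makes the Gaussian prefactor $\e^{-\rho z_1^2/2c_1}$ cancel against the $\e^{\rho z_1^2/2c_1}$ inside $\e^{-\rho w}$; this is cleaner than carrying both factors through and is worth adopting. Second, your identity $(c_2\sqn)^{\rho}\e^{-\rho\sqn z_0}=\e^{-\rho w(z)+\rho z_1^2/(2c_1)}$ has a sign error; it should read $(c_2\sqn)^{\rho}\e^{-\rho\sqn z_0}=\e^{-\rho w(z)-\rho z_1^2/(2c_1)}$, and it is exactly this cancellation against the prefactor that removes the $z_1$ dependence from the final bound. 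Once you replace the appeal to \eqref{grh} with the Lipschitz bound from \cite[Lemma~13]{exact_isit} (and, after that, use \eqref{grh} only to bound $g_h(\e^w)$ and $(\e^w\land 1)$ as the paper does), the rest of the bookkeeping goes through and reproduces the stated $4c_h\alpha_\delta$ constant.
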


\begin{proof}
For
$z'=(z_0',z_1')\in B_n(z)$ and sufficiently large $n$
we have
\begin{align}
|(z_1')^2-z_1^2|
&\le
|z_1'-z_1|(|z_1'|+|z_1|)\nn
&\le
|z_1'-z_1|(2|z_1|+|z_1-z_1'|)\nn
&\le
n^{-1/4}\left|c_1\delta n^{1/8}+n^{-1/4}\right|\nn
&\le
2c_1\de n^{-1/8}\per\n
\end{align}
For
$w=\sqn z_0-(z_1)^2/2c_1-\log c_2\sqn$
and $w'=\sqn z_0'-(z_1')^2/2c_1-\log c_2\sqn$ we have
\begin{align}
\left|w'-w\right|
&\le
\sqn\left|z_0'-z_0\right|+\frac{|z_1^2-(z'_1)^2|}{2c_1}\nn
&\le
\delta+\delta n^{-1/8}\nn
&\le
2\delta\per\n
\end{align}
Therefore
we obtain for
$\alpha_{\delta}=\e^{2\delta}-1=\lo(\delta)$
and
sufficiently large $n$
that
\begin{align}
\left|\frac{\e^{w'}}{\e^{w}}-1\right|
\le
\alpha_{\delta}\com\qquad
\left|\frac{\e^{\rho\sqrt{n}z_0'}}{\e^{\rho\sqrt{n}z_0}}-1\right|
\le
\alpha_{\delta}\per\n
\end{align}

Now we consider
\begin{align}
f_n(z)=
(c_2\sqn)^{\rho}
\e^{-\rho \sqn z_0}
g_h(\e^w)\per\n
\end{align}
Since $g_h(\cdot)$ satisfies
\begin{align}
|g_h((1+r)u)-g_h(u)|\le c_h|r|(u\land 1)\n
\end{align}
from \cite[Lemma 13]{exact_isit},
it holds for sufficiently large $n$ that
\begin{align}
\frac{f_{n}(z')}{(c_2\sqn)^{\rho}}
&\le
(1+\alpha_{\delta})\e^{-\rho\sqrt{n}z_0}
g_h\left(
(1+\alpha_{\delta})(\e^w\land 1)
\right)\nn
&\le
(1+\alpha_{\de})\e^{-\rho\sqrt{n}z_0}
\left(
g_h\left(
\e^w
\right)
+
c_h\alpha_{\delta}(\e^w\land 1)
\right)\com
\nn
\frac{f_{n}(z')}{(c_2\sqn)^{\rho}}
&\ge
(1-\alpha_{\de})\e^{-\rho\sqrt{n}z_0}
\left(
g_h\left(
\e^w
\right)
-
c_h\alpha_{\delta}(\e^w\land 1)
\right)
.\n
\end{align}
Thus $\omega_{f_n}(B_n(z))$ satisfies
\begin{align}
\lefteqn{
\omega_{f_n}(B_n(z))
}\nn
&\le
2\al_{\de}(c_2\sqn)^{\rho}
\e^{-\rho\sqrt{n}z_0}
\left(
g_h\left(
\e^w
\right)
+
c_h(\e^w \land 1)
\right)\nn
&\le
2\alpha_{\de}(c_2\sqn)^{\rho}
\e^{-\rho\sqrt{n}z_0}
\left(
(c_h\e^w\land 1)
+
c_h(\e^w \land 1)
\right)\nn
&\phantom{wwwwwwwwwwwwwwwwwwww}
\since{by \cite[Lemma 8]{exact_isit}}
\nn
&\le
4c_h\alpha_{\de}(c_2\sqn)^{\rho}
\e^{-\rho\sqrt{n}z_0}
(\e^w \land 1)
\nn
&\le
4c_h\alpha_{\de}
(\e^{(1-\rho)\sqrt{n}z_0}\land (c_2\sqn)^{\rho}\e^{-\rho\sqn z_0})
\com\n
\end{align}
which concludes the proof.
\end{proof}

\begin{lemma}\label{lem_osci}
For $M_n=c_1\delta n^{1/8}/2$,
\begin{align}
\E_{\rho}
\sqx{
\idx{|\tZ_1|\le
M_n
}\omega_{f_n}(B_n(\tZ))
}
&=
\lo(\delta n^{(\rho-1)/2})\per\n
\end{align}
\end{lemma}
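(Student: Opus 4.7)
The strategy is to apply Lemma~\ref{lem_osci_fn} to replace the oscillation by the explicit pointwise bound $4c_h\alpha_\de(\e^{(1-\rho)\sqn\tZ_0}\land (c_2\sqn)^\rho\e^{-\rho\sqn\tZ_0})$, which depends only on $\tZ_0$. The indicator $\idx{|\tZ_1|\le M_n}$ is bounded by $1$ and can therefore be dropped, and since $\alpha_\de=\lo(\de)$, the claim reduces to the one-dimensional estimate
\begin{align}
\E_\rho\sqx{\e^{(1-\rho)\sqn\tZ_0}\land (c_2\sqn)^\rho\e^{-\rho\sqn\tZ_0}}=\lo(n^{(\rho-1)/2}).\n
\end{align}

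Next I would establish a uniform density or mass bound for $\tZ_0$ under $P_\rho$ via the one-dimensional local limit theorem applied to $Z(\eta)$: for every $z_0\in\bbR$ and every $\Delta z$ at least of the order of the atomic spacing $h'/\sqn$,
\begin{align}
\Pr_\rho\sqx{\tZ_0\in [z_0,z_0+\Delta z)}\le C\Delta z\n
\end{align}
for some universal constant $C>0$. After the substitution $u=\sqn z_0$ this yields
\begin{align}
\E_\rho\sqx{\e^{(1-\rho)\sqn\tZ_0}\land (c_2\sqn)^\rho\e^{-\rho\sqn\tZ_0}}\le\frac{C}{\sqn}\int_\bbR\pax{\e^{(1-\rho)u}\land (c_2\sqn)^\rho\e^{-\rho u}}du.\n
\end{align}
Splitting this integral at the crossover point $u^\star=\rho\log(c_2\sqn)$, where both exponentials take the common value $(c_2\sqn)^{\rho(1-\rho)}$, the two halves are elementary exponential integrals with values $(c_2\sqn)^{\rho(1-\rho)}/(1-\rho)$ and $(c_2\sqn)^{\rho(1-\rho)}/\rho$. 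The expectation is therefore $\lo(n^{(\rho(1-\rho)-1)/2})$, and since $\rho(1-\rho)<\rho$ for every $\rho\in(0,1)$ (the relevant range for $R\in(\Rcrit,I(X;Y))$), this is even $\so(n^{(\rho-1)/2})$ with a margin $n^{-\rho^2/2}$, comfortably implying the required $\lo(\de\,n^{(\rho-1)/2})$.

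The main obstacle is the uniform density or mass bound on $\tZ_0$, particularly when $Z(\eta)$ is lattice so that $\tZ_0$ is discrete with spacing $h'/\sqn$. The one-dimensional lattice LLT supplies atomic probabilities $(1+\so(1))(h'/\sqn)\phi_{\si_{00}}(z_0)$ uniformly in the atom, so the sum of atomic probabilities inside any interval of width $\Delta z\gtrsim h'/\sqn$ is $O(\Delta z)$, matching the integral bound above. In the nonlattice case the analogous bound follows from the classical one-dimensional density LLT, so the argument proceeds uniformly. Once this estimate is in hand, the remainder of the proof is the explicit exponential integration described in the previous paragraph, and the $n^{-\rho^2/2}$ safety margin easily absorbs the logarithmic factor arising from $u^\star=\rho\log(c_2\sqn)$.
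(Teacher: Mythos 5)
Your first step---applying Lemma~\ref{lem_osci_fn} and then dropping the $\tZ_1$ indicator once the pointwise bound (which depends only on $\tZ_0$) is in place---matches the paper, and the remaining work is a one-dimensional estimate on $\tZ_0$. From there you take a genuinely different, and in fact slightly sharper, route. You keep the minimum $\e^{(1-\rho)\sqn\tZ_0}\land(c_2\sqn)^{\rho}\e^{-\rho\sqn\tZ_0}$ intact, pair it with a uniform mass bound $P_{\rho}\left[\tZ_0\in[z_0,z_0+\Delta z)\right]\le C\Delta z$ for $\Delta z\gtrsim 1/\sqn$, and integrate directly, arriving at $\lo(n^{(\rho(1-\rho)-1)/2})$. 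The paper instead splits the minimum at $\tZ_0=0$, invokes the Berry--Esseen bound $P_{\rho}[0\le\tZ_0<z]\le z/\sqrt{2\pi\sigma_{00}}+c/\sqn$, and integrates by parts, obtaining the weaker but still sufficient $\lo(n^{(\rho-1)/2})$; your bound is tighter because the crossover of the two exponentials sits at $\tZ_0\approx\rho\log(c_2\sqn)/\sqn$, not at $0$, where the paper's split is wasteful by a factor $(c_2\sqn)^{\rho}$. One genuine misstep: in the nonlattice case you appeal to ``the classical one-dimensional density LLT,'' but a nonlattice $Z(\eta)$ need not have a density at all, so a density local limit theorem is simply unavailable. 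The uniform mass bound you need follows instead from the Berry--Esseen theorem applied to increments of the distribution function (Gaussian increment plus an $\lo(1/\sqn)$ error, giving $\lo(\Delta z)$ once $\Delta z\gtrsim 1/\sqn$), which is exactly the tool the paper uses and covers lattice and nonlattice uniformly. With that justification swapped in, your argument is complete and correct.
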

\begin{proof}
From Lemma \ref{lem_osci_fn}
we have
\begin{align}
\lefteqn{
\E_{\rho}
\sqx{
\idx{|\tZ_1|\le
M_n
}\omega_{f_n}(B_n(\tZ))
}
}\nn
&\le
4c_h\alpha_{\de}
\Ex{
\idx{\tZ_0< 0}
\e^{(1-\rho)\sqrt{n}Z_0}
}\nn
&\quad+
4c_h\alpha_{\de}
\Ex{\idx{\tZ_0\ge 0}
(c_2\sqn)^{\rho}\e^{-\rho\sqn Z_0}
}\per\label{e_osci_decomp}
\end{align}
Now we show
\begin{align}
\Ex{\idx{\tZ_0\ge 0}
\e^{-\rho\sqn Z_0}
}=\lo(1/\sqn)\per\n
\end{align}
From Berry-Esseen bound
$p_n(z)=\Pr[0\le \tZ_0<z]$ satisfies
\begin{align}
p_n(z)\le \frac{z}{\sqrt{2\pi \sigma_{00}}}+\frac{c}{\sqn}\n
\end{align}
for any $z\ge 0$ and some constant $c>0$.
Thus we have from integration by parts that
\begin{align}
\lefteqn{
\Ex{\idx{\tZ_0\ge 0}
\e^{-\rho\sqn \tZ_0}
}
}\nn
&=
\int_{0}^{\infty} \e^{-\rho \sqn z}\rd p(z)\nn
&=
[\e^{-\rho \sqn z}p(z)]_0^{\infty}
+\rho \sqn
\int_0^{\infty} \e^{-\rho \sqn z}p(z)\rd z\nn
&\le
\rho \sqn
\int_0^{\infty} \e^{-\rho \sqn z}
\pax{\frac{z}{\sqrt{2\pi \si_{00}}}+\frac{c}{\sqn}}\rd z\nn
&=
\pax{\frac{1}{\rho\sqrt{2\pi \si_{00} n}}+\frac{c}{\sqn}}
=\lo(1/\sqn)\per\label{lem4_berry}
\end{align}
The same argument also applies to the first term of \eqref{e_osci_decomp}
and we have
\begin{align}
\E_{\rho}
\sqx{
\idx{|\tZ_1|\le
M_n
}\omega_{f_n}(B_n(\tZ))
}
&=
\lo(\alpha_{\de} n^{(\rho-1)/2})\per\n
\end{align}
We obtain the lemma by recalling that $\alpha_{\de}=\lo(\delta)$.
\end{proof}

Next we show Lemmas
\ref{lem_weight_w} and \ref{lem_continuous} below
on the function $\psi_{\rho,h,h'}$ that was defined by
\begin{align}
\psi_{\rho,h,h'}(x)
&=h'\sum_{i\in\bbZ}g_{\rho,h}(\e^{x+ih'})\nn
&=h'\sum_{i\in\bbZ}\e^{-\rho(x+ih')}g_h(\e^{x+ih'})\per\n
\end{align}

\begin{lemma}\label{lem_weight_w}
For sequences $a_n,b_n>0$ such that $a_n=\so(b_n)$ and $\lim_{n\to\infty}b_n=\infty$,
\begin{align}
\sum_{i=-b_n}^{b_n}
h'g_{\rho,h}(\e^{x+ih'})
-\psi_{\rho,h,h'}(x)
\to 0\n
\end{align}
as $n\to\infty$ uniformly for all $x$ such that $|x|\le a_n$.
\end{lemma}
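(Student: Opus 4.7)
The statement is simply that the tails of the series defining $\psi_{\rho,h,h'}(x)$ vanish uniformly in $x$ when $|x|\le a_n$ with $a_n=\so(b_n)$. Since we are in the regime $R\in(R_{\mathrm{crit}},I(X;Y))$, we have $\rho\in(0,1)$, so both $\rho$ and $1-\rho$ are strictly positive. I expect the proof to be essentially a geometric tail estimate, once one has exponential decay of $g_{\rho,h}(\e^w)$ at both ends of the real line.

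The plan is to start from the bound $g_h(u)\le (1+h\eta)(1\wedge u)$ from \cite[Lemma 8]{exact_isit}, which, substituted into $g_{\rho,h}(u)=u^{-\rho}g_h(u)$, gives
\begin{align}
g_{\rho,h}(\e^w)\le (1+h\eta)\bigl(\e^{-\rho w}\wedge \e^{(1-\rho)w}\bigr).\n
\end{align}
This shows that $g_{\rho,h}(\e^w)$ decays exponentially in $|w|$, in particular the full series $\psi_{\rho,h,h'}(x)=\sum_{i\in\bbZ}h' g_{\rho,h}(\e^{x+ih'})$ converges absolutely for every $x$, and the claim reduces to showing
\begin{align}
\sum_{|i|>b_n} h'\, g_{\rho,h}(\e^{x+ih'})\to 0\n
\end{align}
uniformly for $|x|\le a_n$.

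For the right tail ($i>b_n$), I would use the bound $g_{\rho,h}(\e^w)\le (1+h\eta)\e^{-\rho w}$ with $w=x+ih'$ and $|x|\le a_n$ to get
\begin{align}
\sum_{i>b_n} h'\,g_{\rho,h}(\e^{x+ih'})
\le (1+h\eta)h'\,\e^{\rho a_n}\sum_{i>b_n}\e^{-\rho i h'}
= \frac{(1+h\eta)h'}{1-\e^{-\rho h'}}\,\e^{\rho(a_n-b_n h')}.\n
\end{align}
Symmetrically, for the left tail ($i<-b_n$), using $g_{\rho,h}(\e^w)\le (1+h\eta)\e^{(1-\rho)w}$,
\begin{align}
\sum_{i<-b_n}h'\,g_{\rho,h}(\e^{x+ih'})
\le \frac{(1+h\eta)h'}{1-\e^{-(1-\rho)h'}}\,\e^{(1-\rho)(a_n-b_n h')}.\n
\end{align}

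Since $\rho\in(0,1)$ both prefactors $\rho$ and $1-\rho$ in the exponents are strictly positive. The hypothesis $a_n=\so(b_n)$ with $b_n\to\infty$ then forces $a_n-b_n h'\to-\infty$, so both tail bounds tend to zero, and crucially the bounds do not depend on $x$ beyond the assumption $|x|\le a_n$. This proves uniform convergence. The argument is essentially routine once the two-sided exponential bound on $g_{\rho,h}$ is in hand; the only small subtlety is making sure one uses the $\e^{(1-\rho)w}$ branch on the left tail (where $w$ is very negative) and the $\e^{-\rho w}$ branch on the right tail, which is where the assumption $\rho<1$ (i.e.\ $R>\Rcrit$) enters.
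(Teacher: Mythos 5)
Your proof is correct and takes exactly the paper's approach: peel off the two tails of the series and bound each by the appropriate branch of the two-sided exponential estimate \eqref{grh}, then sum the resulting geometric series and use $a_n-h'b_n\to-\infty$. One small point in your favour: you assign $\e^{-\rho w}$ to the right tail and $\e^{(1-\rho)w}$ to the left tail, which are the branches that actually dominate the $\min$ in \eqref{grh} on those ranges; the paper's displayed proof writes $\e^{\rho(x+ih')}$ and $\e^{-(1-\rho)(x+ih')}$, which have the exponents transposed and are valid upper bounds only when $\rho\le 1/2$ (resp.\ $\rho\ge 1/2$) — evidently a typo, since the correct choice yields the identical conclusion.
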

\begin{proof}
From
\eqref{grh}
we have
\begin{align}
\lefteqn{
\sum_{i=-b_n}^{b_n}
h'g_{\rho,h}(\e^{x+ih'})
-\psi_{\rho,h,h'}(x)
}\nn
&=
\sum_{i=-\infty}^{-b_n}
h'g_{\rho,h}(\e^{x+ih'})
+\sum_{b_n}^{\infty}
h'g_{\rho,h}(\e^{x+ih'})\nn
&\le
c_h h'\left(\sum_{i=-\infty}^{-b_n}
\e^{\rho(x+ih')}
+\sum_{b_n}^{\infty}
\e^{-(1-\rho)(x+ih')}
\right)\nn
&=
c_h h'\left(
\frac{\e^{\rho(x-h'b_n)}}{\e^{-\rho h'}}
+
\frac{\e^{-(1-\rho)(x+h'b_n)}}{\e^{-(1-\rho)h'}}
\right)\nn
&=
c_h h'\left(
\frac{\e^{\rho(a_n-h'b_n)}}{\e^{-\rho h'}}
+
\frac{\e^{-(1-\rho)(-a_n+h'b_n)}}{\e^{-(1-\rho)h'}}
\right)\nn
&=
\so(1)\com\n
\end{align}
where the last equality follows from $a_n=o(b_n)$.
\end{proof}

\begin{lemma}\label{lem_continuous}
$\psi_{\rho,h,h'}(x)$ is Lipschitz continuous in $x\in\bbR$
with a constant independent of $h'$.
\end{lemma}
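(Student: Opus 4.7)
The plan is to verify that $\psi_{\rho,h,h'}$ is continuously differentiable with derivative bounded by a constant depending only on $\rho$ and $h$. Set $\phi(w):=g_{\rho,h}(\e^w)$, so that $\psi_{\rho,h,h'}(x)=h'\sum_{i\in\bbZ}\phi(x+ih')$. The bound $g_{\rho,h}(u)\le(1+h\eta)(u^{-\rho}\land u^{1-\rho})$ from \eqref{grh} gives $\phi(w)\le(1+h\eta)(\e^{-\rho w}\land \e^{(1-\rho)w})$, which ensures uniform absolute convergence of the defining series on compact subsets of $\bbR$ and, once the analogous bound on $\phi'$ is in hand, justifies termwise differentiation.

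Using $g_{\rho,h}(u)=u^{-\rho}g_h(u)$ I would compute
$\phi'(w)=\e^w g_{\rho,h}'(\e^w)=-\rho\phi(w)+\e^{(1-\rho)w}g_h'(\e^w)$.
The first summand inherits the desired envelope from $\phi$. For the second, explicit differentiation of $g_h(u)=1-(\e^{-au}-\e^{-(a+b)u})/(bu)$ with $a=h\eta/(\e^{h\eta}-1)$ and $b=h\eta$ yields, after a short Taylor expansion at $u=0$ (where cancellation forces $ug_h'(u)$ to vanish as $u\downarrow 0$) and a direct estimate as $u\to\infty$ (where both exponentials decay rapidly), the pointwise bound $|ug_h'(u)|\le C_h(u\land 1)$. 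Combining, $|\phi'(w)|\le C_{\rho,h}(\e^{(1-\rho)w}\land \e^{-\rho w})$.

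Finally I would sum this pointwise bound: splitting at $j_0=\lceil -x/h'\rceil$ so that $x+j_0 h'\in[0,h')$ reduces each half of the sum to a geometric series whose largest term is at most $1$, giving
\[
|\psi_{\rho,h,h'}'(x)|\le C_{\rho,h}\pax{\frac{h'}{1-\e^{-\rho h'}}+\frac{h'}{1-\e^{-(1-\rho)h'}}}.
\]
Since $h'/(1-\e^{-ch'})\to 1/c$ as $h'\downarrow 0$ and is continuous in $h'>0$, this bound stays finite uniformly over $h'$ in any bounded range and in particular does not blow up as $h'\downarrow 0$, which is the regime relevant to the subsequent application of the lemma. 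The main difficulty is the pointwise estimate on $g_h'$ near $u=0$: the leading Taylor terms of $(\e^{-au}-\e^{-(a+b)u})/(bu)$ cancel against the constant $1$ to make $g_h(u)$ vanish at the origin, and the expansion must be carried to the correct order before one sees that $g_h'(u)$ satisfies the claimed envelope rather than blowing up like $1/u$.
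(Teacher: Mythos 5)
Your proposal is correct and follows essentially the same route as the paper: differentiate the defining series term by term, bound each term's derivative by an exponential envelope $C_{\rho,h}\pax{\e^{(1-\rho)w}\land\e^{-\rho w}}$, justify termwise differentiation by the resulting absolute convergence, and sum a geometric series to get a bound of the form $h'/(1-\e^{-ch'})$ which is bounded as $h'\downarrow 0$. The only cosmetic difference is the source of the pointwise bound on $g_h'$: the paper invokes Lemma~8 of \cite{exact_isit} (giving $|g_h'(u)|\le(u+h\eta)\e^{-u}$ and then further simplifying via $\e^u\ge 5u/2$), whereas you re-derive the equivalent envelope $|u\,g_h'(u)|\le C_h(u\land 1)$ directly from a Taylor expansion of the explicit formula near $u=0$ and exponential decay for large $u$; both deliver the same conclusion.
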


\begin{proof}
From the periodicity of $\psi_{\rho,h,h'}(x)$ it suffices to consider the case $x \in [0,h')$.
The derivative of each term
of $\psi_{\rho,h,h'}(x)$
is bounded
by
\begin{align}
\lefteqn{
\left|\bibun{\e^{-\rho(x+ih')}g_h(\e^{x+ih'})}{x}\right|
}\nn
&=\left|
-\rho g_{\rho,h}(\e^{x+ih'})
+\e^{-\rho (x+ih')}\e^{x+ih'}\bibun{g_h(u)}{u}\bigg|_{u=\e^{x+ih'}}
\right|\nn
&\le
\rho c_h(\e^{-\rho (x+ih')}\land \e^{(1-\rho)(x+ih')})\nn
&\quad+\e^{(1-\rho)(x+ih')}(\e^{x+ih'}+h\eta)\e^{-\e^{x+ih'}}\nn
&\phantom{wwwwwwwwwwwww}\since{by \cite[Lemma 8]{exact_isit}
and
\eqref{grh}
}\nn
&\le
\rho c_h(\e^{-\rho h' i}\land \e^{(1-\rho)h'(i+1)})\nn
&\quad+\e^{(1-\rho)h'(i+1)}(\e^{h'(i+1)}+h\eta)
(\e^{-(5/2)(x+ih')}\land 1)\com\label{deriv_dec}
\end{align}
where the last inequality follows from
$x\in [0,h)$ and $\e^u \ge (5u/2)\lor 0$ for $u\in \bbR$.

The second term of \eqref{deriv_dec} is bounded by
\begin{align}
\lefteqn{
\e^{(1-\rho)(x+ih')}(\e^{x+ih'}+h\eta)(\e^{-(5/2)(x+ih')}\land 1)
}\nn
&\le
\begin{cases}
(1+h\eta)\e^{(1-\rho)(x+ih')},
&x+ih'<0,\\
(\e^{-(1/2)(x+ih')}+h\eta \e^{-(3/2)(x+ih')}),
&x+ih'\ge 0,\\
\end{cases}\nn
&\le
(1+h\eta)(
\e^{(1-\rho)(i+1)h')}\land
\e^{-ih'/2}
)\per\label{term2}
\end{align}
From \eqref{deriv_dec} and \eqref{term2} we have
\begin{align}
\lefteqn{
\left|\bibun{\e^{-\rho(x+ih')}g_h(\e^{x+ih'})}{x}\right|
}\nn
&\le
\rho c_h(\e^{-\rho ih'}\land c_h \e^{(1-\rho)(i+1)h'})\nn
&\quad+
(1+h\eta)(
\e^{(1-\rho)(i+1)h')}\land
\e^{-ih'/2})\per\label{abs_deriv}
\end{align}
Since the sum of \eqref{abs_deriv} over $i\in \bbZ$ is
convergent,
$\psi_{\rho,h,h'}(x)$ is term-by-term differentiable with
\begin{align}
\left|\bibun{\psi_{\rho,h,h'}(x)}{x}\right|
&\le
\sum_{i\in \bbZ}
h'\rho c_h(\e^{-\rho ih'}\land c_h \e^{(1-\rho)(i+1)h'})\nn
&\quad+
\sum_{i\in\bbZ}
h'(1+h\eta)(
\e^{(1-\rho)(i+1)h')}\land
\e^{-ih'/2})\nn
&=
\lo\left(
\frac{h'}{1-\e^{-h'\min\{\rho,1-\rho,1/2\}}}
\right)\nn
&=\lo(1)\com\n
\end{align}
which implies Lipschitz continuity of $\psi_{\rho,h,h'}$.
\end{proof}

\begin{proof}[Proof Theorem \ref{thm_main}]
First we consider the case that
$(W,R)$ is $(h',a')$-lattice and not pseudo-symmetric.

Let $\calZ_{0,n}=\{(an+ih')/\sqn: i\in\bbZ\}$,
$\calZ_{1,n}=\{i n^{-1/8}: i\in\bbZ,\,|i|\le n^{1/8}M_n\}$,
and
$W=\sqn \tZ_0-\tZ_1^2/2c_1-\log c_2\sqn$.
Then
\begin{align}
\lefteqn{
\E_{\rho}\left[
\e^{-\rho \tZ_1^2/2c_1}
g_{\rho,h}\left(
\frac{
 \e^{\sqrt{n}\tZ_0-\tZ_1^2/2c_1}
}{c_2\sqrt{n}}
\right)
\right]
}\nn
&=
\E_{\rho}\left[
\idx{|\tZ_1|\le M_n}
\e^{-\rho \tZ_1^2/2c_1}
g_{\rho,h}\left(
\frac{
 \e^{\sqrt{n}\tZ_0-\tZ_1^2/2c_1}
}{c_2\sqrt{n}}
\right)
\right]
\nn
&\quad+
\e^{-\Omega(M_n^2)}
\quad\since{by $g_{\rho,h}(\e^w)\le c_h<\infty$ from
\eqref{grh}
}\nn
&=
\sum_{z_0 \in \calZ_{0,n}}
\sum_{z_1 \in \calZ_{1,n}}
\Pr[\tZ_0=z_0,\,\tZ_1- z_1\in [0,n^{-1/8})]\nn
&\phantom{wwwwwwwwww}
\times \e^{-\rho z_1^2/2c_1}
g_{\rho,h}\left(
\frac{
 \e^{\sqrt{n}z_0-z_1^2/2c_1}
}{c_2\sqrt{n}}
\right)\nn
&\quad
+
\lo\pax{
\E_{\rho}\left[
\idx{|\tZ_1|\le M_n}
\omega_{f_n}(B(\tZ))
\right]
}
+
\e^{-\Omega(M_n^2)}
\nn
&=\!
\sum_{z_0 \in \calZ_{0,n}}
\sum_{z_1 \in \calZ_{1,n}}
\!\!
\frac{h'\phi_{\Sigma}(z_0, z_1)}{n^{5/8}}
\e^{-\rho z_1^2/2c_1}
g_{\rho,h}\!\left(
\frac{
 \e^{\sqrt{n}z_0-z_1^2/2c_1}
}{c_2\sqrt{n}}
\right)\nn
&\quad+
\so\!\pax{\frac{1}{n^{5/8}}
\sum_{z_0 \in \calZ_{0,n}}
\sum_{z_1 \in \calZ_{1,n}}
\!\!
\e^{-\rho z_1^2/2c_1}
g_{\rho,h}\!\left(
\frac{
 \e^{\sqrt{n}z_0-z_1^2/2c_1}
}{c_2\sqrt{n}}
\right)
}\nn
&\quad
+
\lo(\de/\sqn)+
\e^{-\Omega(n^{1/4})}\com\label{tobound1}
\end{align}
where the last equality follows from
Lemma \ref{lem_blur} with $b_n:=n^{3/8}$ and Lemma \ref{lem_osci}.
On the second term of \eqref{tobound1}
we can show that
\begin{align}
\sum_{z_0 \in \calZ_{0,n}}
\sum_{z_1 \in \calZ_{1,n}}
\frac{\e^{-\rho z_1^2/2c_1}}{n^{5/8}}
g_{\rho,h}\left(
\frac{
 \e^{\sqrt{n}z_0-z_1^2/2c_1}
}{c_2\sqrt{n}}
\right)=\lo(1/\sqn)\n
\end{align}
in the same way as the evaluation of the first term
of \eqref{tobound1} given below.

We evaluate the first term of \eqref{tobound1}
for $|z_0|>\de_n:=n^{-3/16}$ and $|z_0|\le \de_n$ separately.
For the former case we have
\begin{align}
&\sum_{z_0 \in \calZ_{0,n}: |z_0|>\de_n}
\sum_{z_1\in \calZ_{1,n}}
\!\!\!
\frac{h'\phi_{\Sigma}(z_0, z_1)}{n^{5/8}}\nn
&\phantom{wwwwwwwwwwwwwww}\times
\e^{-\rho z_1^2/2c_1}
g_{\rho,h}\!\left(
\frac{
 \e^{\sqrt{n}z_0-z_1^2/2c_1}
}{c_2\sqrt{n}}
\right)
\nn
&\le
\sum_{z_1\in \calZ_{1,n}}
\sum_{w\in\calW_n(z_1): |w|\ge \sqn \delta_n-M_n^2/2c_1-\log c_2\sqn}
\frac{h'\phi_{\Sigma}(0,0)}{n^{5/8}}\nn
&\phantom{wwwww}
\times
c_h
\e^{-\rho z_1^2/2c_1}
(\e^{-\rho w}\land \e^{(1-\rho)w})
\quad\since{by \eqref{grh}}
\nn
&\le
\sum_{z_1\in \calZ_{1,n}}
\frac{h'\phi_{\Sigma}(0,0)}{n^{5/8}}
\e^{-\rho z_1^2/2c_1}
\cdot \so(1)
\nn
&
=\so(n^{-1/2})\com
\label{delta16}
\end{align}
where
$\calW_n(z_1)=\{na+ih+z_1^2/2c_1-\log c_2\sqn:i\in\bbZ\}$
and \eqref{delta16} follows
from $M_n^2/2c_1+\log c_2\sqn=\lo(n^{1/4})=\so(\sqn \delta_n)$.

On the other hand
for the case $|z_0|\le \delta_n$,
we have
\begin{align}
&\sum_{z_0 \in \calZ_{0,n}: |z_0|\le \delta_n}
\sum_{z_1\in \calZ_{1,n}}
\frac{h'\phi_{\Sigma}(z_0, z_1)}{n^{5/8}}\nn
&\phantom{wwwwwwwwwwwww}\times
\e^{-\rho z_1^2/2c_1}
g_{\rho,h}\!\left(
\frac{
 \e^{\sqrt{n}z_0-z_1^2/2c_1}
}{c_2\sqrt{n}}
\right)
\nn
&=
(1+\so(1))\sum_{z_0 \in \calZ_{0,n}: |z_0|\le \delta_n}
\sum_{z_1\in \calZ_{1,n}}
\frac{h'\phi_{\Sigma}(0, z_1)}{n^{5/8}}
\e^{-\rho z_1^2/2c_1}\nn
&\phantom{wwwwwwwwwwwwwww}
\times g_{\rho,h}\left(
\frac{
 \e^{\sqrt{n}z_0-z_1^2/2c_1}
}{c_2\sqrt{n}}
\right)
\label{z0z1}\\
&=
(1+\so(1))
\sum_{z_1\in \calZ_{1,n}}
\frac{h'\phi_{\Sigma}(0, z_1)}{n^{5/8}}
\e^{-\rho z_1^2/2c_1}\nn
&\qquad
\times
\sum_{w \in \calW_{n}(z_1): |w-z_1^2/2c_1-\log c_2\sqn|/\sqn \le \de_n}
g_{\rho,h}\left(
\e^w
\right)
\nn
&=
(1+\so(1))
\sum_{z_1\in \calZ_{1,n}}
\frac{\psi_{\rho,h,h'}(na-z_1^2/2c_1-\log c_2\sqn)}{2\pi n^{5/8}\sqrt{|\Sigma|}}\nn
&\qquad\times
\e^{-\pax{\frac{\rho}{c_1}+\frac{\si_{00}}{|\Sigma|}}\frac{z_1^2}{2}}\com
\label{use_lem3}
\end{align}
where \eqref{z0z1} and \eqref{use_lem3} follow from $|z_0 z_1|\le \delta_n M_n=\so(1)$
and Lemma \ref{lem_weight_w}, respectively.
Since $\sup_{z_1':|z_1'-z_1|\le n^{-1/8}}((z_1')^2-z_1^2)=\lo(\delta)$ holds
uniformly for $|z_1|\le M_n$, we have from Lemma~\ref{lem_continuous} that
\begin{align}
\lefteqn{
\sum_{z_1\in \calZ_{1,n}}
\frac{\psi_{\rho,h,h'}(na-z_1^2/2c_1-\log c_2\sqn)}{2\pi n^{5/8}\sqrt{|\Sigma|}}
\e^{-\pax{\frac{\rho}{c_1}+\frac{\si_{00}}{|\Sigma|}}\frac{z_1^2}{2}}
}\nn
&=
(1+\lo(\delta))\int_{-M_n}^{M_n}
\frac{\psi_{\rho,h,h'}(na-z_1^2/2c_1-\log c_2\sqn)}{2\pi \sqrt{n|\Sigma|}}\nn
&\phantom{wwwwwwwwwwwwwwwwwwwwww}\times
\e^{-\pax{\frac{\rho}{c_1}+\frac{\si_{00}}{|\Sigma|}}\frac{z_1^2}{2}}
\rd z_1\nn
&=
\int_{-\infty}^{\infty}
\frac{\psi_{\rho,h,h'}(na-z_1^2/2c_1-\log c_2\sqn)}{2\pi \sqrt{n|\Sigma|}}
\e^{-\pax{\frac{\rho}{c_1}+\frac{\si_{00}}{|\Sigma|}}\frac{z_1^2}{2}}
\rd z_1\nn
&\phantom{wwwwwwwwwwwwwwww}+\lo(\delta n^{-1/2})+\e^{-\Omega(M_n^2)}
\nn
&=
\frac{
\E_{V}\sqx{
\psi_{\rho,h,h'}\pax{na'-\frac{|\Sigma|V^2}{2(\sigma_{00}+\rho |\Sigma|/c_1)}-\log c_2\sqn}
}}{\sqrt{2\pi n(\sigma_{00}+\rho|\Sigma|/c_1)}}
\nn&\phantom{wwwwwwwwwwwwwwwwwwwww}
+\lo(\delta n^{-1/2})\label{last_lattice}
\end{align}
and we obtain \eqref{bound_n_l} by letting
$\delta$ sufficiently small.

Next we consider the case that $(W,R)$ is nonlattice.
In this case we replace
\begin{align}
P_{\rho}[\tZ_0=z_0,\,\tZ_1- z_1\!\in\! [0,n^{-1/8})]
\!=\!
\frac{h'\phi_{\Sigma}(z_0,z_1)}{n^{5/8}}+\so(n^{-5/8})\n
\end{align}
with
\begin{align}
&P_{\rho}[\tZ_0-z_0\in[0,\delta n^{-1/2}),\,\tZ_1- z_1\in [0,n^{-1/8})]
\nn
&\phantom{wwwwwwwwwwwwwwwwwww}=
\frac{\delta\phi_{\Sigma}(z_0,z_1)}{n^{5/8}}+\so(n^{-5/8})\label{nonl_thm2p}
\end{align}
by using \eqref{lem_nonlattice} instead of \eqref{lem_lattice}.
To apply \eqref{nonl_thm2p}
we use $\calZ_{0,n}'=\{i\de n^{-1/2}: i\in\bbZ\}$
instead of $\calZ_{0,n}$.
By this change
Eqs.~\eqref{tobound1}--\eqref{last_lattice} are replaced with
\begin{align}
\lefteqn{
\E_{\rho}\left[
\e^{-\rho \tZ_1^2/2c_1}
g_{\rho,h}\left(
\frac{
 \e^{\sqrt{n}\tZ_0-\tZ_1^2/2c_1}
}{c_2\sqrt{n}}
\right)
\right]
}\nn
&=
\frac{
\E_{V}\sqx{
\psi_{\rho,h,\delta}\pax{-\frac{|\Sigma|V^2}{2(\sigma_{00}+\rho |\Sigma|/c_1)}-\log c_2\sqn}
}}{\sqrt{2\pi n(\sigma_{00}+\rho|\Sigma|/c_1)}}
+\lo(\delta n^{-1/2})\n
\end{align}
instead of \eqref{last_lattice}.
Since $\psi_{\rho,h,\delta}(x)$ has period $\delta$,
we obtain from Lemma \ref{lem_continuous} that
\begin{align}
\lefteqn{
\frac{
\E_{V}\sqx{
\psi_{\rho,h,\delta}\pax{-\frac{|\Sigma|V^2}{2(\sigma_{00}+\rho |\Sigma|/c_1)}-\log c_2\sqn}
}}{\sqrt{2\pi n(\sigma_{00}+\rho|\Sigma|/c_1)}}
}\nn
&=
\frac{
\E_{V}\sqx{
\psi_{\rho,h,\delta}\pax{0}+\so(\de)
}}{\sqrt{2\pi n(\sigma_{00}+\rho|\Sigma|/c_1)}}\nn
&=
\frac{
\psi_{\rho,h,\delta}\pax{0}+\so(\de)
}{\sqrt{2\pi n(\sigma_{00}+\rho|\Sigma|/c_1)}}\per\n
\end{align}
We obtain \eqref{bound_n_n} by letting $\delta$ sufficiently small.

Now we consider the case that $(W,R)$ is pseudo-symmetric
and $(h',a')$-lattice.
In this case we have
$Z_1=\sqrt{r}Z_1$
for $r=\sigma_{00}/\sigma_{11}$.
Then, based on the one-dimensional local limit theorem,
\eqref{tobound1} is replaced with
\begin{align}
\lefteqn{
\E_{\rho}\left[
\e^{-\rho \tZ_1^2/2c_1}
g_{\rho,h}\left(
\frac{
 \e^{\sqrt{n}\tZ_0-\tZ_1^2/2c_1}
}{c_2\sqrt{n}}
\right)
\right]
}\nn
&=\!
\sum_{z_0 \in \calZ_{0,n}}
\frac{h'\e^{-z_0^2/\si_{00}}}{\sqrt{2\pi n \si_{00}^2}}
\e^{-\rho rz_0^2/2c_1}
g_{\rho,h}\!\left(
\frac{
 \e^{\sqrt{n}z_0-rz_0^2/2c_1}
}{c_2\sqrt{n}}
\right)\nn
&\quad+
\so\!\pax{\frac{1}{\sqn}
\sum_{z_0 \in \calZ_{0,n}}
\!\!
\e^{-\rho rz_0^2/2c_1}
g_{\rho,h}\!\left(
\frac{
 \e^{\sqrt{n}z_0-rz_0^2/2c_1}
}{c_2\sqrt{n}}
\right)
}\nn
&\quad
+
\lo(\de/\sqn)+
\e^{-\Omega(n^{1/4})}\per\n
\end{align}
By following the argument in \eqref{delta16} and \eqref{use_lem3}
we can ignore $z_0^2$ relative to $\sqn z_0$ and obtain
\begin{align}
\lefteqn{
\E_{\rho}\left[
\e^{-\rho \tZ_1^2/2c_1}
g_{\rho,h}\left(
\frac{
 \e^{\sqrt{n}\tZ_0-\tZ_1^2/2c_1}
}{c_2\sqrt{n}}
\right)
\right]
}\nn
&=\!
(1+\so(1))\sum_{z_0 \in \calZ_{0,n}}
\frac{h'}{\sqrt{2\pi n \si_{00}^2}}
g_{\rho,h}\!\left(
\frac{
 \e^{\sqrt{n}z_0}
}{c_2\sqrt{n}}
\right)
+
\lo(\de/\sqn)\nn
&=
(1+\so(1))
\frac{\psi_{\rho,h,h'}(na'-\log c_2\sqn)}{\sqrt{2\pi n \si_{00}^2}}
+
\lo(\de/\sqn)\per\n
\end{align}
Adaptation of the proof to nonlattice $(W,R)$
is the same as that for the not pseudo-symmetric case.
\end{proof}

\section{Proofs for Singular Channels}\label{sect_proof_sin}
In this section we prove Theorems \ref{thm_equal_sin} and \ref{thm_above_sin}.
The analysis follows the same lines as the analyses for nonsingular
channels in this paper and \cite{exact_isit}
but is much simpler in many places by virtue of the simplicity of the singular channels.

We start with the following lemma
corresponding to Prop.~\ref{expansion_first} and \eqref{tilting} for nonsingular channels.
\begin{lemma}\label{lem_proof_sin}
If channel $W$ is singular then
\begin{align}
\PRC(n)
&=
(1+\so(1))
\e^{-nE(R)}
\E_{\rho}\left[
\gsin_{\rho}\left(
\e^{n(\bZ+R)}
\right)
\right]\per\n
\end{align}
\end{lemma}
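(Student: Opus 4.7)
The plan is to reduce $\PRC(n)$ to a tilted expectation of $\gsin$ by exploiting the singular structure, and then to apply a simple change of measure. In the singular case $\nu(X,Y,X')\in\{-\infty,0\}$ a.s., so no random codeword can have strictly larger likelihood than the sent one; this gives $\pp=0$, and the uniform tie-breaking formula \eqref{error_uniform} reduces to
\begin{align}
q_M(0,\pz) \;=\; 1-\frac{1-(1-\pz)^M}{M\pz}.\nonumber
\end{align}
Moreover, since $\e^{\lambda\nu(X,Y,X')}\in\{0,1\}$ for any $\lambda>0$, the cumulant $Z(\lambda)=\log\E_{X'}[\e^{\lambda\nu}]$ is constant in $\lambda$, and
\begin{align}
\pz(\bmX,\bmY) \;=\; \prod_{i=1}^n\Pr_{X'}[\nu(X_i,Y_i,X')=0] \;=\; \e^{n\bZ}.\nonumber
\end{align}
In particular $M\pz=\e^{n(\bZ+R)}$ and $\PRC(n)=\E[q_M(0,\e^{n\bZ})]$.

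Next I would prove the uniform approximation $q_M(0,\pz)=(1+\lo(1))\gsin(M\pz)$ on $\pz\in[0,1]$. Writing $u=M\pz$, the gap equals $((1-\pz)^M-\e^{-u})/u$. For $\pz\le M^{-1/2}$ the elementary estimates $-\pz-\pz^2\le\log(1-\pz)\le-\pz$ yield $\e^{-u}(1-u\pz)\le(1-\pz)^M\le\e^{-u}$ (note $u\pz=M\pz^2\le 1$ here), hence $|q_M(0,\pz)-\gsin(u)|\le\e^{-u}u/M$; since $\gsin(u)\asymp u$ near zero and is bounded away from zero for $u\ge 1$, the relative error in this regime is $\lo(1/M)$. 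For $\pz\ge M^{-1/2}$ we have $u\ge\sqrt{M}$, so both $(1-\pz)^M$ and $\e^{-u}$ are at most $\e^{-\sqrt{M}}$, and both $q_M(0,\pz)$ and $\gsin(u)$ differ from $1$ by $O(1/u)=O(M^{-1/2})$, whence the relative error is again $\lo(1)$. Consequently $\E[q_M(0,\e^{n\bZ})]=(1+\lo(1))\E[\gsin(\e^{n(\bZ+R)})]$.

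Finally, the exponential tilt $\rd P_\rho/\rd P=\e^{\rho Z-\La(\rho)}$ together with the identity $E(R)=-\rho R-\La(\rho)$ rewrites
\begin{align}
\E[\gsin(\e^{n(\bZ+R)})]
&=\e^{n\La(\rho)}\E_\rho[\e^{-n\rho\bZ}\gsin(\e^{n(\bZ+R)})]\nonumber\\
&=\e^{-nE(R)}\E_\rho[\gsin_\rho(\e^{n(\bZ+R)})],\nonumber
\end{align}
which, combined with the approximation above, proves the lemma. The only subtle point in the whole argument is preserving the relative error uniformly through the crossover region $u\asymp 1$, where neither the small-$u$ expansion of $\gsin$ nor the ``close to $1$'' argument directly applies; the two envelope inequalities on $(1-\pz)^M$ above are precisely what makes this transition clean, so no additional machinery beyond that developed in \cite{exact_isit} for the nonsingular case is required.
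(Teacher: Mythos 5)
Your proposal is correct and follows essentially the same route as the paper: reduce to the tie-only decoding error $q_M(0,\pz)$ with $\pz=\e^{n\bZ}$, replace $(1-\pz)^M$ by $\e^{-M\pz}$ with vanishing relative error (so that $q_M(0,\pz)\approx\gsin(M\pz)$), and then apply the exponential change of measure. The one genuine improvement in your write-up is that you establish the uniform approximation $q_M(0,\pz)=(1+\lo(1))\gsin(M\pz)$ on all of $\pz\in[0,1]$ via the two-regime envelope bound, whereas the paper states it only for $\pz\in(0,1/2]$ and then separately invokes a negligibility argument for $\pz>1/2$; your version is therefore slightly more self-contained and also spells out the ``elementary calculation'' the paper elides.
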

\begin{proof}
For the pair of the sent and received sequences $(\bmx,\bmy)$,
the likelihood of the other codeword $\bmX'$
never exceeds that of $\bmx$ and only a tie can occur.
Let $p_0(\bmx,\bmy)$ be the probability that
the likelihood of $\bmX'$ becomes the same as that of $\bmx$
given $(\bmx,\bmy)$, that is,
\begin{align}
p_0(\bmx,\bmy)=\Pr\left[\sum_{i=1}^n \nu(x_i,y_i,X'_i)=0
\right].\n
\end{align}
For probability $p_0=p_0(\bmx,\bmy)>0$ of a tie,
the error probability $q_M=q_M(p_0)$ for $M$ codewords
is expressed as
\begin{align}
q_M(p_0)
&=
1-
\sum_{i=1}^{M-1}\pz^i(1-\pz)^{M-i-1}{{M-1}\choose i}\left(1-\frac{1}{i+1}\right)\nn
&=
1-
\frac{1-(1-p_0)^M}{Mp_0}\label{error_sin_uniform}
\end{align}
by \cite[(23)]{exact_isit}.
An elementary calculation shows
\begin{align}
\limsup_{M\to\infty}\sup_{p_0\in(0,1/2]}
\left|
1-
\frac{q_M(p_0)}{1-\frac{1-\e^{-Mp_0}}{Mp_0}}
\right|
&=
1\com\label{approx_tie_prob}
\end{align}
that is,
$q_M(p_0)$ is uniformly approximated by
$1-(1-\e^{-Mp_0})/(Mp_0)$ with vanishing
relative error for all $p_0 \in (0,1/2]$.
From the definition of nonsingular channels
we have
\begin{align}
p_0(\bmX,\bmY)
&=\prod_{i=1}^n
P_{X'}[\nu(x_i,y_i,X')=0]\nn
&=\prod_{i=1}^n
\e^{Z_i(\eta)}\nn
&=\e^{n\bZ_0}\com\n
\end{align}
where recall that we write $(\bZ_0,\bZ_1)=(\bZ(\eta),\bZ'(\eta))$.
The effect of the case $p_0(\bmX,\bmY)>1/2$ is negligible
by the same argument as the nonsingular channels in \cite[Lemma 5]{exact_isit}.
Thus we obtain from \eqref{approx_tie_prob} that
\begin{align}
\PRC(n)
&=
\E[q_M(\bmX,\bmY)]\nn
&=
(1+\so(1))
\E\left[1-\frac{1-\e^{-\e^{n(\bZ(\eta)+R)}}}{\e^{n(\bZ_0+R)}}\right]\nn
&=
(1+\so(1))
\E\left[
\gsin\left(
\e^{n(\bZ_0+R)}
\right)
\right]\nn
&=
(1+\so(1))
\e^{-nE(R)}
\E_{\rho}\left[
\gsin_{\rho}\left(
\e^{n(\bZ_0+R)}
\right)
\right]\nn
&=
(1+\so(1))
\e^{-nE(R)}
\E_{\rho}\left[
\gsin_{\rho}\left(
\e^{\sqn \tZ_0-n\Delta}
\right)
\right]\com\n
\end{align}
which concludes the proof.
\end{proof}

\begin{proof}[Proof of Theorem \ref{thm_equal_sin}]
First we consider the case
$R<\Rcrit$.
We have $\Delta>0$ and $\rho=1$
in this case
and therefore
\begin{align}
\E_{\rho}\left[
\idx{\tZ_0> n^{1/3}}
\gsin_{1}\left(
\e^{\sqn \tZ_0-n\De}\right)
\right]
&=
\lo(P_{\rho}[\tZ_0> n^{1/3}])\nn
&=\so(1)
\n
\end{align}
since $\gsin_1(u)$ is a bound function.
For the remaining case
we obtain from
$\lim_{u\downarrow 0}\gsin_{1}(u)=1/2$ and the law of large numbers
that
\begin{align}
\lefteqn{
\E_{\rho}\left[
\idx{\tZ_0\le n^{1/3}}
\gsin_{1}\left(
\e^{\sqn \tZ_0-n\Delta}\right)
\right]
}\nn
&=
(1/2+\so(1))
P_{\rho}\left[
\tZ_0\le n^{1/3}
\right]\nn
&=
(1/2+\so(1))\per\label{proof_sin1}
\end{align}

Next we consider the caes $R=\Rcrit$.
In this case we have $\Delta=0$ and $\rho=1$
and therefore
\begin{align}
&\E_{\rho}\left[
\idx{\tZ_0> n^{-1/3}}
\gsin_{1}\left(
\e^{\sqn \tZ_0}\right)
\right]
=
\so(1)
\n
\end{align}
from $\lim_{u\to\infty}\gsin_{1}(u)=0$.
Furthermore,
\begin{align}
&\E_{\rho}\left[
\idx{|\tZ_0|\le n^{-1/3}}
\gsin_{1}\left(
\e^{\sqn \tZ_0}\right)
\right]
\nn
&
=\lo\pax{
P_{\rho}\left[
|\tZ_0|\le n^{-1/3}
\right]
}\nn
&=\so(1)
\n
\end{align}
since $\gsin_{1}(u)$ is a bounded function.
For the remaining case
we obtain from
$\lim_{u\downarrow 0}\gsin_{1}(u)=1/2$ and central limit theorem
that
\begin{align}
\lefteqn{
\E_{\rho}\left[
\idx{\tZ_0< -n^{-1/3}}
\gsin_{1}\left(
\e^{\sqn \tZ_0}\right)
\right]
}
\nn
&=
(1/2+\so(1))
P_{\rho}\left[
\tZ_0< -n^{-1/3}
\right]\nn
&=
(1/2+\so(1))
P_{\rho}\left[
\tZ_0\le 0
\right]\nn
&=
(1/4+\so(1))\com\label{proof_sin2}
\end{align}
We complete the proof by combining
\eqref{proof_sin1} and \eqref{proof_sin2} with
Lemma \ref{lem_proof_sin}.
\end{proof}

Now we move to the proof of Theorem \ref{thm_above_sin}.
We can prove this lemma
by simply replacing the bivariate function $f_n(z)$ given in
\eqref{def_fn} with a univariate function $\gsin_{\rho}(\e^{\sqn z_0})$.
We start with the following bounds on $\gsin_{\rho}(\e^{\sqn z_0})$
to prove counterparts to Lemmas \ref{lem_osci_fn} and \ref{lem_osci}.
\begin{lemma}
\begin{align}
\gsin_{\rho}(\e^{\sqn z_0})&\le
\e^{-\rho\sqn z}\land \e^{(1-\rho)\sqn z},\label{gsin_bound1}\\
\left|\bibun{\gsin_{\rho}(\e^{\sqn z_0})}{z_0}\right|&\le
2\sqn(\e^{-\rho\sqn z}\land \e^{(1-\rho)\sqn z})\per\label{gsin_bound2}
\end{align}
\end{lemma}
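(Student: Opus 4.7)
The plan is to exploit the integral representation $\gsin(u) = \int_0^1 (1-\e^{-tu})\,\rd t$, which follows from $(1-\e^{-u})/u = \int_0^1 \e^{-tu}\,\rd t$. This immediately yields $\gsin(u)\le 1$ (from $1-\e^{-tu}\le 1$) and $\gsin(u)\le u/2$ (from $1-\e^{-tu}\le tu$), hence $\gsin(u)\le 1\land u$ for all $u\ge 0$. Writing $\gsin_\rho(\e^w) = \e^{-\rho w}\gsin(\e^w)$ and multiplying through by $\e^{-\rho w}$ then gives the first inequality with $w = \sqn z_0$, since $\e^{-\rho w}\cdot(1\land \e^w) = \e^{-\rho w}\land \e^{(1-\rho)w}$.

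For the derivative bound, I would first use the chain rule to reduce to bounding $|\partial_w \gsin_\rho(\e^w)|$ at $w=\sqn z_0$, which supplies the leading factor $\sqn$. Differentiating through the product gives $\partial_w \gsin_\rho(\e^w) = -\rho\,\gsin_\rho(\e^w) + \e^{(1-\rho)w}\gsin'(\e^w)$. The first term is already handled by the first inequality and contributes at most $\rho(\e^{-\rho w}\land \e^{(1-\rho)w})$. For the second term I would differentiate under the integral sign to get $\gsin'(u) = \int_0^1 t\e^{-tu}\,\rd t$, which yields two elementary pointwise bounds: $\gsin'(u)\le 1/2$ (from $\e^{-tu}\le 1$) and $u\,\gsin'(u) = \int_0^1 (tu)\e^{-tu}\,\rd t \le \max_{v\ge 0} v\e^{-v} = 1/\e$. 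Together these give $\e^{(1-\rho)w}\gsin'(\e^w) \le (1/2)\e^{(1-\rho)w}\land (1/\e)\e^{-\rho w} \le \e^{-\rho w}\land \e^{(1-\rho)w}$. Summing the two contributions gives the overall constant $1+\rho\le 2$, and restoring the $\sqn$ from the chain rule yields the second inequality.

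The argument is essentially bookkeeping rather than a deep estimate. The only subtlety is selecting the right pair of bounds on $\gsin'$: one needs \emph{both} a bound of the form $\gsin'(u)\le C$ (effective as $w\to -\infty$, i.e.\ $u\to 0$) and a bound of the form $\gsin'(u)\le C/u$ (effective as $w\to\infty$, i.e.\ $u\to\infty$) in order to reproduce the $\e^{-\rho w}\land \e^{(1-\rho)w}$ envelope on the right-hand side in both regimes. The integral representation of $\gsin'$ supplies both bounds cleanly, so no additional ingredient is required.
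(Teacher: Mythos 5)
Your proof is correct, and it reaches the stated bounds by a genuinely different route than the paper. The paper works directly from the closed form $\gsin_{\rho}(u)=u^{-\rho}-(1-\e^{-u})/u^{1+\rho}$, invokes the scalar inequality $\e^{-u}\le 1\land(1-u+u^2/2)$ for \eqref{gsin_bound1}, and then differentiates $\gsin_{\rho}$ in $u$ explicitly, grouping the result as $\rho(1-u-\e^{-u})/u^{2+\rho}+(1-\e^{-u}-u\e^{-u})/u^{2+\rho}$ before applying the same scalar inequality again. You instead introduce the integral representation $\gsin(u)=\int_0^1(1-\e^{-tu})\,\rd t$, which makes the pointwise bounds on both $\gsin$ and $\gsin'$ one-line consequences of $1-\e^{-v}\le 1\land v$ and $v\e^{-v}\le 1/\e$; and you organize the derivative computation through the product rule on $\e^{-\rho w}\gsin(\e^w)$ rather than by differentiating $\gsin_{\rho}$ in $u$. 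Your decomposition has the advantage that the constant $2$ appears transparently as $\rho+1$ (first term contributes $\rho$, second term contributes $1$), whereas in the paper's version the two pieces of the derivative partially cancel and the constant has to be recovered by term-by-term estimates. Both arguments are elementary and of comparable length, so this is a matter of taste; yours is perhaps a bit more self-explanatory about where the factor $2$ comes from.
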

\begin{proof}
Eq.~\eqref{gsin_bound1} is straightforward
from
\begin{align}
\gsin_{\rho}(u)
=\frac{1}{u^{\rho}}-\frac{1-\e^{-u}}{u^{1+\rho}}\n
\end{align}
with $\e^{-u}\le 1 \land (1-u+u^2/2)$.

Similarly,
since
\begin{align}
\bibun{\gsin_{\rho}(u)}{u}
&=\rho\frac{1-u-\e^{-u}}{u^{2+\rho}}
+\frac{1-\e^{-u}-u\e^{-u}}{u^{2+\rho}}\com\n
\end{align}
we have
\begin{align}
\left|\bibun{g_{\rho}'(u)}{u}\right|
&\le \frac{-1+u+\e^{-u}}{u^{2+\rho}}
+\frac{1-(1+u)\e^{-u}}{u^{2+\rho}}\n
\end{align}
and we obtain \eqref{gsin_bound2}
by applying $\e^{-u}\le 1 \land (1-u+u^2/2)$
again.
\end{proof}

Now we are ready to prove counterparts to Lemmas \ref{lem_osci_fn} and \ref{lem_osci}.
Let
\begin{align}
B^{(\mathrm{s})}_n(z_0)&=\{z'_0: |z'_0-z_0|\le \de n^{-1/2}\}\com\nn
f^{(\mathrm{s})}_n(z_0)&=\gsin(\e^{-\sqn z_0})\per\n
\end{align}
Then the following lemmas hold.
\begin{lemma}\label{lem_sin_easy}
For $\alpha_{\de}^{(s)}=4\de\e^{\de}=\lo(\de)$,
\begin{align}
\omega_{f_n^{(\mathrm{s})}}(B^{(\mathrm{s})}_n(z_0))\le
\alpha_{\de}^{(s)}(\e^{-\rho\sqn z}\land \e^{(1-\rho)\sqn z})\per\label{osci_sin_eq}
\end{align}
\end{lemma}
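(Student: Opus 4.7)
The plan is to reduce the oscillation to an integral of the derivative of $\gsin_\rho(\e^{\sqn \cdot})$ via the fundamental theorem of calculus, invoke the pointwise derivative bound \eqref{gsin_bound2} from the previous lemma, and finally transfer the resulting exponential factor from a variable point $z_0'$ in the ball back to the center $z_0$. Concretely, I would first bound
\begin{equation*}
\omega_{f_n^{(\mathrm{s})}}(B_n^{(\mathrm{s})}(z_0))
\le
\int_{z_0-\de/\sqn}^{z_0+\de/\sqn}
\left|\bibun{\gsin_\rho(\e^{\sqn z_0'})}{z_0'}\right| \rd z_0',
\end{equation*}
which is valid because $\gsin_\rho(\e^{\sqn \cdot})$ is smooth on $\bbR$. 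Then I would replace the integrand by $2\sqn(\e^{-\rho\sqn z_0'} \land \e^{(1-\rho)\sqn z_0'})$ using \eqref{gsin_bound2}.

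The only point requiring care, and what I expect to be the main (minor) obstacle, is transferring each exponential factor from the variable $z_0'$ to the center $z_0$, because the $\land$ of the two exponentials is not monotone in its argument. The resolution is that for $z_0' \in B_n^{(\mathrm{s})}(z_0)$ we have $|\sqn z_0' - \sqn z_0| \le \de$, and since $\rho, 1-\rho \in [0,1]$, both $\e^{-\rho\sqn z_0'} \le \e^\de \e^{-\rho\sqn z_0}$ and $\e^{(1-\rho)\sqn z_0'} \le \e^\de \e^{(1-\rho)\sqn z_0}$ hold; taking the minimum of these two inequalities yields
\begin{equation*}
\e^{-\rho\sqn z_0'} \land \e^{(1-\rho)\sqn z_0'}
\le
\e^\de\,(\e^{-\rho\sqn z_0} \land \e^{(1-\rho)\sqn z_0}).
\end{equation*}

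Substituting this back into the integral bound, with integration over an interval of length $2\de/\sqn$ and pointwise integrand at most $2\sqn\e^\de (\e^{-\rho\sqn z_0} \land \e^{(1-\rho)\sqn z_0})$, gives
\begin{equation*}
\omega_{f_n^{(\mathrm{s})}}(B_n^{(\mathrm{s})}(z_0))
\le
\frac{2\de}{\sqn}\cdot 2\sqn\cdot \e^\de (\e^{-\rho\sqn z_0} \land \e^{(1-\rho)\sqn z_0})
= 4\de\e^\de (\e^{-\rho\sqn z_0} \land \e^{(1-\rho)\sqn z_0}),
\end{equation*}
which matches the claimed bound \eqref{osci_sin_eq} with $\alpha_\de^{(s)} = 4\de\e^\de$. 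No further machinery is needed; the lemma is a direct quantitative consequence of \eqref{gsin_bound2} together with the elementary transfer of the minimum across the $\de/\sqn$-ball.
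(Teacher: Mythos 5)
Your proof is correct and takes essentially the same approach as the paper: bound the oscillation via the pointwise derivative estimate \eqref{gsin_bound2}, noting the ball has radius $\de/\sqn$, and transfer the exponential factor from a variable point in the ball to its center at the cost of an $\e^{\de}$ multiplier. The paper phrases this through the mean value theorem and splits on the sign of $z_0$ (using only the dominant branch of the minimum in each case), whereas you integrate $|f'|$ and carry the $\land$ of the two exponentials throughout, which is a purely stylistic difference and slightly cleaner; both yield the same constant $\alpha_{\de}^{(\mathrm{s})}=4\de\e^{\de}$.
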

\begin{proof}
First we consider the case $z_0\ge 0$.
For $z'_0$ such that $|z'_0-z_0|\le \de n^{-1/2}$,
we have
from \eqref{gsin_bound2} and $\rho\le 1$
that
\begin{align}
f_n^{(\mathrm{s})}(z'_0)
&\le 
f_n^{(\mathrm{s})}(z_0)+
\de n^{-1/2}\cdot 2\sqn \e^{-\rho \sqn (z_0-\de n^{-1/2})}\nn
&\le 
f_n^{(\mathrm{s})}(z_0)+
2\de \e^{\de}\e^{-\rho \sqn z_0}\n
\end{align}
and similarly
\begin{align}
f_n^{(\mathrm{s})}(z'_0)
&\ge 
f_n^{(\mathrm{s})}(z_0)-
2\de \e^{\de}\e^{-\rho \sqn z_0}\com\n
\end{align}
from which \eqref{osci_sin_eq} follows.
The proof for $z_0<0$ is the same as above.
\end{proof}

\begin{lemma}\label{lem_osci_sin}
\begin{align}
\E_{\rho}
\sqx{
\omega_{f_n^{(\mathrm{s})}}(B_n^{(\mathrm{s})}(\tZ_0))
}
&=
\lo(\delta n^{-1/2})\per\n
\end{align}
\end{lemma}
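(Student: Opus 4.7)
The plan is to reduce the oscillation bound to two one-dimensional expectations of truncated exponentials and then apply the Berry-Esseen argument already used in \eqref{lem4_berry} of Lemma \ref{lem_osci}. First, I would invoke Lemma \ref{lem_sin_easy} directly on the random point $\tZ_0$ to obtain
\begin{align}
\E_{\rho}\sqx{\omega_{f_n^{\rms}}(B_n^{\rms}(\tZ_0))}
&\le \alpha_{\de}^{\rms}\,
\E_{\rho}\sqx{\e^{-\rho\sqn \tZ_0}\land \e^{(1-\rho)\sqn \tZ_0}},\n
\end{align}
where $\alpha_{\de}^{\rms}=\lo(\de)$. It therefore suffices to show that the remaining expectation is $\lo(n^{-1/2})$.

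Second, I would split according to the sign of $\tZ_0$, using the bound $\e^{-\rho\sqn z_0}$ on $\{z_0\ge 0\}$ and the bound $\e^{(1-\rho)\sqn z_0}$ on $\{z_0<0\}$. For the nonnegative part, the computation is literally the one in \eqref{lem4_berry}: the Berry-Esseen inequality gives $P_\rho[0\le\tZ_0<z]\le z/\sqrt{2\pi\sigma_{00}}+c/\sqn$, and integration by parts together with $\rho>0$ (valid since $R>\Rcrit$ means $\rho\in(0,1)$) yields
\begin{align}
\E_{\rho}\sqx{\id\{\tZ_0\ge 0\}\e^{-\rho\sqn\tZ_0}}
&=\lo(1/\sqn).\n
\end{align}
The negative part is handled by the symmetric argument: applying Berry-Esseen to $-\tZ_0$ and using $1-\rho>0$ gives
\begin{align}
\E_{\rho}\sqx{\id\{\tZ_0<0\}\e^{(1-\rho)\sqn\tZ_0}}
&=\lo(1/\sqn)\n
\end{align}
by the identical integration-by-parts calculation. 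Combining the two bounds with the $\lo(\de)$ prefactor $\alpha_{\de}^{\rms}$ gives the claim.

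I do not expect any serious obstacle: all the pieces are already in place. The only thing to keep in mind is that the Berry-Esseen step requires $\rho$ and $1-\rho$ to be strictly positive, which is precisely the hypothesis $\Rcrit<R<I(X;Y)$ used throughout the singular-channel analysis, so both tail estimates apply simultaneously. The rest is a clean assembly: Lemma \ref{lem_sin_easy} supplies the pointwise oscillation bound, the Berry-Esseen estimate controls the expectation of the truncated exponential, and the product of $\alpha_{\de}^{\rms}=\lo(\de)$ with $\lo(1/\sqn)$ yields the desired $\lo(\de n^{-1/2})$ rate.
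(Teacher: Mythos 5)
Your proposal is correct and follows essentially the same route as the paper: apply Lemma \ref{lem_sin_easy} pointwise to bound the oscillation by $\alpha_{\delta}^{\rms}$ times a decaying exponential, split by the sign of $\tZ_0$ (which is exactly where the minimum $\e^{-\rho\sqn z}\land\e^{(1-\rho)\sqn z}$ switches branches), and reuse the Berry-Esseen integration-by-parts estimate from \eqref{lem4_berry} on each half. Your explicit note that $\rho\in(0,1)$ is what makes both exponents strictly positive is a useful clarification the paper leaves implicit.
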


\begin{proof}[Proof of Lemma \ref{lem_osci_sin}]
From Lemma \ref{lem_sin_easy}
we have
\begin{align}
\E_{\rho}
\sqx{
\omega_{f_n}(B_n^{(\mathrm{s})}(\tZ_0))
}
&\le
\alpha_{\de}^{(\mathrm{s})}
\Ex{
\idx{Z_0< 0}
\e^{(1-\rho)\sqrt{n}Z_0}
}\nn
&\quad+
\alpha_{\de}^{(\mathrm{s})}
\Ex{\idx{\tZ_0\ge 0}
\e^{-\rho\sqn \tZ_0}
}.\n
\end{align}
By the same argument as \eqref{lem4_berry}
we have
\begin{align}
&\Ex{\idx{\tZ_0\ge 0}
\e^{-\rho\sqn \tZ_0}
}=\lo(n^{-1/2})\nn
&\Ex{\idx{\tZ_0< 0}
\e^{-(1-\rho)\sqn \tZ_0}
}=\lo(n^{-1/2})\com\n
\end{align}
which conclude the proof.
\end{proof}

\begin{proof}[Proof of Theorem \ref{thm_above_sin}]
Recall that $Z(\eta)$ is not singular
and $\Delta=0$ in this case.

First we consider the case that
$(W,R)$ is $(h',a')$-lattice.
Let $\calZ_{0,n}=\{(an+ih')/\sqn: i\in\bbZ\}$.
Then
\begin{align}
\lefteqn{
\E_{\rho}\left[
\gsin_{\rho}\left(
 \e^{\sqrt{n}\tZ_0}
\right)
\right]
}\nn
&=
\sum_{z_0 \in \calZ_{0,n}}
\Pr[\tZ_0=z_0]
\gsin_{\rho}\left(
 \e^{\sqrt{n}z_0}
\right)\nn
&=\!
\sum_{z_0 \in \calZ_{0,n}}
\!\!
\frac{h'\phi_{\sigma_{00}}(z_0)}{\sqrt{n}}
\gsin_{\rho}\!\left(
 \e^{\sqrt{n}z_0}
\right)
+
\so\!\pax{\!\frac{1}{\sqn}\!
\sum_{z_0 \in \calZ_{0,n}}
\!\!\gsin_{\rho}\!\left(
 \e^{\sqrt{n}z_0}
\right)\!\!
}\!,\label{tobounds1}
\end{align}
where
$\phi_{\si_{00}}$ is the density function of
the normal distribution with zero mean and variance $\si_{00}$
and the last equality follows from
the local limit theorem.

On the second term of \eqref{tobounds1}
we can show that
\begin{align}
\so\!\pax{\frac{1}{\sqn}
\sum_{z_0 \in \calZ_{0,n}}
\gsin_{\rho}\!\left(
 \e^{\sqrt{n}z_0}
\right)
}
=
\so\left(\frac{1}{\sqn}\right)\n
\end{align}
in the same way as the evaluation of the first term
of \eqref{tobounds1} given below.

We evaluate the first term of \eqref{tobounds1}
for $|z_0|>n^{-1/3}$ and $|z_0|\le n^{-1/3}$ separately.
For the former case we have
\begin{align}
\lefteqn{
\sum_{z_0 \in \calZ_{0,n}: |z_0|>n^{-1/3}}
\!\!\!
\frac{h'\phi_{\sigma_{00}}(z_0)}{\sqn}
\gsin_{\rho}\!\left(
 \e^{\sqrt{n}z_0}
\right)
}\nn
&\le
\sum_{z_0 \in \calZ_{0,n}: |z_0|>n^{-1/3}}
\frac{h'\phi_{\sigma_{00}}(0)}{\sqn}\nn
&
\qquad\times(1+h\eta)
(\e^{-\rho \sqn z_0}\land \e^{(1-\rho)\sqn z_1})
\quad\since{by \eqref{gsin_bound1}}
\nn
&=\e^{-\Omega(n^{1/6})}\com
\label{deltas16}
\end{align}

On the other hand
for the case $|z_0|\le n^{1/3}$,
we have
\begin{align}
&\sum_{z_0 \in \calZ_{0,n}: |z_0|\le n^{1/3}}
\frac{h'\phi_{\sigma_{00}}(z_0)}{\sqn}
\gsin_{\rho}\!\left(
 \e^{\sqrt{n}z_0}
\right)
\nn
&=
(1+\so(1))\sum_{z_0 \in \calZ_{0,n}: |z_0|\le n^{1/3}}
\frac{h'\phi_{\sigma_{00}}(0)}{\sqn}
\gsin_{\rho}\left(
 \e^{\sqrt{n}z_0}
\right)\nn
&=
(1+\so(1))
\frac{\psin_{\rho,h'}(na')}{\sqrt{2\pi n}\sigma_{00}}\per
\label{use_lems3}
\end{align}

Next we consider the case that
$(W,R)$ is nonlattice.
In this case
we use $\calZ_{0,n}'=\{i\de/\sqn: i\in\bbZ\}$ instead
of $\calZ_{0,n}$.
By this replacement and local limit theorem for lattice distribution,
we obtain
\begin{align}
\lefteqn{
\E_{\rho}\left[
\gsin_{\rho}\left(
 \e^{\sqrt{n}\tZ_0}
\right)
\right]
}\nn
&=
\sum_{z_0 \in \calZ_{0,n}'}
\Pr[\tZ_0-z_0 \in [0,\de/\sqn)]
\gsin_{\rho}\left(
 \e^{\sqrt{n}z_0}
\right)\nn
&\qquad+
\lo\left(\E_{\rho}[\omega_{f_n^{(\mathrm{s})}}(B_n^{(\mathrm{s})}(\tZ_0))]\right)
\nn
&=\!
\sum_{z_0 \in \calZ_{0,n}'}
\!\!
\frac{h'\phi_{\sigma_{00}}(z_0)}{\sqrt{n}}
\gsin_{\rho}\!\left(
 \e^{\sqrt{n}z_0}
\right)+
\so\!\pax{\!\frac{1}{\sqn}\!
\sum_{z_0 \in \calZ_{0,n}'}
\!\!\gsin_{\rho}\!\left(
 \e^{\sqrt{n}z_0}
\right)\!\!
}
\nn
&\qquad+
\lo\left(\E_{\rho}[\omega_{f_n^{(\mathrm{s})}}(B_n^{(\mathrm{s})}(\tZ_0))]\right)\nn
&=
(1+\so(1))
\frac{\psin_{\rho,\de}(0)}{\sqrt{2\pi n}\sigma_{00}}
+\lo(\de n^{-1/2})\per\n
\end{align}
instead of \eqref{tobounds1}--\eqref{use_lems3}.
We complete the proof by letting $\de$ be sufficiently small.
\end{proof}



\end{document}